\def\one{one}
\def\two{two}
\def\cols{one} %
\renewcommand{\sf}[1]{\mathsf{#1}}
\newcommand{\poly}{\mathrm{poly}}
\newcommand{\negl}{\mathsf{negl}}
\newcommand{\true}{\mathsf{true}}
\newcommand{\false}{\mathsf{false}}
\newcommand{\N}{\mathbb{N}}
\newcommand{\bits}{\{0,1\}}
\newcommand{\botbits}{\{0, 1, \bot \}}
\newcommand{\getsr}{{\:{\leftarrow{\hspace*{-3pt}\raisebox{.75pt}{$\scriptscriptstyle\$$}}}\:}}
\newcommand{\calO}{\mathcal{O}}
\newcommand{\calU}{\mathcal{U}}
\newcommand{\A}{\mathcal{A}}
\newcommand{\B}{\mathcal{B}}
\newcommand{\fail}{\texttt{FAIL}\xspace}
\newcommand{\eH}{H_{e}} %
\newcommand{\Cmax}{C_{\max}}
\newcommand{\Vmax}{V}
\newcommand{\green}{\mathcal{G}}
\newcommand{\red}{\mathcal{R}}
\newcommand{\thresh}{\rho}
\newcommand{\CGscheme}{\wat[\sf{PRC}]}
\newcommand{\PRC}{\sf{PRC}}
\newcommand{\hT}{\hat{T}}
\newcommand{\htau}{\hat{\tau}}
\newcommand{\RCG}{R^\sf{CG}_1}
\newcommand{\RCGZ}{R^\sf{CGZ}_1}
\newcommand{\RF}{R^\sf{FGJ+}_1}
\newcommand{\RK}{R^\sf{KTHL}_1}
\newcommand{\hp}{\hat{p}}
\newcommand{\keyspace}{\mathcal{K}}
\newcommand{\FP}{\mathsf{FP}}
\newcommand{\FPTrace}{\mathsf{FP.Trace}}
\newcommand{\FPGen}{\mathsf{FP.Gen}}
\newcommand{\ColSet}{C}
\newcommand{\Feasible}{F}
\newcommand{\FPKey}{\mathsf{tk}}
\newcommand{\FPAdv}{\mathcal{A}}
\newcommand{\ECC}{\mathsf{ECC}}
\newcommand{\Msg}{\mathcal{W}}
\newcommand{\msg}{m}
\newcommand{\MsgSetup}{\mathsf{KeyGen}}
\newcommand{\Encode}{\mathsf{Wat}} %
\newcommand{\Extract}{\mathsf{Extract}}
\newcommand{\channel}{\mathcal{E}}
\newcommand{\prompt}{Q}
\newcommand{\sk}{\mathsf{sk}}
\newcommand{\pk}{\mathsf{pk}}
\newcommand{\GenModel}{\mathsf{Model}}
\newcommand{\tokSet}{\mathcal{T}}
\newcommand{\wat}{\mathcal{W}}
\newcommand{\Setup}{\mathsf{KeyGen}}
\newcommand{\Wat}{\mathsf{Wat}}
\newcommand{\Detect}{\mathsf{Detect}}
\newcommand{\Users}{\mathcal{U}}
\newcommand{\Trace}{\mathsf{Trace}}
\newcommand{\numBlocks}{\mathsf{NumBlocks}}
\newcommand{\Blocks}{\mathsf{Blocks}}
\newcommand{\block}{\beta}
\newcommand{\isBlock}{\mathsf{block}}
\newcommand{\calQ}{\mathcal{Q}}
\newcommand{\calR}{\mathcal{G}}
\newcommand{\Ex}{\mathbb{E}}
\newcommand{\1}{\mathds{1}}
\newcommand{\apref}[1]{Appendix~\ref{#1}}
\newcommand{\figref}[1]{Figure~\ref{#1}}
\newcommand{\lemref}[1]{Lemma~\ref{#1}}
\newcommand{\clmref}[1]{Claim~\ref{#1}}
\newcommand{\secref}[1]{Section~\ref{#1}}
\newcommand{\thmref}[1]{Theorem~\ref{#1}} %
\newcommand{\defref}[1]{Definition~\ref{#1}}
\newcommand{\sub}[1]{\mathrm{#1}} %
\newcommand{\codeBox}[1]{\noindent\fbox{#1}}
\newtheorem{thm}{Theorem}[section]
\newtheorem{obs}[thm]{Observation}
\newtheorem{assm}[thm]{Assumption}
\newtheorem{defn}[thm]{Definition}
\newtheorem{lem}[thm]{Lemma}
\newtheorem{cor}[thm]{Corollary}
\newtheorem{clm}[thm]{Claim}
\newtheorem{rem}{Remark}
\newenvironment{assumption}{\begin{assm}}{\end{assm}}
\newenvironment{lemma}{\begin{lem}}{\end{lem}}
\newenvironment{definition}{\begin{defn}}{\end{defn}}
\newenvironment{theorem}{\begin{thm}}{\end{thm}}
\renewcommand{\paragraph}[1]{\smallbreak{\textit{#1.}}}
\title{Watermarking Language Models for Many Adaptive Users}
\author{Anonymous Author(s)}
\author{Aloni Cohen \orcidlink{0000-0002-3492-2447}}
\author{Alexander Hoover \orcidlink{0009-0003-9818-1419}}
\author{Gabe Schoenbach \orcidlink{0000-0002-4300-0356}}
\affil{University of Chicago}
\date{} %
\begin{document}

\maketitle
\begin{abstract}
    We study watermarking schemes for language models with provable guarantees. As we show, prior works offer no robustness guarantees against \emph{adaptive prompting}: when a user queries a language model more than once, as even benign users do. And with just a single exception \cite{EPRINT:ChrGun24}, prior works are restricted to \emph{zero-bit} watermarking:  machine-generated text can be detected as such, but no additional information can be extracted from the watermark. Unfortunately, merely detecting AI-generated text may not prevent future abuses.

We introduce \emph{multi-user watermarks}, which allow tracing model-generated text to individual users or to groups of colluding users, even in the face of {adaptive prompting}. We construct multi-user watermarking schemes from undetectable, adaptively robust, zero-bit watermarking schemes (and prove that the undetectable zero-bit scheme of \cite{EPRINT:ChrGunZam23} is adaptively robust).
Importantly, our scheme provides both zero-bit and multi-user assurances at the same time. It detects shorter snippets just as well as the original scheme, and traces longer excerpts to individuals. 

The main technical component is a construction of message-embedding watermarks from zero-bit watermarks.
Ours is the \emph{first generic reduction}
between watermarking schemes for language models. 
A challenge for such reductions is the lack of a unified abstraction for \emph{robustness} --- that marked text is detectable even after edits. 
We introduce a new unifying abstraction called \emph{AEB-robustness}. AEB-robustness provides that the watermark is detectable whenever the edited text ``approximates enough blocks'' of model-generated output.

\end{abstract}

\section{Introduction}\label{sec-intro}

Generative AI models can now produce text and images not easily distinguishable from human-authored content. 
There are many concerns about the undisclosed use of generative AI, whether for nefarious or banal purposes. Watermarking is one approach for detecting and tracing the provenance of generative AI outputs, and it raises challenging technical and policy questions \cite{Brookings, Lawfare}.

Watermarking for generative AI is on the cusp of deployment.
In July 2023, the White House secured commitments from seven industry leaders to manage and mitigate risks posed by AI: Amazon, Anthropic, Google,
Inflection, Meta, Microsoft, and OpenAI \cite{WhiteHouse,WhiteHouse2}.
Among the commitments is the development of watermarking and other provenance
techniques for the next generation of audio and image models.
A few months later, President Biden issued an executive order directing federal agencies
to ``establish standards and best practices for detecting AI-generated content,''
including guidance on the use of watermarking \cite{WhiteHouse3}.
Google already watermarks all content produced by the Lyria music generation
model \cite{SynthID}, and OpenAI has a working prototype for watermarking text produced by
ChatGPT \cite{Aar22}.

\medskip
Though many watermarking schemes are brittle, 
a burgeoning line of research shows that strong statistical and cryptographic guarantees (and negative results \cite{ARXIV:ZEFVAB23}) can be proved or heuristically justified for language models specifically
\cite{Aar22,ICML:KGWKMG23,EPRINT:ChrGunZam23,EPRINT:FGJMMW23,ARXIV:KTHL23, EPRINT:ChrGun24,ARXIV:GM24}.
These schemes envision a setting where a user queries a language model through some interface. For example, querying GPT-4 using ChatGPT or the API. The model deployer, e.g.,\ OpenAI, uses a secret watermarking key in conjunction with the underlying language model to produce watermarked text. The mark can be extracted from the text using the secret key.

To be useful, watermarking schemes must have four properties. Existing schemes achieve these guarantees, though the formalisms and assumptions vary greatly.
First is \emph{soundness}: marks must not be falsely detected in text not generated by the watermarked model (low Type I error).
Second is \emph{completeness}: verbatim outputs of the watermarked model are detectable (low Type II error).
Third is \emph{robustness} to edits: marks are detectable even after marked text is edited, say by deleting or rearranging phrases.
Finally, watermarking should not noticeably degrade the quality of the model's outputs. The strongest version of this is \emph{undetectability}, requiring that the outputs of the marked and unmarked models are  indistinguishable.

All existing works share two major drawbacks. First, none of them considers robustness when users \emph{adaptively prompt} the model.
Second, merely detecting watermarked text is not always enough. As we explain below, sometimes we need to \emph{trace text to a specific user} of the language model.

\paragraph{Adaptive prompting}
We introduce and build watermarking schemes that are \textbf{robust to adaptive prompting}. 
Existing works define security for a single text $T$ produced by the model in response to any single prompt $\prompt$. But no user issues just one prompt! 
Even benign users interact with the models, adaptively
refining the prompts and generations as they go.
Current definitions don't even imply that the resulting text is marked, let alone robust to edits. 
We give the first definitions and proofs of robustness when users interactively query a language model and derive text from the whole interaction.

\paragraph{Multi-user watermarks}
We introduce and build a \textbf{multi-user watermarking scheme} for language models. In such a scheme, model-generated outputs can be traced to specific users, even when users \emph{collude}.

Detecting watermarks isn't always enough to mitigate harms.
Consider a ChatGPT-powered bot carrying out romance scams to trick victims into transferring large sums of money. 
The gullible victims won't check for watermarks themselves.
And any single messaging platform detecting the watermarks and banning 
the bot won't help much --- the scammer can always switch to a different platform. 
After the scam is revealed, the watermarked text --- the bot's messages --- may be available to law enforcement, the messaging platform, or the language model provider.
But there doesn't seem to be any way to arrest the scammer or  protect future victims.

We'd like to trace the watermarked text to a specific user.
The bot is querying ChatGPT using the scammer's user account or API token. 
If the marked text revealed the user, one could directly cut off the scammer's access to the model or seek legal recourse.

We also want security when multiple users collude. 
Consider three users each using ChatGPT to collaboratively write a shoddy legal brief or S\&P submission. 
They might independently borrow from their respective model outputs, while making edits and adding text of their own.
Robust multi-user watermarking guarantees that at least one of the three is identified, but no innocent users are.

As a first attempt at building multi-user watermarking for $n$ users, one might use a different secret key for each user. But merely detecting the watermark would require checking each of the $n$ secret keys. This would be too slow for widespread watermark detection: ChatGPT reportedly has more than 180 million monthly active users.
In contrast, our multi-user watermarking scheme takes $O(\log n)$ time to detect watermarked text (omitting other parameters). Without collusions, tracing also takes $O(\log n)$ time, which is optimal as $\log n$ bits are needed to identify a user. With collusions, tracing takes $O(n\log n)$ time (performing a $O(\log n)$-time check for each user).

Adding the ability to trace individual users should not compromise the detection of watermarked text in contexts where tracing may not be needed (e.g., spam filtering).
Our multi-user construction leaves untouched the guarantees of the underlying watermarking scheme. \emph{You get the best of both robustness guarantees at the same time!} A short marked text is detectable as before, and a longer marked text is traceable to individuals.

\paragraph{From zero-bit to message-embedding watermarks, generically}
We give a {black-box} construction of multi-user watermarking from existing watermarking schemes.
Our high level approach to building multi-user watermarking is relatively simple. 
Suppose you have an \emph{${L}$-bit watermarking scheme}: one that embeds an $L$-bit message into generated text. 
Then, ignoring collusions, the obvious idea is to embed an ID unique to each user.
(Our scheme can be made robust to collusions using a cryptographic primitive called a {fingerprinting code}, though this doesn't work generically.)
So it suffices to build an $L$-bit scheme out of a so-called \emph{zero-bit watermarking scheme}, where text is simply viewed as ``marked'' or ``unmarked.''\footnote{
With one exception \cite{EPRINT:ChrGun24}, all the existing schemes we study are zero-bit schemes.}
We use a natural idea. We sample $2L$ secret keys $k_{i,b}$, one for each index $i$ and bit $b$. To embed a message $m$, we use the zero-bit scheme with the keys $k_{i,m[i]}$ for each index $i$. The result is text watermarked under $L$ different keys which together reveal the message.

The challenge is saying anything interesting about our construction while treating the underlying watermarking scheme as a {black box}. 
It's not even clear how to state the appropriate robustness guarantee, let alone prove it.
The issue is that every  watermarking construction has a bespoke formulation of completeness\footnote{%
    All seemingly-incomparable lower bounds on entropy: watermark potential \cite{ARXIV:KTHL23}, min-entropy per block \cite{EPRINT:FGJMMW23}, spike entropy \cite{ICML:KGWKMG23}, empirical entropy \cite{EPRINT:ChrGunZam23,EPRINT:ChrGun24,ARXIV:GM24}, and on-average high entropy / homophily \cite{ARXIV:ZALW23}.} 
(which verbatim model outputs are marked) and a bespoke formulation of robustness\footnote{%
    All requirements on substrings: equality \cite{EPRINT:ChrGunZam23,EPRINT:FGJMMW23},
     edit distance \cite{ARXIV:KTHL23,ARXIV:ZALW23},
    produced by a binary-symmetric channel \cite{EPRINT:ChrGun24}, produced by an 
    edit-bounded channel \cite{ARXIV:GM24}.
} 
(which edits preserve the mark), see Appendix~\ref{sec:block-appendix}. 
There are two ways forward: choose a specific scheme and tailor the results, or invent a unifying language for watermarking robustness and completeness.

We present a new framework for describing robustness and completeness guarantees of watermarking schemes, called \textbf{AEB-robustness}. AEB-robustness provides that text is watermarked if it \emph{{\bf A}pproximates {\bf E}nough {\bf B}locks} of model-generated text. Specifying the robustness condition amounts to defining ``approximates,'' ``enough,'' and ``blocks.'' All else equal, a scheme is more robust if looser approximations are allowed; fewer blocks are required; or blocks require less entropy.

Our black-box reductions only affect how many blocks are enough, not what constitutes a block nor an approximation thereof. Our results hold for any (efficiently-checkable) definition of a block and any definition of string approximation.

With the language of AEB-robustness, our theorems are easy to state informally. Let $\lambda$ be a cryptographic security parameter. Suppose $\wat$ is a (zero-bit) watermarking scheme that is undetectable, sound, and robust whenever one block is approximated ($R_1$-robust). Our $L$-bit scheme is undetectable, sound, and robust whenever $k = O(L\lambda)$ blocks are approximated ($R_k$-robust).
Our multi-user scheme is undetectable, sound, and robust for $n>1$ users and $c>1$ collusions whenever $k = O(c^2\log^2(c)\log(n)\lambda)$ blocks are approximated.
We adopt the cryptographic approach of analyzing security against all efficient adversaries. A byproduct is that our theorems only apply when the underlying watermarking scheme is cryptographically-secure.

\subsection{Our contributions}
We continue the study of watermarking schemes for language models with provable guarantees. Except where specified, our constructions require cryptographically-strong undetectability and soundness, and that the underlying scheme is  AEB-robust against adaptive prompting.
\begin{enumerate}
    \item We define robustness against \emph{adaptive prompting}. We prove the first robustness guarantees against users making more than one query: our constructions are robust against adaptive prompting, as is the zero-bit scheme of Christ, Gunn, and Zamir \cite{EPRINT:ChrGunZam23}.\footnote{An earlier version of this paper incorrectly claimed a generic reduction of adaptive robustness to non-adaptive robustness for certain zero-bit schemes (Section~\ref{sec:non-adaptive-to-adaptive}.)}
    \item We construct \emph{$L$-bit watermarks} with provable robustness from zero-bit watermarking schemes. 
    This is the first black-box reduction among watermarking schemes, using a new framework for describing the robustness of watermarking schemes, called \emph{AEB-robustness}. 
    Our construction is the first $L$-bit undetectable watermarking scheme that has both short keys and any provable robustness guarantee (let alone adaptive robustness).\footnote{This is true even without undetectability. We include undetectability to rule out trivial schemes whose output is completely independent of the underlying language model (e.g., an error-correcting code). The only existing $L$-bit scheme with any provable robustness guarantee is \cite{EPRINT:ChrGun24}. The watermarking keys for that scheme are as long as the maximum generation length of the model. Zamir suggests that a version of his undetectable $L$-bit steganography scheme scheme \cite{ARXIV:Zam24} can be made ``substring complete'' (the robustness guarantee of \cite{EPRINT:ChrGunZam23}), but leaves doing so to future work.}
    \item We define and construct \emph{multi-user} watermarking schemes, which allow tracing model-generated text to users, even with collusions. Our construction leaves unaffected the stronger robustness of the underlying zero-bit watermarking scheme, essentially allowing both schemes to be used at once. Surprisingly, we also show that instantiating our multi-user construction with a \emph{lossy} $L$-bit scheme yields better parameters.
\end{enumerate}

\subsection{Related work}

A recent flurry of work on the theory of watermarking language models was kicked off by \cite{Aar22} and \cite{ICML:KGWKMG23}. We directly build on this line of work, especially those with strong provable guarantees for undetectability \cite{EPRINT:ChrGunZam23}, robustness \cite{ARXIV:KTHL23,EPRINT:ChrGun24} or public detection \cite{EPRINT:FGJMMW23,EPRINT:ChrGun24}. See Appendix~\ref{sec:block-appendix} \ifx\cols\two in the full version of this paper\ \fi for an in-depth discussion of these schemes.
We adapt and extend the cryptographic-style definitions of
\cite{EPRINT:ChrGunZam23,EPRINT:FGJMMW23,ARXIV:Zam24,EPRINT:ChrGun24,ARXIV:GM24}, which enables security proofs against arbitrary efficient adversaries.

Our constructions require a watermarking scheme that is zero-bit, undetectable, sound, and robust against adaptive prompting. Christ, Gunn, and Zamir \cite{EPRINT:ChrGunZam23} were the first to construct a zero-bit, undetectable, sound watermarking scheme with any provable robustness. In Appendix \ref{ap:adaptive}, we prove that \cite{EPRINT:ChrGunZam23} is robust against adaptive prompting. More recently, \cite{EPRINT:ChrGun24} and \cite{ARXIV:GM24} also give zero-bit, undetectable, sound watermarking schemes with provable robustness. We conjecture that both schemes are robust against adaptive prompting. If \cite{ARXIV:GM24} is adaptively robust, then instantiating our construction with their scheme would yield the first $L$-bit scheme provably robust to deletions without additional assumptions on the model's output. In contrast, the $L$-bit scheme of \cite{EPRINT:ChrGun24} is (non-adaptively) robust to deletions under an assumption that roughly requires the entropy in the model's output to be uniformly distributed across the generation.

Other recent work  proves strong impossibility results 
against motivated and resourced adversaries \cite{ARXIV:ZEFVAB23, ARXIV:PHZS24}. But watermarking can still be useful ---
a little additional overhead can make a bad actor's job significantly harder.
Concurrently, many more applied works have advanced the practice of watermarking language models \cite{LLMWAT1,LLMWAT2,LLMWAT3,LLMWAT4,LLMWAT5,LLMWAT6} and other generative AI models \cite{StegoTraining,Hidden,TreeRing}.

Closely related work develops steganography for language models \cite{CCS:KJGR21,ARXIV:Zam24}. 
Steganography \cite{Hop04,EPRINT:Cachin00} and watermarking for language models are closely related, both embedding a message into a model's outputs. While steganography requires that the existence of the message be hidden, watermarking requires that the message persist even when generated text is modified. 
We highlight the recent work of Zamir \cite{ARXIV:Zam24}, who adapts the approach \cite{EPRINT:ChrGunZam23} to build an $L$-bit steganography scheme, but leaves robustness to future work.

We make extensive use of fingerprinting codes \cite{IEEE:BS98}.
We make black-box use of existing codes, particularly those that are robust to adversarial erasures. Asymptotically-optimal codes are given by \cite{ACM:Tar08,ACM:BKM10}, while \cite{SPRINGER:NFHKWOI07} focus on concrete efficiency.

\subsection{Outline}
The rest of the paper is structured as follows. \secref{sec-prelim} defines \emph{prefix-specifiable language models}, the type of models we watermark, and robust fingerprinting codes, which are used in our constructions. 
\secref{sec-watermark} gives definitions for watermarking language models. We define zero-bit,
$L$-bit, and multi-user watermarking schemes, along with their important properties: 
undetectability, soundness, completeness, and robustness.
Section~\ref{sec:block-by-block} defines \emph{block-by-block} robust watermarks, our framework for constructing and proving black-box watermarking constructions. 
\ifx\cols\one
Appendix~\ref{sec:block-appendix} shows that some existing watermarking schemes are block-by-block.
\fi
\ifx\cols\two
Appendix~\ref{sec:block-appendix} shows that the scheme from \cite{EPRINT:ChrGunZam23} is block-by-block and AEB-robust.\footnote{See the full version of this paper for proofs that other schemes are block-by-block and AEB-robust.}
\fi
\secref{sec-embedding} constructs an $L$-bit watermarking scheme from zero-bit watermarking schemes. We show that the $L$-bit 
scheme inherits undetectability, soundness, and robustness from the underlying 
zero-bit scheme. 
\secref{sec-multiuser} constructs a multi-user watermarking scheme from our $L$-bit scheme and robust fingerprinting codes.
As before, the multi-user scheme preserves the undetectability, soundness, and
robustness guarantees of the underlying watermarking scheme. We additionally
preserve the efficiency and utility of the original zero-bit scheme throughout our
black-box usage.
The above constructions all require the underlying watermarking scheme to be
cryptographically undetectable. In
\ifx\cols\one\secref{sec-detectable}, \fi
\ifx\cols\two\apref{sec-detectable}, \fi
we show that
it is possible to build robust multi-user watermarking schemes out of watermarking
schemes that are not undetectable, albeit with less impressive robustness parameters.

\section{Preliminaries}\label{sec-prelim}

\subsection{Notation}
For $n\in \mathbb{N}$, we denote by $[n]$ the set $\{1,2,\dots,n\}$.
A polynomial, sometimes denoted $\poly(\cdot)$, is some function for which there
is a constant $c$ with $\poly(n) = O(n^c)$.
A function is negligible if it is asymptotically smaller than any inverse
polynomial, and we use $\negl(\cdot)$ to denote an arbitrary negligible function.
Throughout the paper, we use $\lambda$ to denote the security parameter of the
scheme, which specifies the security level of the protocol.
We say an algorithm is \emph{efficient} or $\poly$-bounded if its runtime is
bounded above by some polynomial.

In our pseudocode, we use ``$x\gets y$'' to assign $x$ the value of $y$.
Likewise we use ``$x\getsr y$'' to denote sampling (uniformly)
from $y$ and assigning the value to $x$ (e.g. $x\getsr [n]$ denotes selecting a random
integer between $1$ and $n$ inclusive). The item $y$ may also be a randomized
function, which could just be viewed as running the function with fresh randomness.
We also sometimes use $\true$ and $\false$ in place of $1$ and $0$, respectively, to make
the semantics of an algorithm more clear.

For a finite set $\Sigma$ called an \emph{alphabet}, the set $\Sigma^*$ denotes the set of
all finite-length sequences of elements of $\Sigma$, called \emph{strings}. Language models
are typically defined with respect to an alphabet $\tokSet$, whose elements are
called \emph{tokens}. The length of a string $T$ is denoted $|T|$.
For a string $T$, we define $T[i]$ to be the token
at index $i$ in the string, for $1\le i \le |T|$.
The concatenation of string $S$ to string $T$ is denoted $T\|S$.\footnote{
    As explained in
    Sec.~\ref{sec:prelims:language-models}, we also use $\|$ to
    specify the prefix of a language model's output.
}
We use $\epsilon$ to denote the empty string (in contrast to $\varepsilon$ which
usually denotes a real number). The empty string satisfies $|\epsilon|=0$ and
$\epsilon\|T = T = T\|\epsilon$ for all $T$. We use $\Delta$ to denote the normalized
Hamming distance between strings of the same length. Specifically, for
$S,T\in \Sigma^L$, let $\Delta(S,T) := \frac{1}{L}\cdot \sum_{i=1}^L\1(S[i] \neq T[i])$.

For $0 \leq \delta \le 1$, $L \in \mathbb{N}$ and a string $y \in \bits^L$, we define
the \emph{$\delta$-erasure ball} $B_{\delta}(y)\subseteq \botbits^L$ to be the set of all strings
$z \in \{0,1,\bot\}^L$ where $z_i = \bot$ for at most $\lfloor \delta L \rfloor$
indices $i$, and otherwise $z_i = y_i$.
For $Y \subseteq \bits^L$, we define $B_\delta(Y) = \cup_{y \in Y} B_\delta(y)$.

\subsection{Language models}
\label{sec:prelims:language-models}

A \emph{language model} $\GenModel$ is a randomized algorithm that takes as input a
\emph{prompt} $Q$ and outputs a \emph{generation} $T$ in response.\footnote{
    Note that that our notation differs from prior work. Here, $\GenModel$ outputs a
    string in $\tokSet^*$ (like $\overline{\GenModel}$ in 
    \cite{EPRINT:ChrGunZam23,EPRINT:ChrGun24}), rather than a distribution over 
    the next token.
} Language models are
defined with respect to a set of \emph{tokens} $\tokSet$, where prompts and generations
are strings $Q,T \in \tokSet^*$.
For example, OpenAI's GPT-4 uses a set of 100,000 tokens, with the tokens `water,' `mark',
and `ing' composing the word `watermarking'. The token set contains a \emph{termination token}, which we denote $\bot \in \tokSet$. Every generation $T$ ends in $\bot$, and $\bot$ appears nowhere else in $T$.

We restrict our attention to \emph{prefix-specifiable} models. $\GenModel$ is
prefix-specifiable if for all prompts $Q$ and all prefixes $T$, one can efficiently
compute a new prompt denoted $Q\|T$ such that the distributions (i) $\GenModel(Q||T)$
and (ii) $\GenModel(Q)$
conditioned on $\GenModel(Q)_{\le |T|} = T$ are identical. This is a natural theoretical 
assumption to make, as most popular transformer
models have this property since they depend only on the prior tokens viewed. 
This assumption
was also made by prior work \cite{ICML:KGWKMG23,EPRINT:ChrGunZam23,EPRINT:ChrGun24}.
For a more complete discussion about prefix-specifiable models and their limitations,
we direct readers to \cite[Section 6]{EPRINT:ChrGunZam23}.

\subsection{Fingerprinting codes}
Fingerprinting codes allow for efficient tracing of pirated
digital content. In a canonical example, a film distributor is sending a movie
to several reviewers in advance of a screening. To combat any pre-release leaks,
the distributor embeds a distinct codeword in every copy, where
each letter in the codeword is embedded in a particular scene of the movie. If a reviewer
leaks the movie to the public, the distributor can extract the codeword from the
leaked copy and trace it back to the guilty party. The tracing task becomes more
difficult, however, if
the reviewers can collude: by picking and choosing scenes from each of their copies,
the reviewers may hope to leak a version of the movie that cannot be traced back 
to any colluding party. Fingerprinting codes guarantee that the distributor can
still identify a guilty
party so long there are at most $c$ colluders and they are restricted to picking
scenes from their own $c$ copies. In general, one cannot hope to identify more than one guilty party, as the colluders can always choose to leak a copy without any edits.

\begin{definition}[Fingerprinting codes -- syntax \cite{IEEE:BS98}]\label{def-fp}
    A \emph{fingerprinting code} is a pair of efficient algorithms
    $\FP = (\FPGen, \FPTrace)$ where:
    \begin{itemize}
        \item $\FPGen(1^{\lambda}, n, c, \delta) \to (X, \FPKey)$ is a randomized
        algorithm that takes as input a security parameter $\lambda$,
        a number of users $n$, a maximum number of colluders  $c$,
        and an erasure bound $0 \le \delta < 1$, and outputs a binary
        code $X \in \bits^{n\times L}$ of size $n$ and length $L$, and a tracing key $\FPKey$.
        
        \item $\FPTrace(y, \FPKey) \to S$ is a deterministic algorithm that takes as input any
        string $y \in \botbits^L$ and a tracing key $\FPKey$, and outputs a subset 
        $S \subseteq [n]$ of accused users.
    \end{itemize}
\end{definition}

For $u \in [n]$, we denote the codeword assigned to user $u$ by $X_u$, the $u^{\text{th}}$
row of $X$. For $\ColSet \subseteq [n]$, we denote the codewords assigned to the set
of colluders $\ColSet$ by $X_\ColSet$, the submatrix $(X_u)_{u\in \ColSet}$.
Throughout our paper, we assume that the length $L$ of the fingerprinting code
$\FP$ is a deterministic function of
$\lambda, n,c$ and $\delta$. Most fingerprinting codes, including the ones we
use \cite{ACM:BKM10,SPRINGER:NFHKWOI07}, have this property.

Fingerprinting codes provide a guarantee when a subset of users $\ColSet$ produce a
codeword $y$ by picking and choosing the individual bits of the codewords in $X_\ColSet$.
\emph{Robust} fingerprinting codes also allow a $\delta$-fraction of the bits to be adversarially erased. The \emph{feasible set} contains all strings $y$ that the colluding parties are able to create.

\begin{definition}[Feasible sets]\label{def-fp-feasible}
    For $X \in \bits^{n \times L}$, $\ColSet \subseteq [n]$, the
    \emph{feasible set} is 
    \begin{align*}
        \Feasible(X_{\ColSet}) := \{y \in \bits^L ~:~ \forall i \in [L],\;
        \exists x \in X_\ColSet,\; x[i] = y[i]\}.
    \end{align*}
    In particular, if every $x\in X_\ColSet$ has the same value $b$ at index $i$, then $y_i = b$.
    For $0\le \delta \le 1$, the \emph{$\delta$-feasible ball} is $\Feasible_\delta(X_\ColSet) := B_\delta(\Feasible(X_\ColSet))$.
\end{definition}

The goal of the colluders is to output some feasible word $y \in \Feasible_\delta(X_{\ColSet})$
such that $\FPTrace(y, \FPKey) = \emptyset$ (no user is accused) or that
$\FPTrace(y, \FPKey) \not\subseteq \ColSet$ (an innocent user is accused). The fingerprinting code is secure if this happens with negligible probability.

\ifx\cols\one
\begin{definition}[Fingerprinting codes -- robust security \cite{ACM:BKM10}]\label{def-fp-adv-security}
    A fingerprinting code $\FP$ is \emph{robust} if for all $0\le \delta \le 1$,
    $c\ge 1$, $n\geq c$, $\ColSet \subseteq [n]$ of size $|\ColSet| \le c$, and all efficient adversaries $\A$, the following event occurs with negligible probability:
    \begin{itemize}
        \item $y\in \Feasible_\delta(X_{\ColSet})$, AND
        \quad \texttt{// y is feasible}
        \item $\FPTrace(z, \FPKey) = \emptyset$ OR $\FPTrace(z, \FPKey) \not\subseteq\ColSet$
        \quad \texttt{// no or false accusation}
    \end{itemize}
    in the probability experiment defined by $(X, \FPKey) \gets \FPGen(1^{\lambda}, n, c, \delta)$ and $y \gets \FPAdv(X_{\ColSet})$.
\end{definition}
\fi
\ifx\cols\two
\begin{definition}[Fingerprinting codes -- robust security \cite{ACM:BKM10}]\label{def-fp-adv-security}
    A fingerprinting code $\FP$ is \emph{robust} if for all $0\le \delta \le 1$,
    $c\ge 1$, $n\geq c$, $\ColSet \subseteq [n]$ of size $|\ColSet| \le c$, and all efficient adversaries $\A$, the following event occurs with negligible probability:
    \begin{itemize}
        \item $y\in \Feasible_\delta(X_{\ColSet})$, AND \\
        \quad \texttt{// y is feasible}
        \item $\FPTrace(z, \FPKey) = \emptyset$ OR $\FPTrace(z, \FPKey) \not\subseteq\ColSet$ \\
        \quad \texttt{// no or false accusation}
    \end{itemize}
    in the probability experiment defined by $(X, \FPKey) \gets \FPGen(1^{\lambda}, n, c, \delta)$ and $y \gets \FPAdv(X_{\ColSet})$.
\end{definition}
\fi

Tardos's fingerprinting code \cite{ACM:Tar08} is asymptotically optimal, but not robust to adversarial erasures. The fingerprinting code of Boneh, Kiayias, and Montgomery \cite{ACM:BKM10} is based on the Tardos code 
but is robust. For $n > 1$ users, $c > 1$ colluders, and a
$\delta$ erasure bound, it has length $L = O\Bigl(\lambda (c \log c)^2 \log n / (1 - \delta)\Bigr)$,
but with very large constants.

\subsection{Balls and bins}
Our lossy watermarking schemes allow some of the mark to get erased. Our analysis will use the following lemma about throwing $k$ balls into $L$ bins uniformly at random. The lemma defines $k^*$, the number of balls needed to guarantee that at most $\delta L$ bins are empty except with probability $e^{-\lambda}$. The proof uses standard techniques and is deferred to
\ifx\cols\one Appendix~\ref{sec:lm-bnb-proof}.\fi
\ifx\cols\two the full version of this paper.\fi

\ifx\cols\one
\begin{lemma}\label{lm-bnb}
    For $\lambda,L \in \N$ and $0\le \delta <1$, define 
    \begin{equation}\label{eq:balls-bins}
        k^*(L,\delta) = \min\left\{L\cdot (\ln L + \lambda); \quad L\cdot \ln\left(\frac{1}{\delta - \sqrt{\frac{\lambda + \ln 2}{2L}}} \right) \right\}
    \end{equation}
    Then, after throwing $k\ge k^*(L,\delta)$ balls into $L$ bins, fewer than $\delta L$ bins are empty except with probability at most $e^{-\lambda}$.
\end{lemma}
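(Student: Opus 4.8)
The plan is to track the random variable $Z$ equal to the number of empty bins after $k$ balls are thrown, and show $\Pr[Z \ge \delta L] \le e^{-\lambda}$. Write $A := L(\ln L + \lambda)$, let $\delta' := \delta - \sqrt{(\lambda + \ln 2)/(2L)}$, and $B := L \ln(1/\delta')$ (interpreted as $+\infty$ when $\delta' \le 0$), so that $k^*(L,\delta) = \min\{A,B\}$. Since $k \ge k^* = \min\{A,B\}$ and the minimum is attained by one of the two, we have either $k \ge A$ or $k \ge B$, and I would handle the two cases separately; in particular $\delta' > 0$ whenever we fall in the second case.

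\emph{Case $k \ge A$ (small-$\delta$ regime).} Here I would just union-bound over bins. A fixed bin is empty with probability $(1-1/L)^k \le e^{-k/L} \le e^{-(\ln L + \lambda)} = e^{-\lambda}/L$, so $\Pr[Z \ge 1] \le L \cdot e^{-\lambda}/L = e^{-\lambda}$. Thus with probability at least $1 - e^{-\lambda}$ no bin is empty, which is fewer than $\delta L$ (indeed zero).

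\emph{Case $k \ge B$ (so $\delta' > 0$).} Here the union bound is too weak and I would instead use a sharp concentration bound on $Z$, via Poissonization. The event $\{Z \ge \delta L\}$ is monotone decreasing in the number of balls thrown (adding balls can only reduce the empty-bin count), so by the standard Poisson-approximation lemma for monotone occupancy events, $\Pr[Z \ge \delta L]$ in the exact-$k$ model is at most twice its probability in the model where $\mathrm{Poisson}(k)$ balls are thrown. In that model the bin loads are independent $\mathrm{Poisson}(k/L)$ random variables, so the $L$ emptiness indicators are \emph{independent} $\mathrm{Bernoulli}(e^{-k/L})$, and the Poissonized count $Z'$ has mean $\mu' = L e^{-k/L} \le L\delta'$ (using $k \ge B = L\ln(1/\delta')$). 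Setting $t := \delta L - \mu' \ge L(\delta - \delta') = \sqrt{L(\lambda+\ln 2)/2}$, Hoeffding's inequality on the $L$ independent $[0,1]$-valued indicators gives $\Pr[Z' \ge \delta L] \le \Pr[Z' \ge \mu' + t] \le e^{-2t^2/L} = e^{-(\lambda + \ln 2)} = e^{-\lambda}/2$. Multiplying by the factor $2$ from Poissonization yields $\Pr[Z \ge \delta L] \le e^{-\lambda}$, as desired. (Notice this computation is exactly why the $\ln 2$ appears in the formula for $k^*$: it is the cost of the Poissonization step.)

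The main obstacle is choosing the right concentration tool in the second case. A naive Doob-martingale/McDiarmid argument over the $k$ i.i.d.\ ball placements only gives a bound of the form $\exp(-\Omega(t^2/k))$, which is too lossy since $k$ typically exceeds $L$; the point is that the $L$ emptiness indicators are negatively associated, so a Chernoff/Hoeffding-type bound with denominator $L$ (not $k$) is legitimate. I would get this either by passing to the Poisson model as above (cleanest, and it accounts for the $\ln 2$) or by directly invoking negative association of occupancy indicators; the arithmetic is identical. The only other thing to check is routine: the inequality $(1-1/L)^k \le e^{-k/L}$, and the trivial bookkeeping that $k \ge \min\{A,B\}$ forces one of the two cases (with the $\delta = 0$ degeneracy harmlessly absorbed into the first case, where we prove zero empty bins).
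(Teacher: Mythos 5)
Your proposal is correct and follows essentially the same route as the paper's proof in Appendix~\ref{sec:lm-bnb-proof}: a union bound giving zero empty bins when $k \ge L(\ln L + \lambda)$, and otherwise Poissonization (with the factor-$2$ monotone-event lemma, which is exactly where the $\ln 2$ enters) followed by Hoeffding on the independent emptiness indicators with mean $Le^{-k/L}$. No meaningful differences.
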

\fi
\ifx\cols\two
\begin{lemma}\label{lm-bnb}
    For $\lambda,L \in \N$ and $0\le \delta <1$, define 
    \begin{equation}\label{eq:balls-bins}
        k^*(L,\delta) = \min\left\{L\cdot (\ln L + \lambda); \quad L\cdot \ln\left(\frac{1}{d(L,\delta)} \right) \right\}
    \end{equation}
    where $d(L, \delta) = \delta - \sqrt{(\lambda + \ln 2)/2L}$. Then, after throwing $k\ge k^*(L,\delta)$ balls into $L$ bins, fewer than $\delta L$ bins are empty except with probability at most $e^{-\lambda}$.
\end{lemma}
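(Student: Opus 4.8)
The plan is to let $N$ denote the number of empty bins after throwing $k$ balls and to show $\Pr[N \ge \delta L] \le e^{-\lambda}$. The first observation is that $k \ge k^*(L,\delta) = \min\{L(\ln L + \lambda),\, L\ln(1/d(L,\delta))\}$ forces $k$ to be at least \emph{one} of the two arguments of the minimum, so it suffices to handle two cases. Throughout I would write $N = \sum_{j=1}^{L} Y_j$ with $Y_j = \1[\text{bin } j \text{ is empty}]$, and use the elementary bound $\Ex[Y_j] = (1-1/L)^k \le e^{-k/L}$.

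\emph{Case 1: $k \ge L(\ln L + \lambda)$.} Here $\Ex[Y_j] \le e^{-\ln L - \lambda} = e^{-\lambda}/L$, so by linearity $\Ex[N] \le e^{-\lambda}$ and a union bound gives $\Pr[N \ge 1] \le L\cdot e^{-\lambda}/L = e^{-\lambda}$. Since $\delta L > 0$ and $N$ is integer-valued, $\Pr[N \ge \delta L] \le \Pr[N \ge 1] \le e^{-\lambda}$ (in fact this case shows that with high probability \emph{no} bin is empty). This case needs nothing but linearity and a union bound.

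\emph{Case 2: $k \ge L\ln(1/d)$, with $d = d(L,\delta) = \delta - \sqrt{(\lambda + \ln 2)/2L}$; this case is only relevant when $d > 0$, and when $d \le 0$ the min is achieved by the first argument anyway.} Now $\Ex[N] \le Le^{-k/L} \le L\cdot d = \delta L - \sqrt{L(\lambda + \ln 2)/2}$, so the threshold $\delta L$ lies a distance $t := \sqrt{L(\lambda + \ln 2)/2}$ above $\Ex[N]$. To finish I need concentration for $N$ at the $\sqrt{L}$ scale, and the route I would take is to observe that the indicators $Y_1,\dots,Y_L$ are \emph{negatively associated}: the ball-placement indicators $\{\1[\text{ball }i\to\text{bin }j]\}_{i,j}$ form an NA family (each ball contributes a length-$L$ block of Bernoullis summing to $1$, which is NA, and distinct balls are independent), and each $Y_j$ is a non-increasing function of the disjoint block of indicators indexed by column $j$, so negative association is preserved. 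Hoeffding's inequality for $[0,1]$-valued NA variables then gives $\Pr[N \ge \Ex[N] + t] \le \exp(-2t^2/L) = \exp(-(\lambda + \ln 2)) \le e^{-\lambda}$, and since $\Ex[N] + t \le \delta L$ this yields $\Pr[N \ge \delta L] \le e^{-\lambda}$.

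The main obstacle is exactly this concentration step in Case 2. A bounded-differences (McDiarmid) martingale over the $k$ ball placements is far too lossy: each placement can change $N$ by at most $1$, giving deviations only at the $\sqrt{k}$ scale, whereas $k$ may be much larger than $L$ (indeed $k^*$ exceeds $L$ by a factor $\ln L + \lambda$). So the crux is obtaining the tighter $\sqrt{L}$-scale tail bound, for which negative association of the occupancy indicators is the cleanest tool; an alternative is to pass to the Poissonized model, where bin loads are independent and the empty-bin indicators are genuinely independent, and then de-Poissonize the monotone-decreasing event $\{N \ge \delta L\}$, but this is messier. Everything else — the two expectation estimates and the union bound — is routine. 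I would also flag the degenerate endpoint $\delta = 0$, where ``fewer than $\delta L$ empty bins'' is vacuous; the lemma is meant to be read with $\delta L \ge 1$, matching how it is invoked.
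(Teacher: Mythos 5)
Your proof is correct, and it follows the same skeleton as the paper's (Appendix~C): the same two-case split according to which argument of the minimum $k$ dominates, the same union-bound argument showing no bin is empty when $k \ge L(\ln L + \lambda)$, and the same plan in the second case of bounding $\Ex[N] \le L e^{-k/L} \le L\,d(L,\delta)$ and then applying a Hoeffding-type tail bound at deviation $t = \sqrt{L(\lambda+\ln 2)/2}$. Where you genuinely diverge is the justification of that concentration step for the dependent empty-bin indicators. The paper Poissonizes: it replaces the bin loads by i.i.d.\ $\mathrm{Pois}(k/L)$ variables, applies Hoeffding to the now-independent indicators, and then de-Poissonizes using the factor-$2$ bound for events monotone in the number of balls (Mitzenmacher--Upfal, Cor.~5.11); the $\ln 2$ inside $d(L,\delta)$ exists precisely to absorb that factor of $2$. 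You instead argue that the occupancy indicators are negatively associated (zero-one lemma per ball, independence across balls, and closure of NA under non-increasing functions of disjoint index sets) and invoke the Chernoff--Hoeffding bound for NA sums directly. Both are standard and correct; your route is arguably cleaner in that it needs no auxiliary model and no monotonicity argument, and it makes the $\ln 2$ slack in $d(L,\delta)$ unnecessary (with NA one could take $d = \delta - \sqrt{\lambda/2L}$ and get a marginally smaller $k^*$), whereas the paper's route leans on a textbook citation and keeps the argument about dependence entirely encapsulated in the Poisson approximation. Your remarks on the degenerate endpoint $\delta = 0$ and on the case $d(L,\delta) \le 0$ (where the minimum is read as the first argument) match how the paper implicitly handles them, so there is no gap there.
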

\fi

\section{Watermarks for language models}\label{sec-watermark}

This section defines the syntax and properties of watermarking schemes. 
We introduce \emph{multi-user watermarking schemes} that can trace watermarked text to an individual user, even in the presence of \emph{collusions}.

We require robustness/completeness to hold in the face of
\emph{adaptive prompting}.
All prior works only require completeness and robustness for a {single} text generated
using a {fixed} prompt. But this is not how generative
models are used in reality. Users interact with the models, adaptively
refining the prompts and generations until they are satisfied with the result.
In this setting, existing definitions don't make any guarantees at all!
We instead require that completeness and robustness hold for adversaries
that adaptively query the model, possibly selecting text from many responses.
In \secref{sec:non-adaptive-to-adaptive}, we will show that there is no general 
reduction between non-adaptive and adaptive robustness. In \apref{sec:CGZ}, we show that the scheme of 
Christ, Gunn, Zamir is in fact adaptively robust.

We consider three types of watermarking schemes: zero-bit, $L$-bit, and multi-user
watermarking. An \emph{$L$-bit} watermarking
scheme embeds into generated text an $L$-bit message which can later be extracted
using a secret key. In a \emph{zero-bit} watermarking scheme, text is viewed as
either marked or unmarked, but there is no message to be extracted from marked text.
The name stems from viewing zero-bit watermarking as a special case of $L$-bit
watermarking, with message space containing only one message. 
A \emph{multi-user} watermarking scheme allows for tracing
model-generated text back to the user (or group of users) who 
prompted the generation.

We call the robustness properties of the three types of schemes \emph{robust detection}, \emph{robust extraction}, and \emph{robust tracing}, respectively. We use \emph{robust} when the meaning is clear from context.
Definitions for zero- and $L$-bit watermarking are adapted from existing works, though we relax correctness to require only $(1-\delta) L$ bits of the message to be recovered. Our notation most closely follows \cite{EPRINT:ChrGunZam23,EPRINT:ChrGun24}.

\subsection{Zero- and $L$-bit watermarking syntax}
We give the syntax of  zero-bit watermarking and $L$-bit watermarking schemes. 
The difference is whether the watermark is binary --- marked or unmarked --- or encodes a message. Throughout the paper we focus on the secret key setting, where $\Detect$ and $\Wat$ share a key $\sk$ generated by $\Setup$.

For zero-bit watermarking, the algorithm $\Wat$ is the
watermarked version of the language model, with the same inputs and outputs.
The $\Detect$ algorithm indicates whether text is considered marked or unmarked.

\begin{definition}[Zero-bit watermarking -- Syntax]\label{def-wat}
A \emph{zero-bit watermarking scheme} for a
language model $\GenModel$ over $\tokSet$ is a
tuple of efficient algorithms $\wat=(\Setup,\Wat,\Detect)$ where:
\begin{itemize}
    \item $\Setup(1^{\lambda}) \to \sk$ is a randomized algorithm that takes
    a security parameter $\lambda$ as input and outputs a secret key $\sk$.
    \item $\Wat_{\sk}(\prompt) \to T$ is a keyed randomized algorithm that
    takes a prompt $\prompt$ as input and outputs a string
    $T\in \tokSet^{*}$.
    \item $\Detect_{\sk}(T) \to b$ is a keyed deterministic
    algorithm that takes a string 
    $T\in \tokSet^{*}$ as input and outputs a bit $b \in \bits$.
\end{itemize}
\end{definition}

An $L$-bit watermarking scheme allows a message $m \in \{0,1\}^L$
to be embedded in generated text and later extracted. 
The syntax is the natural
generalization of the above, though we rename $\Detect$ to $\Extract$,
reflecting its new semantics.
Observe that a $1$-bit scheme can be used to construct a
zero-bit scheme by taking $\Wat_\sk(\prompt) :=\Encode_\sk(1,\prompt)$ and
$\Detect_{\sk}(T) := \1\bigl(\Extract_{\sk}(T) = 1 \bigr)$. Likewise for $L>1$.
\begin{definition}[Watermarking syntax -- $L$-bit]\label{def-msg}
An \emph{$L$-bit watermarking scheme} for a language model $\GenModel$ over $\tokSet$ is a
tuple of efficient algorithms $\Msg=(\MsgSetup,\Encode,\Extract)$ where:
\begin{itemize}
    \item $\MsgSetup(1^{\lambda}) \to \sk$ is a randomized algorithm that takes
    a security parameter $\lambda$ as input and outputs a secret key $\sk$.
    \item $\Encode_{\sk}(\msg,\prompt) \to T$ is a keyed randomized algorithm that
    takes as input a message $\msg\in \bits^L$ and a prompt $\prompt \in \tokSet^*$
    and outputs a string $T \in \tokSet^*$.
    \item $\Extract_{\sk}(T) \to \hat{\msg}$ is a keyed deterministic
    algorithm that takes a string
    $T\in \tokSet^{*}$ as input and outputs a message $\hat{\msg} \in \botbits^L$.
\end{itemize}
\end{definition}

\subsection{Properties of watermarking schemes}

Watermarking schemes must have four properties. 
First is \emph{soundness}: marks must not be falsely detected (resp.\ extracted, traced) in text not generated by the watermarked model (low Type I error).
Second, watermarking should not noticeably degrade the quality of the model's outputs. The strongest version of this is \emph{undetectability}: the marked and unmarked models are indistinguishable.
Third is \emph{completeness}: the marks are detectable in verbatim outputs from the watermarked model (low Type II error).
Fourth is \emph{robustness} to edits: marks are detectable even after marked 
text is edited, say by deleting or rearranging phrases, or by pasting an excerpt into an unmarked document.
All the existing schemes offer some degree of all four guarantees, though the formalisms and assumptions vary greatly.

We now define these properties for $L$-bit watermarking. 
We defer to \ifx\cols\one\apref{ap:definitions} \fi\ifx\cols\two the full version of this paper \fi definitions of
undetectability, soundness, completeness, and robust detection for zero-bit watermarking,
as they are special cases of the $L$-bit definitions.
We require negligible probabilities of failure
against arbitrary efficient adversaries, in part because such strong guarantees lend
themselves to black-box reductions. 
Some prior works target weaker guarantees, in particular by
relaxing undetectability. 
In
\ifx\cols\one\secref{sec-detectable}, \fi
\ifx\cols\two\apref{sec-detectable}, \fi
we give some limited results
for schemes that are not undetectable.

\medskip
\emph{Soundness} requires that $\Extract$ returns $\bot^L$ except with negligible probability for all strings $T$ containing at most poly-many tokens.

\begin{definition}[Soundness -- $L$-bit]\label{def-soundness}
An $L$-bit watermarking scheme $\Msg=(\MsgSetup,\Encode,\Extract)$
is \emph{sound} if for all polynomials $\poly$ and all
strings $T\in \tokSet^{*}$ of length $|T| \le \poly(\lambda)$,
\[
    \Pr_{\sk \getsr \MsgSetup(1^{\lambda})}[\Extract_{\sk}(T) \ne \bot^L]
    < \negl(\lambda).
\]
\end{definition}

\emph{Undetectability}, a property pioneered and first realized by \cite{EPRINT:ChrGunZam23}\ifx\cols\one\ (\defref{def-zero-undetectability})\fi,
requires that outputs of the watermarked model
$\Wat$ must be computationally indistinguishable from outputs of the
underlying model $\GenModel$. In particular, this implies that any
computationally-checkable utility guarantees of $\GenModel$ also hold for $\Wat$.

\ifx\cols\two
\begin{definition}[Undetectability -- $L$-bit]\label{def-undetectability}
    Define the oracle $\GenModel'(\msg, \prompt) := \GenModel(\prompt)$.
    An $L$-bit watermarking scheme $\Msg=(\MsgSetup,\Encode,\Extract)$ for $\GenModel$ is
    \emph{undetectable} if for all efficient adversaries $\A$,
    \[
        \left|
        \Pr[\A^{\GenModel'(\cdot,\cdot)}(1^{\lambda}) = 1] -
        \Pr_{\sk \getsr \MsgSetup(1^{\lambda})}[
            \A^{\Encode_{\sk}(\cdot,\cdot)}(1^{\lambda}) = 1]
        \right|
    \]
    is at most $\negl(\lambda)$.
\end{definition}
\fi
\ifx\cols\one
\begin{definition}[Undetectability -- $L$-bit]\label{def-undetectability}
    Define the oracle
    $\GenModel'(\msg, \prompt) := \GenModel(\prompt)$.
    An $L$-bit watermarking scheme
    $\Msg=(\MsgSetup,\Encode,\Extract)$ for $\GenModel$ is
    \emph{undetectable} if for all efficient adversaries $\A$,
    \[
        \left|
        \Pr[\A^{\GenModel'(\cdot,\cdot)}(1^{\lambda}) = 1] -
        \Pr_{\sk \getsr \MsgSetup(1^{\lambda})}[
            \A^{\Encode_{\sk}(\cdot,\cdot)}(1^{\lambda}) = 1]
        \right| < \negl(\lambda).
    \]
\end{definition}
\fi

\paragraph{Completeness and robustness}
Formally defining completeness and robustness requires some care. We focus on robustness, as completeness is a special case.

Intuitively, robustness should guarantee something like the following: if $T \getsr \Wat_\sk(m,\prompt)$ and $\hT \approx T$, then $\Extract_\sk(\hT) = m$. But this requirement is too strong. Soundness requires that any fixed string, say the text of The Gettysburg Address, is unmarked with high probability. But then so must the text $T$ generated in response to the query
\ifx\cols\one
$\prompt = \texttt{What is the text of The Gettysburg Address?}$,
\fi
\ifx\cols\two
$\prompt = \texttt{What is the text of}\allowbreak \texttt{The Gettysburg Address?}$,
\fi assuming the model answers correctly.

To deal with this issue, existing works impose a requirement on the \emph{entropy} of $T$, for various notions of entropy.
In light of the above, robustness should guarantee something like the following. If $T \getsr \Wat_\sk(m,\prompt)$, then one of the following holds with high probability: (i) $T$ lacks sufficient entropy; (ii) $\hT \not\approx T$; or (iii) $\Extract_\sk(\hT) = m$.
Instantiating the definition requires specifying what exactly conditions (i) and (ii) mean.

Our definition of robustness is agnostic to this choice (though our constructions require additional structure, see Sec.~\ref{sec:block-by-block}). Instead, we define robustness relative to a generic \emph{robustness condition} $R$ which evaluates to 1 when both (i) $T$ has sufficient entropy, and (ii) $\hT \approx T$. Note that $R$ only defines a \emph{sufficient} condition for extracting the mark. It may sometimes hold that $\Extract_{\sk}(\hT) = m$
even when $R=0$.

Our definition of robust extraction is also parameterized by a number $0\le \delta \le 1$.
A scheme is \emph{lossy} if $\delta > 0$, and \emph{lossless} if $\delta = 0$.
We omit $\delta$ for lossless schemes, writing \emph{$R$-robust/complete}. Recall that
$B_\delta(m)$ is the set of all strings $\hat{m} \in \botbits^L$ that agree with
$m$ except for at most $\lfloor \delta L \rfloor$ indices.

\begin{definition}[Robustness condition]
    A \emph{robustness condition} is a (deterministic, efficient) function
    $R:(\lambda, (\prompt_i)_i, (T_i)_i, \hat{T}) \mapsto b,$
    where $\lambda \in \mathbb{N}$; $\prompt_i, T_i, \hat{T} \in \tokSet^*$
    for all $i$; and $b \in \bits$.
\end{definition}

\begin{definition}[$(\delta,R)$-Robust extraction -- $L$-bit, adaptive]\label{def-robustness}
    An $L$-bit watermarking scheme $\Msg=(\MsgSetup,\allowbreak\Encode,\Extract)$
    is \emph{adaptively $(\delta,R)$-robustly extractable} with respect to
    robustness condition $R$
    if for all messages $\msg \in \bits^L$, and all efficient adversaries $\A$,
    the following event \fail occurs with negligible probability:
    \begin{itemize}
        \item $R\left(\lambda,(\prompt_i)_i,(T_i)_i,\hat{T}\right) = 1$, AND 
        \ifx\cols\one\quad\fi\ifx\cols\two\\\fi \texttt{// robustness condition holds}
        \item $\hat{\msg} \not\in B_{\delta}(\msg)$ 
        \ifx\cols\one\quad\fi\ifx\cols\two\\\fi \texttt{// the mark is corrupted}
    \end{itemize}
    in the probability experiment defined by 
    \begin{itemize}
        \item $\sk \getsr \MsgSetup(1^{\lambda})$
        \item $\hat{T} \gets \A^{\Wat_\sk(\msg,\cdot)}(1^\lambda)$, denoting by
        $(\prompt_i)_i$ and $(T_i)_i$ the sequence of inputs and outputs of the oracle
        \item $\hat{\msg} \gets \Extract_{\sk}(\hat{T})$.
    \end{itemize}
\end{definition}
\noindent
\emph{Completeness} is almost identical, with the additional clause ``AND $\hat{T} \in (T_i)_i$'' added to \fail (see \apref{ap:definitions}).

\subsection{Multi-user watermarks}

We now define a {multi-user watermarking scheme}. We consider
a watermarking scheme which is deployed for a set of users $\Users$ and
generalize the notation for zero-bit watermarking. Our definition has
three functions that nearly match the syntax and semantics of zero-bit
watermarking. The only difference among the first three functions is that
$\Wat$ takes as input both a user and a prompt.
The new functionality of a multi-user watermarking scheme is $\Trace$, which
given some text $\hT$ can output the user that produced $\hT$.

Our syntax tracks the following intended usage. The watermarker initially
sets up their system by generating a secret key $\sk$ with $\Setup$.
For each user $u$, they provide oracle access to $\Wat_{\sk}(u,\cdot)$, fixing 
the first input $u$. This models, say, a signed-in user interacting 
with ChatGPT.

To detect only the {presence} of a watermark in a candidate text $\hT$,
the watermarker can run $\Detect_{\sk}(\hT)$ exactly as in a zero-bit 
watermarking scheme. If they want to determine {which user(s)} the
text belongs to, they can run $\Trace_{\sk}(\hT)$, which outputs
the (possibly empty) set of users whose watermarked outputs generated $\hT$.

\begin{definition}[Multi-user watermarking]\label{def-multi-wat}
    A \emph{multi-user watermarking scheme} for a
    model $\GenModel$ over a token alphabet $\tokSet$ and a set of users $\Users$
    is a tuple of efficient algorithms \ifx\cols\two\\\fi
    $\wat=(\Setup,\Wat,\Detect,\Trace)$ where:
    \begin{itemize}
        \item $\Setup(1^{\lambda}) \to \sk$ is a randomized algorithm that takes
        a security parameter $\lambda$ as input and outputs a secret key $\sk$.
        \item $\Wat_{\sk}(u, \prompt) \to T$ is a keyed randomized algorithm that
        takes as input a user $u \in \Users$ and a prompt $\prompt\in \tokSet^*$ and generates a
        response string $T\in \tokSet^{*}$.
        \item $\Detect_{\sk}(T) \to b$ is a keyed deterministic algorithm that takes a
        string $T \in \tokSet^*$ as input and outputs a bit $b \in \bits$.
        \item $\Trace_{\sk}(T) \to S$ is a keyed deterministic
        algorithm that takes a string
        $T\in \tokSet^{*}$ and outputs a set of accused users $S \subseteq \Users$.
    \end{itemize}
\end{definition}

Notice that detection could simply check if $\Trace$ outputs
at least one user, so our definition could remove the explicit function $\Detect$. However, detection alone may be much quicker or require less 
generated text compared to tracing. Both are true in our construction (\secref{sec-multiuser}): detection only requires checking if a partially-erased watermark is not empty, where extracting the partial mark takes $O(\log|\Users|)$ time. Tracing amounts to checking the partial watermark against each user one-by-one, and may fail if too much of the mark is erased.

Although it makes sense to separate these forms of detection,
we also hope that the two are \emph{consistent}. In other words, if there is
no detected watermark in the text, then we don't want to accuse any users of 
generating it. Alternatively, if we can find a user, then we should also detect 
that there is watermark.

\begin{definition}[Consistency]\label{def-multi-consistency}
    We say a multi-user watermarking scheme $\wat=(\Setup,\Wat,\Detect,\Trace)$
    is \emph{consistent} if for all $T \in \tokSet^*$ and all keys $\sk$,
    \begin{align*}
        \Detect_{\sk}(T) = 0 \implies \Trace_{\sk}(T) = \emptyset.
    \end{align*}
\end{definition}

As before, a multi-user watermarking scheme should be undetectable, sound,
and robust.\footnote{We do not define completeness for multi-user watermarking,
as it is just a special case of robustness, which we prove later in the paper.}
The first two properties are straightforward generalizations of
definitions for $L$-bit watermarking, and we defer them to 
\ifx\cols\one\apref{ap:multi-user-defs}.\fi\ifx\cols\two the full version of the paper.\fi

As for robustness, we can ask for both \emph{robust detection} and \emph{robust tracing}.
By consistency, $R$-robust tracing implies $R$-robust detection.
Motivated by the multi-user setting, we additionally consider robustness against \emph{collusions}. If a group of users $u_1,\ldots,u_c$,
each interacting with their own oracle, collude to produce some text
$\hT$, then we still wish that $\Trace$ can find one or more of the
users. Our definition below captures this idea by allowing the adversary to query the model as $c$ distinct users.

\ifx\cols\one
\begin{definition}[$R$-Robust tracing against $c$-collusions -- multi-user, adaptive]\label{def-multi-robustness}
    A multi-user watermarking scheme
    $\wat=(\Setup,\Wat,\Detect,\Trace)$ is
    \emph{$R$-robustly traceable against $c$-collusions} with respect to the robustness condition
    $R$ and collusion bound $c > 1$ if for all $C\subseteq \Users$ of
    size at most $|C|\le c$ and all efficient adversaries $\A$,
    the following event \fail occurs with negligible probability:
    \begin{itemize}
        \item $\forall i, u_i \in C$, AND 
        \quad \texttt{// only users in C collude}
        \item $R\left(\lambda,(\prompt_i)_i, (T_i)_i, \hat{T}\right) = 1$, AND 
        \quad \texttt{// robustness condition passes}
        \item $\bigl(S = \emptyset ~\lor~ S \not\subseteq C\bigr)$ 
        \quad \texttt{// no or false accusation}
    \end{itemize}
    in the probability experiment defined by 
    \begin{itemize}
        \item $\sk \getsr \MsgSetup(1^{\lambda})$
        \item $\hat{T} \gets \A^{\Wat_\sk(\cdot,\cdot)}(1^\lambda)$, denoting by
        $(u_i,\prompt_i)_i$ and $(T_i)_i$ the sequence of inputs and outputs of the oracle
        \item $S \gets \Extract_\sk(\hat{T})$.
    \end{itemize}
\end{definition}
\fi
\ifx\cols\two
\begin{definition}[$R$-Robustness against $c$-collusions -- multi-user, adaptive]\label{def-multi-robustness}
    A multi-user watermarking scheme 
    $\wat=(\Setup,\Wat,\Detect,\Trace)$ is
    \emph{$R$-robust against $c$-collusions} with respect to the robustness condition
    $R$ and collusion bound $c > 1$ if for all $C\subseteq \Users$ of
    size at most $|C|\le c$ and all efficient adversaries $\A$,
    the following event \fail occurs with negligible probability:

    \ifx\cols\one
    \begin{itemize}
        \item $\forall i, u_i \in C$, AND 
        \quad \texttt{// only users in C collude}
        \item $R\left(\lambda,(\prompt_i)_i, (T_i)_i, \hat{T}\right) = 1$, AND 
        \quad \texttt{// robustness \condition\ passes}
        \item $\bigl(S = \emptyset ~\lor~ S \not\subseteq C\bigr)$ 
        \quad \texttt{// no or false accusation}
    \end{itemize}
    \fi
    \ifx\cols\two
    \begin{itemize}
        \item $\forall i, u_i \in C$, AND \\
        \quad \texttt{// only users in C collude}
        \item $R\left(\lambda,(\prompt_i)_i, (T_i)_i, \hat{T}\right) = 1$, AND \\
        \quad \texttt{// robustness condition passes}
        \item $\bigl(S = \emptyset ~\lor~ S \not\subseteq C\bigr)$ \\
        \quad \texttt{// no or false accusation}
    \end{itemize}
    \fi
    
    in the probability experiment defined by 
    \begin{itemize}
        \item $\sk \getsr \MsgSetup(1^{\lambda})$
        \item $\hat{T} \gets \A^{\Wat_\sk(\cdot,\cdot)}(1^\lambda)$, denoting by
        $(u_i,\prompt_i)_i$ and $(T_i)_i$ the sequence of inputs and outputs of the oracle
        \item $S \gets \Extract_\sk(\hat{T})$.
    \end{itemize}

    We also say a scheme satisfying this definition is \emph{$R$-robustly
    traceable} (against $c$-collusions).
\end{definition}
\fi

\section{Block-by-block watermarks}\label{sec:block-by-block}

This section introduces the syntax for \emph{block-by-block} watermarking schemes
and \emph{AEB-robustness conditions}. These abstractions provide
a unified way to describe the robustness guarantees
of existing schemes that enables black-box reductions.

Informally, a block-by-block scheme views a generation $T$ as a sequence of 
\emph{blocks} $\Blocks(T)$, each of which has high-enough entropy. AEB-robustness guarantees that candidate text is watermarked whenever it 
\emph{{\bf A}pproximates {\bf E}nough {\bf B}locks} of model-generated text (see \figref{fig:blocks}).
In general, these blocks do not need to be copied verbatim and in fact only need to 
be \emph{approximated} in the candidate text.  The 
AEB-robustness condition $R_1$ requires one block to be approximated. $R_1$ is satisfied by $\hT$
if there exists a single block $\block \in \cup_i \Blocks(T_i)$ that is
approximated by some substring $\hat{\block}$ of $\hT$. For $k\ge 1$, the 
AEB-robustness condition $R_k$ requires approximating $k$ blocks. $R_k$ checks whether 
at least $k$ distinct blocks $\block_j \in \cup_i \Blocks(T_i)$ are approximated
by substrings $\hat{\block}_j$ of $\hT$. In our constructions, we will only require that
the underlying scheme is $R_1$-robustly detectable.
Using this underlying scheme, our multi-user construction (\secref{sec-multiuser})
will be $R_1$-robustly detectable and $R_k$-robustly traceable.

\begin{figure}[h!]
    \centering
    \ifx\cols\one
    \includegraphics[width=0.6\columnwidth]{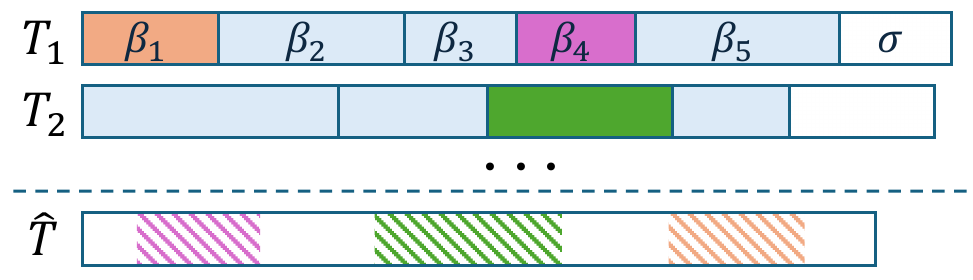}
    \fi
    \ifx\cols\two
    \includegraphics[width=0.9\columnwidth]{images/blocks.pdf}
    \fi
    \caption{Visualization of a string $\hT$ containing three approximate blocks
    from original generations $T_1$ and $T_2$. This $\hT$ would satisfy 
    the $R_3(\lambda, (\prompt_i)_i, (T_i)_i, \hT)$ robustness condition.
    }
    \label{fig:blocks}
\end{figure}

We now formalize these notions. To define the AEB-robustness condition
$R_1$, we need a binary function $\isBlock$ (which specifies whether a given
substring $\block \in T$ constitutes a block) and a binary relation on
strings $\bumpeq$ (which specifies when $\hat{\block}$ approximates $\block$)
denoted by $\hat{\block} \bumpeq \block$.
Note that all the functions defined below may take the security parameter
$\lambda$ as an additional input. We omit it to reduce notational clutter.

As already discussed, a useful measure of the entropy of a generation must be with
respect to the underlying query (and the language model). Thus we define $\isBlock$
to take two strings as input, the generation $T$ and the prompt $\prompt$.

\begin{definition}[Block]\label{def:is-block}
    Let $\isBlock:\tokSet^* \times \tokSet^* \to \bits$ be a (deterministic, efficient) function.
    For $\prompt \in \tokSet^*$, a \emph{block} with respect to $\prompt$ is a
    string $T \in \tokSet^*$ such that  $\isBlock(T;\prompt) = 1$.
    A block is \emph{minimal} if no prefix is a block.
\end{definition}

\begin{definition}[Block-by-block watermarking]
    \label{def:block-by-block-embedding}
    A \emph{block-by-block} watermarking scheme $\Msg$ is a watermarking scheme
    that  it has an additional block algorithm
    $\isBlock:\tokSet^* \times \tokSet^* \to \bits$,
    which may depend on $\lambda$ and $\GenModel$, but not the secret key $\sk$.
\end{definition}

It is easy to check that any generation of a prefix-specifiable model can be uniquely
parsed into a sequence of minimal blocks (possibly with a non-block suffix), which
we denote $\Blocks(T;\prompt)$.

\begin{definition}[$\Blocks(T;\prompt)$]\label{def:blocks}
    Let $\prompt, T \in \tokSet^*$ be strings and $\isBlock$ as above.
    We define $\Blocks(T;\prompt)$ to be the unique sequence
    $(\block_1, \block_2,\allowbreak \dots, \block_B)$ such that:
        (i) $T = \block_1 \| \block_2 \| \cdots \| \block_B\| \sigma$ for some string $\sigma$;
        (ii) $\block_i$ is a minimal block with respect to
        $\prompt\|\block_1\|\cdots\|\block_{i-1}$ for all $i\in [B]$; and
        (iii) no prefix of $\sigma$ is a block with respect to
        $\prompt\|\block_1\|\cdots\|\block_{B}$.
    We call $\Blocks(T;\prompt)$ the \emph{blocks of $T$ with respect to $\prompt$}.
\end{definition}

AEB-robustness conditions involve counting the number of
substrings of $\hat{T}$ that approximate distinct blocks in a generation $T$,
where the approximation is specified by the binary relation $\bumpeq$, which we typically
suppress throughout the paper for ease of notation.
The function $\numBlocks$ returns that count.

\begin{definition}[$\numBlocks$]\label{def:num-blocks}
    Let $\isBlock$ and $\Blocks$ as above, and let $\bumpeq:\tokSet^* \times \tokSet^* \to \bits$ be a binary relation on strings. The function $\numBlocks:(\hat{T};\prompt, T)\mapsto n$ on input strings $\hat{T}, \prompt, T \in \tokSet^*$ is defined to be the maximum $n\ge 0$ for which there
    exist substrings $\hat{\block}_1,\dots,\hat{\block}_n \in \hat{T}$ and  distinct
    blocks $\block_1,\dots,\block_n \in \Blocks( T;\prompt)$ such that
    $\hat{\block}_i \bumpeq \block_i$ for all $i \in [n]$.
\end{definition}

$R_k$ holds when
at least $k$ substrings are present in a string $\hT$ that approximate
(according to some $\bumpeq$) blocks contained in generations $(T_i)_i$
with respect to prompts $(Q_i)_i$.

\begin{definition}[AEB-robustness condition $R_k$]\label{def:R-k}
    Fix $\numBlocks$ as above. For $k \ge 1$, the \emph{AEB-robustness condition} $R_k$ is
    \[
        R_k(\lambda,(\prompt_i)_i,(T_i)_i,\hat{T}) := \1\biggl(\sum_i \numBlocks(\hat{T};\prompt_i,T_i) \ge k \biggr).
    \]
\end{definition}

We can easily compare AEB-robust watermarking schemes to one another.
All else equal, a watermarking scheme is more robust if we relax any of ``approximate'', ``enough'', or ``blocks.''
We may relax $\bumpeq$, considering more strings as approximations of a given block. We may reduce $k$, requiring fewer approximate blocks. Or we may relax $\isBlock$, treating shorter strings as blocks.

Notice that we can trivially view any watermarking scheme as block-by-block and AEB-robust.
Specifically, by defining entire generations as a single block and taking the 
equality relation, any complete scheme would be $R_1$ robust. However, this perspective 
is very weak. As discussed in \apref{sec:block-appendix}, \ifx\cols\two we can show that \fi many existing watermarking schemes are block-by-block 
and AEB-robust for non-trivial block functions and relations.

\subsection{Non-adaptive versus adaptive robustness}
\label{sec:non-adaptive-to-adaptive}
Existing watermarking schemes provide robustness guarantees that are
\emph{non-adaptive} (\defref{def-robustness-zero-nonadaptive}): 
they only hold for a single generation $T$ produced in response to any single prompt $\prompt$.
In fact, the zero-bit scheme from \cite{EPRINT:ChrGunZam23} is provably
$R_1$-robust against adaptive prompting, for the equality relation (see \apref{sec:CGZ}).
The primary theorems in \cite{EPRINT:ChrGunZam23} are proven for non-adaptive robustness;
we extend the proof to the adaptive setting via
Lemmas \ref{lem:CGZ-direct} and \ref{lem:direct-to-adapt}.
We conjecture that \cite{EPRINT:ChrGun24} is also adaptively robust, but we do not prove it here.

Non-adaptive robustness does not imply adaptive robustness, even for undetectable schemes.
The following counter-example was suggested by Miranda Christ and Sam Gunn, who pointed
out a technical flaw in an earlier version of this paper.
Given any sound, non-adaptively $R_1$-robust zero-bit watermarking scheme $\wat=(\Setup,\Wat,\Detect)$,
we can construct a new scheme $\wat'=(\Setup,\Wat',\Detect)$ where
\[
    \Wat_{\sk}'(\prompt) =    \begin{cases}
                            \Wat_{\sk}(\prompt) &\text{if } \Detect_{\sk}(\prompt)=\false\\
                            \GenModel(\prompt) &\text{if } \Detect_{\sk}(\prompt)=\true
                        \end{cases}
\]
$\wat'$ is non-adaptively robust but not adaptively robust (or even adaptively complete). In particular, an adversary can just query $\Wat'$ once to get a
watermarked text, and then feed the response back to $\Wat'$ to get an output that is not watermarked.

\section{Zero-bit to $L$-bit watermarks}\label{sec-embedding}

In this section, we construct $L$-bit watermarking schemes from block-by-block,
zero-bit watermarking schemes.
Namely, if $\wat'$ is a zero-bit watermarking scheme that is undetectable, sound
and $R_1$-robust, the resulting $\wat$ is an $L$-bit scheme that is undetectable,
sound, and $R_{k}$-robust, with $k = O(L\lambda)$.
Our construction is black-box with respect to the robustness condition of $\wat'$,
requiring only that it is $R_1$-robust for some underlying functions $\isBlock$
and $\bumpeq$. The resulting scheme is $R_k$-robust where $R$ is induced by the same $\isBlock$ and $\bumpeq$ as $R_1$.

Section~\ref{sec:l-bit:construction} gives the construction and Section~\ref{sec-l-bit-robust}
proves robustness. Section~\ref{sec:l-bit:on-k} analyzes the resulting value of $k$.
It suffices to set $k = O(L\lambda)$ to losslessly recover the message. For
long messages ($L = \Omega(\lambda)$), recovering a constant $(1-\delta)$ fraction
of the messages requires only $O(L)$ blocks.

\subsection{Constructing $L$-bit watermarks}
\label{sec:l-bit:construction}

Our construction is given in \figref{fig-embedding}. Let
$\wat' \allowbreak = (\Setup', \Wat', \Detect')$ be a zero-bit watermarking scheme.
We construct an $L$-bit scheme $\wat = (\Setup, \allowbreak \Wat, \Extract)$ as follows. 
The secret key $\sk$ consists of $2L$ zero-bit keys $k_{i,b}\gets \Setup'(1^\lambda)$
sampled independently, for all $i \in [L]$ and $b\in \bits$.
The keys $k_{i,0}$ and and $k_{i,1}$ are used to embed the $i$th bit of a
message $\msg \in \bits^L$.
To do so, the watermarked model $\Wat_\sk(\msg,\prompt)$ repeatedly samples a new
block of text by calling $\Wat'$ with key $k_{i,\msg[i]}$ for uniformly random
index $i \getsr [L]$.
What constitutes a block of text is determined by $\wat'$, which is assumed to
be a block-by-block scheme.
The generated block is added to the current generation and the process is repeated. 
The loop exits when the call to $\Wat'$ fails to generate a full block of text.
To extract the message from $\hat{T}$, the algorithm $\Extract_\sk(\hat{T})$
runs $\Detect'(\hat{T})$ algorithm using every key $k_{i,b}$. The $i$th bit of
extracted message is determined by which of the keys $k_{i,0}$ or $k_{i,1}$ a
(zero-bit) mark was detected.

\begin{figure}
\begin{center}
\fbox{
\begin{minipage}[t]{4cm}
    \begin{tabbing}
        12\=123\=123\=123\=123\=\kill
        \underline{$\MsgSetup(1^{\lambda})$}\\
        For $i=1,\ldots,L$:\\
        \> $k_{i,0} \getsr \MsgSetup'(1^{\lambda})$\\
        \> $k_{i,1} \getsr \MsgSetup'(1^{\lambda})$\\
        Return $\sk = (k_{i,0},k_{i,1})_{i=1}^{L}$
    \end{tabbing}

    \begin{tabbing}
        12\=123\=123\=123\=123\=\kill
        \underline{$\Extract_{\sk}(\hat{T})$}\\
        For $k_{i,b} \in \sk$:\\
        \> $z_{i,b} \gets \Detect'_{k_{i,b}}(\hat{T})$\\
        For $i=1,\ldots,L$:\\
        \> $\hat{m}_i \gets
            \begin{cases}
                \bot &\text{if } z_{i,0} = z_{i,1} = \false\\
                0 &\text{if } z_{i,0} = \true\\ %
                1 &\text{otherwise}
            \end{cases}$\\
        Return $\hat{m} = \hat{m}_1 \hat{m}_2 \dots \hat{m}_L$
    \end{tabbing}
\end{minipage}
}
\fbox{
\begin{minipage}[t]{4cm}
    \begin{tabbing}
        12\=123\=123\=123\=123\=\kill
        \underline{$\Encode_{\sk}(\msg,\prompt)$}\\
        $T \gets \epsilon$\\
        While $\true$:\\
        \> $k \getsr \{k_{i,b} : i \in [L],b = \msg[i]\}$\\
        \> $T' \getsr \Encode'_{k}(\prompt\|T)$\\
        \> $(\block_1, \block_2,\dots) \gets \Blocks(T';\prompt \|T)$\\
        \> If $\block_1 = \bot$:~~Exit loop\\
        \> $T \gets T \| \block_1$\\
        $T \gets T \| T'$\\
        Return $T$
    \end{tabbing}
\end{minipage}
}
\caption{Pseudocode for $L$-bit watermarking scheme
of $\Msg=(\MsgSetup,\allowbreak \Encode,\Extract)$ from a
block-by-block zero-bit watermarking scheme
$\wat'=(\Setup',\Wat',\Detect')$.}
\label{fig-embedding}
\end{center}
\end{figure}

We now show that our scheme is  undetectable and sound. Robustness (which implies completeness) is more involved, and is deferred to Section~\ref{sec-l-bit-robust}. In the following, let $\wat'$ be a zero-bit watermarking scheme and $\wat$ be the $L$-bit watermarking scheme described in Figure~\ref{fig-embedding}.

\begin{clm}[$\Msg$ is undetectable]\label{clm-msg-undetectable}
If $\GenModel$ is prefix-specifiable and $\Msg'$ is undetectable,
then $\wat$ is undetectable.
\end{clm}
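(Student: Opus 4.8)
The plan is to reduce undetectability of $\wat$ to undetectability of $\wat'$ via a hybrid argument combined with the prefix-specifiability of $\GenModel$. The key observation is that the encoding loop in $\Encode_\sk(\msg,\prompt)$ makes a sequence of calls to $\Encode'_{k}(\prompt\|T)$ where each key $k$ is sampled afresh and uniformly from the $L$ keys $\{k_{i,\msg[i]}\}_{i\in[L]}$, so from the adversary's view each individual call is just a call to the zero-bit oracle under a fresh independent key. I would first establish the ``ideal'' intermediate experiment: replace every internal call to $\Encode'_{k}$ in $\Encode_\sk$ by a call to $\GenModel$ on the same prefix-extended prompt $\prompt\|T$. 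In this experiment, by prefix-specifiability, concatenating the successive blocks returned by $\GenModel(\prompt\|T)$ and the final non-block suffix reconstructs exactly the distribution $\GenModel(\prompt)$ — indeed, by the prefix-specifiability property each call to $\GenModel(\prompt\|\block_1\|\cdots\|\block_{j})$ is distributed as $\GenModel(\prompt)$ conditioned on agreeing with $\block_1\|\cdots\|\block_j$ on its first $|\block_1\|\cdots\|\block_j|$ tokens, and the loop structure exactly re-samples a generation block-by-block. Hence the ideal experiment gives the adversary oracle $\GenModel'(\cdot,\cdot)$, matching the left-hand side of \defref{def-undetectability}.

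Next I would bound the distance between the real experiment (oracle $\Encode_\sk$) and the ideal experiment (oracle $\GenModel$ block-by-block) by a hybrid over the $L$ keys, or more simply by a direct reduction. Given an adversary $\A$ distinguishing $\Encode_\sk$ from $\GenModel'$ with non-negligible advantage, I build an adversary $\B$ against the undetectability of $\wat'$: $\B$ samples $2L-1$ of its own keys honestly, implicitly uses its challenge oracle in place of one designated key $k_{i^*, b^*}$ (where $(i^*,b^*)$ is guessed, or handled via a standard hybrid over all $2L$ keys so no guessing is needed), and simulates $\Encode_\sk$ for $\A$ by routing each internal $\Encode'$ call either to a locally-simulated $\Encode'_{k_{i,b}}$ or to the challenge oracle as appropriate. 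When the challenge oracle is the real $\Encode'$, $\B$ perfectly simulates the real experiment; when it is $\GenModel'$, $\B$ simulates the experiment in which that one key's calls are replaced by $\GenModel$. Since $\wat'$ is undetectable, each such swap changes $\A$'s acceptance probability by at most $\negl(\lambda)$; summing over the $2L = \poly(\lambda)$ hybrids (note $L$ is polynomial in $\lambda$, so $\B$ is efficient and the total loss is negligible) gives that the real and ideal experiments are indistinguishable, and chaining with the previous paragraph completes the proof.

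The main obstacle I anticipate is the bookkeeping needed to argue that the block-by-block re-sampling in the loop genuinely reproduces $\GenModel(\prompt)$ when all calls go to $\GenModel$ — i.e.\ that the loop's termination condition (``$\Encode'$ fails to produce a full block'') together with prefix-specifiability yields exactly the unparsed-remainder structure of \defref{def:blocks}, so that no probability mass is lost or distorted at the boundary between the last block and the suffix $\sigma$. A careful statement is that the sequence of prompts fed to the oracle is a martingale-like refinement $\prompt, \prompt\|\block_1, \prompt\|\block_1\|\block_2, \dots$, and at each step the conditional distribution of the fresh generation matches, by prefix-specifiability, $\GenModel(\prompt)$ conditioned on the observed prefix; one then invokes the uniqueness of the block parsing (noted right after \defref{def:block-by-block-embedding}) to conclude the concatenated output is distributed identically to a single draw of $\GenModel(\prompt)$. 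I would also remark that since $\isBlock$ and $\Blocks$ do not depend on $\sk$, the simulator $\B$ can compute the block boundaries itself, so the reduction goes through cleanly; and that the random index $i\getsr[L]$ chosen in each iteration only determines \emph{which} of the already-independent keys is used, so it introduces no additional correlation the reduction must handle.
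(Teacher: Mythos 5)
Your proposal is correct and follows essentially the same route as the paper's proof: use the undetectability of $\wat'$ to replace every internal call to $\Encode'_k(\prompt\|T)$ by $\GenModel(\prompt\|T)$, then invoke prefix-specifiability to argue that the block-by-block iterative generation reproduces exactly the distribution $\GenModel(\prompt)$. You merely spell out the standard details (the hybrid over the polynomially many keys and the block-parsing bookkeeping) that the paper's two-sentence argument leaves implicit.
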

\begin{proof}
By the undetectability of $\wat'$, one can replace every call to $\Wat'_k(Q\|T)$
with $\GenModel(Q\|T)$, with negligible effect on the adversary's output distribution.
The result is a modified version of $\wat$ that generates the response to
prompt $\prompt$ by iteratively calling $\GenModel(\prompt \| T)$ with $T$ the prefix
generated so far. Because $\GenModel$ is prefix-specifiable, this is the same distribution as $\GenModel(\prompt)$.
\end{proof}

\begin{clm}[$\Msg$ is sound]\label{clm-msg-sound}
    Let $\Msg'$ be a block-by-block zero-bit watermarking scheme. If $\Msg'$ is 
    sound then $\Msg$ is sound.
\end{clm}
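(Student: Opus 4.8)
The plan is to reduce the soundness of the $L$-bit scheme $\Msg$ directly to the soundness of the underlying zero-bit scheme $\Msg'$, by a union bound over the $2L$ component keys. Fix a polynomial $\poly$ and a string $\hat{T} \in \tokSet^*$ with $|\hat{T}| \le \poly(\lambda)$. Recall from Figure~\ref{fig-embedding} that $\Extract_{\sk}(\hat{T})$ computes, for each $i \in [L]$ and $b \in \bits$, the zero-bit detection bit $z_{i,b} \gets \Detect'_{k_{i,b}}(\hat{T})$, and then sets $\hat{m}_i = \bot$ precisely when $z_{i,0} = z_{i,1} = \false$. Hence the event $\Extract_{\sk}(\hat{T}) \ne \bot^L$ occurs if and only if there exists some index $i \in [L]$ with $z_{i,0} = \true$ or $z_{i,1} = \true$; in other words, if and only if $\Detect'_{k_{i,b}}(\hat{T}) = 1$ for at least one of the $2L$ keys $k_{i,b} \in \sk$.

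The key step is to observe that each $k_{i,b}$ is sampled independently as $k_{i,b} \getsr \MsgSetup'(1^{\lambda})$, exactly as in a fresh run of the zero-bit setup. Since $\Msg'$ is sound and $|\hat{T}| \le \poly(\lambda)$, for each fixed pair $(i,b)$ we have $\Pr_{k_{i,b}}[\Detect'_{k_{i,b}}(\hat{T}) = 1] < \negl(\lambda)$. A union bound over the $2L$ keys then gives
\[
    \Pr_{\sk \getsr \MsgSetup(1^\lambda)}[\Extract_{\sk}(\hat{T}) \ne \bot^L] \le \sum_{i=1}^{L}\sum_{b\in\bits} \Pr_{k_{i,b}}[\Detect'_{k_{i,b}}(\hat{T}) = 1] < 2L\cdot\negl(\lambda).
\]
Since $L$ is a fixed polynomial in $\lambda$ (the scheme is an $L$-bit scheme with $L$ polynomially bounded), $2L \cdot \negl(\lambda)$ is still negligible, which is exactly the soundness condition of Definition~\ref{def-soundness} for $\Msg$.

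There is no real obstacle here; the only point requiring a moment's care is making sure the quantifier order in Definition~\ref{def-soundness} is respected — soundness is ``for all $\hat{T}$, the probability over the key is negligible,'' so we argue for an arbitrary fixed $\hat{T}$ and never need to union-bound over strings. The block structure of $\Msg'$ plays no role in this argument (it is only listed as a hypothesis because $\Msg$ is built from a block-by-block scheme), and the reduction is entirely black-box in $\Detect'$.
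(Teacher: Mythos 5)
Your proof is correct and follows essentially the same route as the paper's: observe that $\Extract_{\sk}(\hat{T}) \ne \bot^L$ forces $\Detect'_{k_{i,b}}(\hat{T}) = 1$ for at least one of the $2L$ independently sampled zero-bit keys, then apply the soundness of $\Msg'$ and a union bound, using $L = \poly(\lambda)$ to conclude the resulting bound is negligible. Your remarks on quantifier order and the irrelevance of the block structure are accurate but do not change the argument.
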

\begin{proof}
The soundness of $\Msg$ follows immediately from the soundness of
$\Msg' = (\MsgSetup',\Encode',\Extract')$. Specifically,  observe that
for a given $\sk = (k_{i,0},k_{i,1})_{i=1}^{L}$,
$\Extract_{\sk}(T) \ne \bot^L$ only when\\
$\Extract'_{k_{i,b}} \ne \bot$
for some $i\in [L]$ and $b\in \bits$.
So, for every polynomial $p$, $\lambda$, and $T$ with $|T| \le p(\lambda)$,

\ifx\cols\one
\begin{align*}
    \Pr_{\sk \getsr \MsgSetup(1^{\lambda})}&[\Extract_{\sk}(T) \ne (\bot)^L]
    \le 2L \Pr_{k \gets \MsgSetup'(1^{\lambda})}[\Extract_{k}'(T) \ne \bot]
    < \negl(\lambda),
\end{align*}
\fi
\ifx\cols\two
\begin{align*}
    \Pr_{\sk \getsr \MsgSetup(1^{\lambda})}&[\Extract_{\sk}(T) \ne (\bot)^L]\\
    \le \; &2L \Pr_{k \gets \MsgSetup'(1^{\lambda})}[\Extract_{k}'(T) \ne \bot]
    < \negl(\lambda),
\end{align*}
\fi

via a union bound over every call to $\Extract'$.
\end{proof}

\subsection{Robustness of $L$-bit watermarks}
\label{sec-l-bit-robust}
In this section, we prove robustness of our scheme (Theorem~\ref{thm-msg-undetectable}).
Namely,  our $L$-bit watermarking scheme is $(\delta,R_k)$-robust for any
$k\ge k^*(L,\delta)$ as defined in \lemref{lm-bnb}. The parameter
$k^* \le L \ln L + L\lambda = O(L\lambda)$ for all $\delta$
(see \secref{sec:l-bit:on-k} for discussion). Note that $k^*$ is independent
of the total amount of generated text seen by the adversary!

We actually prove a stronger result, \lemref{lem-msg-robust}, of which
Theorems~\ref{thm-msg-undetectable} and~\ref{thm-undetectable} are both corollaries.
In the theorems and proofs, we use the notation from \defref{def-fp-feasible}
to denote feasible sets.

\begin{lemma}[$\Msg$ recovers lossy descendants]\label{lem-msg-robust}
    Let $L \in \N$, $0\le \delta < 1$, $k \ge k^*(L, \delta)$, and let
    $M\subseteq \bits^L$.
    Suppose $\wat'$ is a block-by-block zero-bit watermarking scheme that is 
    undetectable, sound, and $R_1$-robustly detectable.
    Let $\Msg = (\MsgSetup, \Wat, \Extract)$ be the construction
    from \figref{fig-embedding} using $\wat'$.

    Then for all efficient $\A$, the following event \fail occurs with negligible probability:

    \ifx\cols\one
    \begin{itemize}
        \item $\forall i, m_i \in M$, AND
        \quad \texttt{// only queried messages in M}
        \item $R_k\left(\lambda,(\prompt_i)_i, (T_i)_i, \hat{T}\right) = 1$, AND
        \quad \texttt{// robustness condition passes}
        \item $\hat{\msg} \not\in \Feasible_\delta(M)$
        \quad \texttt{// extracted message unrelated to M}
    \end{itemize}
    \fi
    \ifx\cols\two
    \begin{itemize}
        \item $\forall i, m_i \in M$, AND\\
        \quad \texttt{// only queried messages in M}
        \item $R_k\left(\lambda,(\prompt_i)_i, (T_i)_i, \hat{T}\right) = 1$, AND\\
        \quad \texttt{// robustness condition passes}
        \item $\hat{\msg} \not\in \Feasible_\delta(M)$\\
        \quad \texttt{// extracted message unrelated to M}
    \end{itemize}
    \fi
    in the probability experiment defined by
    \begin{itemize}
        \item $\sk \getsr \MsgSetup(1^{\lambda})$
        \item $\hat{T} \gets \A^{\Wat_\sk(\cdot,\cdot)}(1^\lambda)$, denoting by 
        $(m_i,\prompt_i)_i$ and $(T_i)_i$ the sequence of inputs and outputs of the oracle
        \item $\hat{\msg} \gets \Extract_\sk(\hat{T})$.
    \end{itemize}
\end{lemma}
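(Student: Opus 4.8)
The plan is to reduce the event $\fail$ to a combination of two bad events: a "balls-and-bins" event in which too few of the $L$ message-indices are covered by blocks in the adversary's generations, and a "broken block" event in which some index $i$ is well-covered by a block generated under key $k_{i,m_i[i]}$ but the corresponding zero-bit mark fails to be detected. I will show the first event is ruled out by $R_k$ together with Lemma~\ref{lm-bnb} (with $k \ge k^*(L,\delta)$), and the second by the $R_1$-robust detectability of $\wat'$. Formally, I first invoke undetectability of $\wat'$ (Claim~\ref{clm-msg-undetectable}) to move to a hybrid in which this causes only a negligible change, so that all reasoning about blocks is about honestly-generated-then-edited text; then I argue in this hybrid.

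The key observation is how blocks of $\Encode_\sk(m_i,\prompt_i)$ are attributed to keys. By construction, each minimal block $\block$ appended to $T_i$ inside the $\Encode$ loop was produced by a call to $\Encode'_{k}$ for $k = k_{j, m_i[j]}$ with $j \getsr [L]$ chosen uniformly and independently; moreover, since $\GenModel$ is prefix-specifiable and $\wat'$ is block-by-block, $\block$ is a genuine block of a $\Wat'_{k}$-generation with respect to the appropriate prefix-prompt. Now suppose $R_k(\lambda,(\prompt_i)_i,(T_i)_i,\hat T)=1$: there are $k$ distinct blocks $\block_1,\dots,\block_k$ across the $(T_i)$, each approximated by a substring of $\hat T$. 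Assign to each such block the "color" $j \in [L]$ of the index whose key generated it. These colors were drawn i.i.d.\ uniformly from $[L]$ (over the randomness of $\Encode$), independently of which blocks the adversary ultimately chooses to approximate --- here I need to be a little careful, since the adversary sees the text and hence correlates its choice of $\hat T$ with the realized colors; but the point is that $\ge k$ distinct blocks are approximated, and each of the $L$ "bins" (indices) receives a ball for every approximated block of that color, so the number of non-empty bins is at least the number of distinct colors hit among any $k$ balls. By Lemma~\ref{lm-bnb}, throwing $k \ge k^*(L,\delta)$ balls into $L$ bins leaves fewer than $\delta L$ bins empty except with probability $e^{-\lambda}$. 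Thus, except with negligible probability, there are fewer than $\delta L$ indices $j$ such that \emph{no} approximated block of $\hat T$ has color $j$.

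For every index $j$ that \emph{is} hit --- i.e.\ some substring $\hat\block$ of $\hat T$ approximates a block $\block$ generated under $k_{j,m_i[j]}$ for some $i$ with $m_i \in M$ --- I claim $z_{j,m_i[j]} = \Detect'_{k_{j,m_i[j]}}(\hat T) = \true$ except with negligible probability. This is exactly the $R_1$-robust detectability of $\wat'$: I build an adversary $\B_j$ against $\wat'$ that simulates $\A$'s entire interaction with $\Wat_\sk$, embedding its own oracle $\Wat'_{k_{j,\cdot}}$ --- more precisely, for each of the two bits $b$, a separate reduction embedding the challenge key as $k_{j,b}$ and sampling the other $2L-1$ keys itself --- so that the block $\block$ generated under the challenge key is one of $\B_j$'s oracle outputs, $\hat T$ is its final output, and $\block$ is approximated by a substring of $\hat T$, i.e.\ $R_1$ holds; hence $\Detect'_{k_{j,b}}(\hat T) = 1$ with overwhelming probability. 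A union bound over $j \in [L]$ and $b \in \bits$ keeps this negligible. Combining: except with negligible probability, for all but fewer than $\delta L$ indices $j$, the bit $\hat m_j$ extracted by $\Extract$ satisfies $\hat m_j = m_{i(j)}[j] \in \{X_u[j] : u\}$ for some queried message, and the remaining $< \delta L$ indices are set to $\bot$ or to one of the two bit values --- in either case $\hat m_j \in \{0,1,\bot\}$ with no constraint. Reading off the definition of $\Feasible_\delta(M) = B_\delta(\Feasible(M))$: at each hit index $\hat m_j$ matches some member of $M$ (so lies in $\Feasible(M)$ coordinatewise), and there are at most $\lfloor \delta L\rfloor$ non-hit indices, so $\hat m \in B_\delta(\Feasible(M)) = \Feasible_\delta(M)$, contradicting the third clause of $\fail$.

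The main obstacle I anticipate is the independence argument in the balls-and-bins step: the adversary's choice of $\hat T$ (and therefore which blocks get approximated) may depend on the realized colors of the blocks it has seen, so I cannot simply say "$k$ uniform balls." The clean way around this is to note that the color of each block is information-theoretically hidden from $\A$ given only the text (undetectability already had us replace $\Wat'$ outputs by $\GenModel$ outputs, which are independent of the keys and hence of the colors), so the colors of \emph{all} blocks ever generated are jointly uniform and independent of $\A$'s view; then any fixed set of $\ge k$ distinct blocks chosen as a (possibly view-dependent) function of that view still has colors that are uniform conditioned on the view, and the count of distinct colors among them stochastically dominates the occupancy of $k$ i.i.d.\ uniform balls. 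Making this last domination step rigorous --- ideally by exhibiting an explicit coupling, or by a careful martingale/stopping-time argument over the blocks in generation order --- is the technical heart of the proof; everything else is a hybrid argument plus the two black-box reductions above.
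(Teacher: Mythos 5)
Your high-level plan (balls-and-bins for coverage of the $L$ indices, plus a reduction to $R_1$-robust detection for the covered indices) matches the paper's strategy, but there is a genuine gap: you never use the soundness hypothesis, and the final containment $\hat{\msg} \in \Feasible_\delta(M)$ does not follow from the two events you control. Recall that $\Feasible_\delta(M) = B_\delta(\Feasible(M))$ and that $B_\delta$ permits only erasures: every non-$\bot$ coordinate of $\hat{\msg}$ must agree with some $m \in M$ at that index. Your last step asserts that at the (fewer than $\delta L$) non-hit indices the extracted symbol may be ``$\bot$ or one of the two bit values \dots with no constraint,'' and then concludes membership in $B_\delta(\Feasible(M))$. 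That inference is invalid: a single coordinate $\hat{\msg}[j]=b$ where every $m\in M$ has $m[j]=1-b$ already places $\hat{\msg}$ outside $\Feasible_\delta(M)$, regardless of the erasure budget. The same problem appears at hit indices, because $\Extract$ outputs $0$ whenever $z_{j,0}=\true$; even when the block's key $k_{j,m_i[j]}$ is detected you must rule out a spurious detection under the opposite key when no queried message carries that bit. Ruling out exactly these events is what soundness is for: the keys $k_{j,b}$ with $b\neq m[j]$ for all $m\in M$ are never used by the encoding oracle, so $\A$'s output is independent of them, and soundness (plus a union bound over at most $2L$ keys) gives $\Detect'_{k_{j,b}}(\hat{T})=\false$ except with negligible probability. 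This is the paper's first hybrid; without it your argument proves only that the hit coordinates are non-$\bot$, not that the extracted word is feasible.

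A second, more structural issue is the ordering of your steps. You invoke undetectability first so that the block ``colors'' are uniform and independent of the view, but your reduction $\B_j$ to $R_1$-robustness needs the approximated block to have been produced by the real $\Wat'_{k_{j,b}}$ oracle; in the undetectability hybrid all generations come from $\GenModel$, so ``hit index $\Rightarrow$ detection'' is a claim about a different experiment than the one in which you count colors, and you yourself leave the coupling/stochastic-domination step open as ``the technical heart.'' The paper avoids this entirely by a different hybrid order: first zero out the unused keys (soundness), then replace each remaining call $\Detect'_{k_{i,b}}(\hat{T})$ by the \emph{key-free} condition $R_1(\lambda,\calQ_{i,b},\calR_{i,b},\hat{T})$ evaluated on the recorded queries and generations for that key --- the $R_1$-robustness reduction shows this substitution can only introduce additional $\bot$'s, so the failure probability does not drop --- and only then switch $\Wat'$ to $\GenModel$ via undetectability. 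In that final game extraction depends on the transcript alone, so the sampled indices are genuinely uniform and independent of $\A$'s view, and $R_k$ together with \lemref{lm-bnb} gives the $\lfloor\delta L\rfloor$ bound directly, with no domination argument needed. Reordering your proof along these lines, and adding the soundness step, would close the gap.
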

\begin{proof}
\ifx\cols\one
We will prove \lemref{lem-msg-robust} via three hybrid transitions, moving from
the experiment defined in the lemma statement to one
where all of the calls to the underlying zero-bit watermarking algorithm
$\wat'$ are replaced by calls to the robustness condition $R_1$ and to
$\GenModel$. In Hybrid 1, we use the soundness of $\wat'$ to remove all calls to 
$\Detect'$ that use keys unrelated to any messages $\msg \in M$. In Hybrid 2, we 
use the $R_1$-robustness of $\wat'$
to remove the calls to $\Detect'$ that use the rest of the keys, this time
replacing them with calls to $R_1$. Finally, in Hybrid 3 we rely on the
undetectability of $\wat'$ to replace all of the calls
to $\Wat'$ with calls to $\GenModel$. We then use the definition of $R_k$-robustness and our choice of $k$ to complete the proof.
\fi

Let $p_\fail$ be the probability of the event $\fail$ in the experiment defined in
the lemma statement. 
\ifx\cols\two
We will prove \lemref{lem-msg-robust} via three hybrid transitions.
\fi
Throughout this proof, we condition on the events 
$(\forall i, \msg_i \in M)$ and $R_k(\lambda,(Q_i)_i, (T_i)_i, \hat{T}) = 1$, both 
of which are efficiently checkable by $\A$ and implied by \fail.

\newcommand{\Mkeys}{\mathcal{K}_M}
\newcommand{\MkeysC}{\overline{\mathcal{K}}_M}

\paragraph{Hybrid 1 (Soundness)}
Consider the set of keys $\Mkeys = \{k_{i,m[i]} : i \in [L], \msg \in M\}$, and its complement
$\MkeysC$. In Hybrid 1, we modify $\Extract_\sk$ to remove every call to
$\Detect'_{k_{i,b}}(\hat{T})$ for $k \in \MkeysC$, replacing it with $z_{i,b} \gets \false$.
Let $\hat{\msg}_1$ be the result of $\Extract_\sk(\hat{T})$ in Hybrid 1.
By construction, if $\hat{\msg}_1[i] \neq \bot$, then
$\hat{\msg}_1[i] = m[i]$ for some $m\in M$. In other words,
although there may be many $\bot$ entries, non-$\bot$ entries of $\hat{\msg}_1$
must agree with some element
of $M$, so $\hat{\msg}_1 \in \Feasible_{\gamma_1}(M)$ for some $\gamma_1\le 1$.

Observe that the keys
in $\MkeysC$ are never used by $\Wat_\sk$. %
Hence the view of $\A$ --- and in particular its output $\hat{T}$
--- is independent of the keys $\MkeysC$. By the soundness of $\wat'$, every
call to $\Detect'_k(\hat{T})$ for $k \in \MkeysC$ returns $0$ with high probability.
Notice that $\Extract_{\sk}$ does not change its behavior whether a call
to $\Detect'$ returns $0$ or is removed entirely (the only difference between the real game and Hybrid 1), since each $z_{i,b}$ is initialized to $\false$ in the hybrid.

Therefore, the output distribution of $\Extract_\sk$ in Hybrid 1 and the real execution
are statistically close (conditioned on $\forall i, \msg_i \in M$). Let $p_1$ be
the probability of the
event $\fail$ in Hybrid 1. As $\Extract_\sk$ changed only negligibly between
the hybrids, we have that
\[
    |p_1 - p_\fail| \le \negl(\lambda).
\]

\begin{figure}
\begin{center}
\fbox{
\begin{minipage}[t]{4cm}
    \begin{tabbing}
        12\=123\=123\=123\=123\=\kill
        \underline{$\widetilde{\Encode}_{\sk}(\msg,\prompt)$}\\
        $T \gets \epsilon$\\
        While $\true$:\\
        \> $i \getsr [L]$ ; $b \gets \msg[i]$\\
        \> \codeBox{$T' \getsr \Encode'_{k_{i,b}}(\prompt\|T)$} \texttt{// in Hybrid 2}\\
        \> \codeBox{$T' \getsr \GenModel(\prompt\|T)$} \texttt{// in Hybrid 3}\\
        \> $\calQ_{i,b} \gets (\calQ_{i,b}, \prompt\| T)$ \\
        \> $\calR_{i,b} \gets (\calR_{i,b}, T')$ \\
        \> $(\block_1, \block_2,\dots) \gets \Blocks(T';\prompt \|T)$\\
        \> If $\block_1 = \bot$:~~Exit loop\\
        \> $T \gets T \| \block_1$\\
        $T \gets T \| T'$\\
        Return $T$
    \end{tabbing}
\end{minipage}
}
\fbox{
\begin{minipage}[t]{4cm}
    \begin{tabbing}
        12\=123\=123\=123\=123\=\kill
        \underline{$\widetilde{\Extract}_{\sk}(T)$}\\
        For $k_{i,b}\in \Mkeys$:\\
        \> $z_{i,b} \gets R_1(\lambda,\calQ_{i,b},\calR_{i,b},T)$\\
        For $k_{i,b} \in \MkeysC$:\\
        \> $z_{i,b} \gets \false$\\
        For $i=1,\ldots,L$\\
        \> $\hat{m}_i \gets
            \begin{cases}
                \bot &\text{if } z_{i,0} = z_{i,1} = \false\\
                0 &\text{if } z_{i,0} = \true\\ %
                1 &\text{otherwise }\\
            \end{cases}$\\
        Return $\hat{m} = \hat{m}_1 \hat{m}_2 \dots \hat{m}_L$
    \end{tabbing}
\end{minipage}
}
\caption{Intermediate versions of $\Encode$ and $\Extract$,
used to produce outputs that are independent of the keys,
used in the proof of \lemref{lem-msg-robust}.
The boxed lines are used only in the indicated hybrid. At setup, we additionally
initialize $\calQ_{i,b}, \calR_{i,b}$ to $(~)$ for all $i,b$.}
\label{fig-msg-hybrids}
\end{center}
\end{figure}

\paragraph{Hybrid 2 (Robustness)}
The pseudocode for Hybrid 2 is given in \figref{fig-msg-hybrids}. 
In this hybrid, we remove the remaining calls to $\Detect'_{k_{i,b}}\allowbreak (\hat{T})$ for
$k \in \Mkeys$, replacing each with a call to the robustness condition $R_1$ (which
requires the relevant set of queries and responses as input).
So, for each $k_{i,b}$, the corresponding call to  $R_1$ is run on $\calQ_{i,b}$ and
$\calR_{i,b}$, the sequences of \emph{queries} to and \emph{generations} from
$\Wat'_{k_{i,b}}$ in $\widetilde{\Wat}_\sk$.
The code of $\widetilde{\Wat}_\sk$ is edited to track $\calQ_{i,b}$ and $\calR_{i,b}$.

Let $\hat{\msg}_2$ be the result of $\widetilde{\Extract}_\sk(\hat{T})$ in Hybrid 2.
As in Hybrid 1, $\hat{\msg}_2 \in F_{\gamma_2}(M)$ for some $\gamma_2\le1$.
Moreover, $\gamma_2 \ge \gamma_1$, meaning that $\hat{\msg}_2$ can only contain more
$\bot$-entries than $\hat{\msg}_1$.
This is because $R_1(\lambda, \calQ_{i,b}, \calR_{i,b}, \hat{T}) = 1$
implies that $\Detect'_{k_{i,b}}(\hat{T}) = 1$ with high probability
(but not the converse), because $\wat'$ is $R_1$-robust.\footnote{Otherwise,
we construct adversary $\A'$ breaking the $R_1$-robustness of
$\Wat'_{k_{i,b}}$, as follows:
$\A'$ runs $\A$, internally simulating its $\widetilde{\Wat}$ oracle,
only querying its own $\Wat'$ oracle to simulate calls to $\Wat'_{k_{i,b}}$.
$\A'$ outputs the string $\hat{T}$ returned by $\A$.
Notice that this critically
uses the adaptivity of our robustness definition.}

Let $p_2$ be the probability of the event \fail in Hybrid 2. For
$\hat{\msg}_1 \in F_{\gamma_1}(M)$, $\hat{\msg}_2 \in F_{\gamma_2}(M)$, and
$\gamma_2 \ge \gamma_1$, we have
\ifx\cols\one
\begin{align*}
    \Pr[\hat{\msg}_2 \not \in F_\delta(M)]=\Pr[\gamma_2 > \delta]
    \ge \Pr[\gamma_1 > \delta]
    = \Pr[\hat{\msg}_1 \not \in F_\delta(M)].
\end{align*}
\fi
\ifx\cols\two
\begin{align*}
    \Pr[\hat{\msg}_2 \not \in F_\delta(M)]=\Pr[\gamma_2 > \delta]
    &\ge \Pr[\gamma_1 > \delta] \\
    &= \Pr[\hat{\msg}_1 \not \in F_\delta(M)].
\end{align*}
\fi
Hence, $$p_2 \ge p_1 - \negl(\lambda).$$

\paragraph{Hybrid 3 (Undetectability)}
The pseudocode for Hybrid 3 is given in \figref{fig-msg-hybrids}. 
In Hybrid 3, we remove all use of the watermarking scheme $\Msg'$ 
by replacing $\Wat'$ with $\GenModel$.
In these functions, we no longer sample
generations from $\Encode'$ and instead use $\GenModel$.
Observe that in Hybrid 3, the adversary's view is independent of the
indices $i$ sampled by $\widetilde{\Encode}$.

Let $p_3$ be the probability of \fail in Hybrid 3. By undetectability of $\Msg'$
$$|p_3 - p_2| \le \negl(\lambda).$$

\paragraph{At most $\lfloor\delta L \rfloor$ erasures}
Recall we are conditioning on \ifx\cols\two\\\fi
$R_k(\lambda, (\prompt_j)_j,(T_j)_j,\hat{T}) = 1$.
By definition, there are $k' \ge k$ substrings $\hat{\tau}\in \hat{T}$ that are
$\bumpeq$-close to disjoint blocks $\block \in \cup_j \Blocks(T_j;\prompt_j)$. Each
such block $\block$
has an associated index $i_\block$: the index that was sampled by
$\widetilde{\Encode}_\sk$ in the iteration that generated $\block$.
By construction, a bit is extracted at each of these indices: $\hat{\msg}[i_\block] \neq \bot$.

The indices $i_\block$ are uniform over $[L]$ and independent of one another.
Hence, the number $N_\bot$ of indices $j$ where $\hat{\msg}[j] = \bot$ is distributed as
the number of empty bins remaining after throwing $k'$ balls into $L$ bins uniformly at random.
By \lemref{lm-bnb} and the hypothesis that $k' \ge k \ge k^*(L, \delta)$, we have
that $N_\bot \le \lfloor \delta L\rfloor$ except with probability $e^{-\lambda}$.
We know that $\hat{\msg}\in F_1(M)$ because
$\hat{\msg}_1\in F_{\gamma_1}(M)$ (with high probability) and
the two games are negligibly different.
Combining this fact with our argument about $N_{\bot}$ shows that
we have  $\hat{\msg} \in F_\delta(M)$, with high probability. Hence
\[
    p_3 \le \negl(\lambda) \implies p_\fail \le \negl(\lambda)\qedhere
\]
\end{proof}

\begin{theorem}[$\Msg$ is robust]\label{thm-msg-undetectable}
    Suppose $\Msg'$ is a block-by-block zero-bit watermarking scheme that is 
    undetectable, sound, and $R_1$-robustly detectable.
    Then the $\Msg$ construction from
    \figref{fig-embedding} is an $L$-bit watermarking scheme that is
    sound, undetectable, and $(\delta,R_k)$-robustly extractable for $k\ge k^*(L,\delta)$.
\end{theorem}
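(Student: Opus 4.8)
The theorem is an immediate corollary of three prior results in the excerpt, so the plan is essentially to collect them. First I would invoke Claim~\ref{clm-msg-undetectable} to get undetectability: it requires only that $\GenModel$ is prefix-specifiable (an ambient assumption) and that $\wat'$ is undetectable, which is part of the hypothesis. Second, I would invoke Claim~\ref{clm-msg-sound} for soundness, which needs only that $\wat'$ is a sound block-by-block zero-bit scheme, again part of the hypothesis. Third, for $(\delta, R_k)$-robust extraction, I would apply Lemma~\ref{lem-msg-robust} with the particular choice $M = \{\msg\}$, a single message. With a singleton $M$, the feasible set $\Feasible(M)$ collapses to $\{\msg\}$ itself, so $\Feasible_\delta(M) = B_\delta(\msg)$ by Definition~\ref{def-fp-feasible} and the definition of the $\delta$-erasure ball. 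Thus the \fail event of Lemma~\ref{lem-msg-robust} becomes exactly the \fail event of Definition~\ref{def-robustness} (the ``only queried messages in $M$'' clause is vacuous when there is only one message, matching the ``for all messages $\msg$'' quantifier of the robustness definition), and the lemma tells us it occurs with negligible probability for any $k \ge k^*(L,\delta)$.

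The only slightly non-trivial bookkeeping step is checking that instantiating Lemma~\ref{lem-msg-robust} at $M=\{\msg\}$ genuinely reproduces Definition~\ref{def-robustness}: the oracle in both experiments is $\Wat_\sk(\msg,\cdot)$ (in the lemma the adversary may in principle vary the message, but restricting to $M=\{\msg\}$ pins it down), the robustness condition $R_k$ is the same AEB-robustness condition in both, and the conclusion $\hat{\msg} \notin \Feasible_\delta(\{\msg\}) = B_\delta(\msg)$ is literally the ``mark is corrupted'' clause. I would also restate the bound $k^*(L,\delta) \le L\ln L + L\lambda = O(L\lambda)$ from Lemma~\ref{lm-bnb} to justify the ``$k = O(L\lambda)$'' remark, though the theorem as stated only needs $k \ge k^*(L,\delta)$.

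There is no real obstacle here; all the work lives in Lemma~\ref{lem-msg-robust}, whose proof (the three-hybrid argument of soundness $\to$ $R_1$-robustness $\to$ undetectability, followed by the balls-and-bins erasure count) has already been carried out in the excerpt. If I were asked to push harder, the one place warranting a sentence of care is that the robustness condition and the block/approximation functions $\isBlock, \bumpeq$ used to define $R_k$ for $\wat$ are inherited unchanged from those defining $R_1$ for $\wat'$ — this is what makes the reduction ``black-box with respect to the robustness condition,'' and it is exactly the structure Lemma~\ref{lem-msg-robust} relies on.
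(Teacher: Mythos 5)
Your proposal matches the paper's own proof essentially verbatim: soundness and undetectability are cited from Claims~\ref{clm-msg-sound} and~\ref{clm-msg-undetectable}, and robustness is obtained as an immediate corollary of Lemma~\ref{lem-msg-robust} by taking $M$ to be the singleton $\{\msg\}$, so that $\Feasible_\delta(M)=B_\delta(\msg)$ and the \fail event coincides with that of Definition~\ref{def-robustness}. Your extra bookkeeping about the oracle restriction and the inherited $\isBlock,\bumpeq$ is correct and consistent with the paper.
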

\begin{proof}
    By Claims~\ref{clm-msg-sound} and~\ref{clm-msg-undetectable}, $\Msg$ is sound and undetectable.
    Robustness is an an immediate corollary of \lemref{lem-msg-robust}, 
    fixing the subset $M\subseteq \bits^L$ to be a singleton set. If an adversary
    can only query a fixed message $\msg\in M$, then the event bounded
    in \lemref{lem-msg-robust} is exactly the definition of
    $(\delta,R_k)$-robustness. 
\end{proof}

\subsection{How good is $R_{k^*}$?}
\label{sec:l-bit:on-k}

Theorem~\ref{thm-msg-undetectable} states that our scheme is $(\delta,R_{k^*})$-robust where:
\[
    k^*(L,\delta) = \min\left\{L\cdot (\ln L + \lambda); \quad L\cdot \ln\left(\frac{1}{\delta - \sqrt{\frac{\lambda + \ln 2}{2L}}} \right) \right\}
\]
This means that our watermarking scheme embeds $L$-bit messages into model-generated
text $T$ such that at least a $(1-\delta)$-fraction of the embedded message can be recovered
from any text $\hat{T}$ containing at least $k^*(L,\delta)$ approximate blocks from $T$.
We remark that as our construction doesn't depend on $\delta$, it satisfies
$(\delta,R_{k^*})$-robustness for all $0\le \delta < 1$ simultaneously.

We'd like $k$ to be as small as possible, for two reasons. First, because smaller $k$ means that $\hat{T}$ can be farther from $T$ --- guaranteeing stronger robustness. Second, because $k$ blocks are required to extract the mark even in the
absence of an adversary! Language models have variable-length outputs, and too-short
$T$ are not marked. Smaller $k$ means that more of model's outputs are marked. But we
cannot make $k$ too small without a very different approach. Any scheme that embeds each
bit into a distinct block of text requires $k\ge L(1-\delta) = O(L)$.

So how does $k^*$  compare to the $L(1-\delta)$ lower bound?
We consider two parameter regimes.
For $L < (\lambda + \ln 2)/2$, we have $k^* = L(\ln L + \lambda)$. 
As $L = \poly(\lambda)$, we have that $k^* = O(L\lambda)$ and is independent of $\delta$.
For $L > (\lambda + \ln 2)/2$, the minimum is achieved at $\delta > 0$. Taking
$c = \sqrt{2L/(\lambda + \ln 2)}$ and $\delta >1/c$, we have that
$k^* \le L \ln \left(\frac{1}{\delta -1/c}\right) = O(L)$.
(Even better, $k^* < L$ for $\delta >1/c + 1/e$.)

A possible approach for further improving the parameter $k$ is to use error correcting codes.
The idea is simple. Let $\ECC$ be an error correcting code  with block-length $L' > L$.
To losslessly embed a mark $\msg\in \bits^L$, embed the codeword $w = \ECC(\msg)$
using an $L'$-bit watermarking scheme.
If $\hat{w}$ can be extracted with at most
$\delta$ fraction of erasures, we can decode and recover $\msg$ in its entirety.
The result would be a lossless $R_{k'}$-robust scheme for
$k' = k^*(L',\delta)$. If $L' = o(L\lambda)$, this yields an asymptotic
improvement $k'= o(k) = o(L\lambda)$.

\paragraph{Comparison to \cite{EPRINT:ChrGun24}}
While most prior work constructs zero-bit watermarking, Christ and Gunn \cite{EPRINT:ChrGun24}
build both zero-bit and (lossless) $L$-bit watermarking schemes, from zero/$L$-bit
pseudorandom error-correcting codes respectively.
The $L$-bit scheme of \cite{EPRINT:ChrGun24} and our $L$-bit scheme instantiated with the zero-bit scheme of \cite{EPRINT:ChrGun24} have incomparable robustness guarantees.
Roughly speaking, their zero-bit scheme is $R_1$-robust for 
blocks that require
$O(\lambda)$ empirical entropy. Their $L$-bit scheme is also $R_1$-robust, but with ``longer''
blocks requiring $O(L+\lambda)$ empirical entropy.
Importantly this block of text has to be a single \emph{contiguous} block that was produced
in \emph{one generation}.

In contrast, our $L$-bit scheme is $R_k$-robust, requiring $k = O(L\lambda)$ of the
original, zero-bit blocks.
While it requires more empirical entropy overall, $R_k$-robustness allows these $k$ blocks to appear \emph{anywhere} in all of the generations ever seen by the adversary.

\section{Building multi-user watermarks}\label{sec-multiuser}

In this section, we construct a multi-user watermarking scheme 
using $L$-bit watermarking schemes and robust fingerprinting 
codes as black boxes. When instantiated
with our $L$-bit scheme from \secref{sec-embedding}, the result is robust against
colluding users as well.
To our knowledge, ours is the first watermarking scheme
that is secure against any sort of collusion. The black-box nature of our construction 
ensures that we can instantiate our multi-user schemes with improved parameters whenever 
the underlying watermarking schemes or fingerprinting codes improve.

\secref{sec:constructing-multi} gives the construction of our multi-user scheme
and \ifx\cols\one proves \fi\ifx\cols\two sketches proofs of \fi its undetectability, consistency, and soundness, using the undetectability 
and soundness of the underlying $L$-bit scheme. In \secref{sec-undetectable} we 
prove robustness of our multi-user scheme, which follows from the robustness of the
fingerprinting code and our own $L$-bit scheme (\thmref{lem-msg-robust}). Finally, in
\secref{sec:undetectable-bonus} we analyze the key features of our multi-user scheme
and compare our approach to other possible constructions.

\subsection{Constructing multi-user watermarks}\label{sec:constructing-multi}

\begin{figure}[h!]
\begin{center}
\fbox{
\begin{minipage}[t]{4cm}
    \begin{tabbing}
        123\=123\=123\=123\=123\=\kill
        \underline{$\Setup(1^{\lambda})$}\\
        $(X, \FPKey) \getsr \FPGen'(1^\lambda, \sf{pp})$\\
        $\sk \getsr \MsgSetup'(1^{\lambda})$\\
        Return $(X,\FPKey,\sk)$
        \\
    \end{tabbing}
    \begin{tabbing}
        123\=123\=123\=123\=123\=\kill
        \underline{$\Wat_{(X,\FPKey,\sk)}(u,\prompt)$}\\
        $T \getsr \Encode'_{\sk}(X_u,\prompt)$\\
        Return $T$
    \end{tabbing}
\end{minipage}
}
\fbox{
\begin{minipage}[t]{4cm}
    \begin{tabbing}
        123\=123\=123\=123\=123\=\kill
        \underline{$\Detect_{(X,\FPKey,\sk)}(\hT)$}\\
        $\hat{\msg} \gets \Extract'_{\sk}(\hT)$\\
        Return $\1(\exists i~\hat{\msg}_i \ne \bot)$
    \end{tabbing}
    \begin{tabbing}
        123\=123\=123\=123\=123\=\kill
        \underline{$\Trace_{(X,\FPKey,\sk)}(\hT)$}\\
        $\hat{\msg} \gets \Extract'_{\sk}(\hT)$\\
        If $\hat{\msg} = \bot^L$:\\
        \> Return $\emptyset$\\
        $C \gets \FPTrace'(\hat{\msg}, \FPKey)$\\
        Return $C$
    \end{tabbing}
\end{minipage}
}
\caption{Pseudocode for construction of
$\wat=(\Setup,\Wat,\Detect,\allowbreak\Trace)$
from fingerprinting code $\FP=(\FPGen',\FPTrace')$
and $L$-bit message embedding scheme $\Msg' = (\MsgSetup', \Encode', \Extract')$,
e.g. \figref{fig-embedding}. The construction is defined for any public parameters $\sf{pp} = (n, c, \delta)$, where $n, c > 1,$ and $0 \le \delta < 1$.}
\label{fig-multiuser}
\end{center}
\end{figure}

Our construction is given in \figref{fig-multiuser}. Let
$\Msg' \allowbreak = (\MsgSetup', \Encode',\allowbreak \Extract')$ be an $L$-bit watermarking scheme.
The multi-user watermarking scheme $\wat = (\Setup,\Wat,\Detect,\Trace)$ is 
constructed as follows. The secret key output by $\Setup$ consists of the 
fingerprinting codewords $X$ and the tracing key $\FPKey$, as well as the secret 
key $\sk$ from the $L$-bit scheme. In response to any prompt $\prompt$ from a user $u$,
$\Wat$ will use the $L$-bit watermarking algorithm to watermark the user's 
fingerprinting codeword $X_u$ into the response. At detection time, $\Detect$
will use the $L$-bit scheme to extract a message $\hat{\msg}$ from $\hT$ and return $1$
as long as $\hat{\msg} \neq \bot^L$. To trace a user, $\Trace$ will run the fingerprinting
code's tracing algorithm on $\hat{\msg}$ and return the set of accused users.

\ifx\cols\one
We now show that our multi-user scheme is consistent, undetectable, and sound. Robustness
to collusions (and hence completeness) is deferred to \secref{sec-undetectable}, because
it requires instantiating our multi-user scheme with our $L$-bit watermarking scheme.

\begin{clm}[$\wat$ is consistent]\label{clm-consistent}
    Let $L, n, c > 1$ be integers and $0 \le \delta < 1$.
    Let $\Msg'$ be an $L$-bit watermarking scheme and
    $\FP$ be a fingerprinting code.
    Then the $\wat$ construction from
    \figref{fig-multiuser} is a consistent multi-user watermarking scheme.
\end{clm}
\begin{proof}
The algorithm $\Detect_{(X,\FPKey,\sk)}(T)=0$ only if
$\Extract_{\sk}(T)$ returns $\bot^L$. The check in $\Trace$
will guarantee that in such a case,
$\Trace_{{(X,\FPKey,\sk)}}(T) = \emptyset$.\footnote{
    Note that we may instead want to check if $|\{i : s_i = \bot\}| > \delta L$,
    because in the formal fingerprinting games, $\FP$ is allowed to output
    anything on strings with this many $\bot$ entries.
    However, for non-contrived schemes, this is unnecessary.
}
\end{proof}

Whenever our scheme is instantiated with an undetectable $L$-bit watermarking scheme, the
overall output will also be undetectable, because $\Wat$ just returns the value
from the underlying $\Wat$ query. The proof is omitted, as it is essentially identical to the proof of \clmref{clm-msg-undetectable}. Together with \clmref{clm-msg-undetectable},
we obtain undetectable watermarking from a black-box undetectable zero-bit watermarking
scheme.

\begin{clm}[$\wat$ is undetectable]\label{clm-undetectable}
If $\GenModel$ is
prefix-specifiable and $\Msg'$ is an $L$-bit
watermarking scheme built from an undetectable zero-bit scheme, then the $\wat$ 
construction from \figref{fig-multiuser} is undetectable.
\end{clm}

Next we show that $\wat$ is sound, as long as the
underlying $L$-bit scheme is sound. We need this property to ensure that we do
not falsely detect a watermark when it is not present (Type I errors). Notice that because our scheme
is also consistent we will not falsely accuse users of generating unmarked text.

\begin{clm}[$\wat$ is sound]\label{clm-sound}
    Let $L, n, c > 1$ be integers and $0\le \delta < 1$.
    Let $\Msg'$ be a sound $L$-bit watermarking scheme and
    $\FP$ be a fingerprinting code of length
    $L$ with parameters $(\lambda,n,c,\delta)$.
    Then the $\wat$ construction from \figref{fig-multiuser} is a sound
    multi-user watermarking scheme.
\end{clm}
\begin{proof}
The soundness of $\wat$ follows immediately from the soundness of
$\Msg'=(\MsgSetup',\Encode',\Extract')$.
Specifically, observe that $\Detect_{(X,\FPKey,\sk)}(T) = 1$ implies 
$\Extract_{\sk}'(T) \ne \bot^L$. So, for a every $\poly$, $\lambda$,
and $T$ with $|T| \le p(\lambda)$
\ifx\cols\one
\begin{align*}
    \Pr_{(X,\FPKey,\sk) \getsr \Setup(1^{\lambda})}[\Detect_{(X,\FPKey,K)}(T) = 1]
    \le \Pr_{\sk' \getsr \MsgSetup'(1^{\lambda})}[\Extract'_{\sk'}(T) \ne \bot^L]
    < ~\negl(\lambda).\qedhere
\end{align*}
\fi
\ifx\cols\two
\begin{align*}
    \Pr_{(X,\FPKey,\sk) \getsr \Setup(1^{\lambda})}&[\Detect_{(X,\FPKey,K)}(T) = 1]\\
    \le & \Pr_{\sk' \getsr \MsgSetup'(1^{\lambda})}[\Extract'_{\sk'}(T) \ne \bot^L]\\
    < &~\negl(\lambda). \quad\quad\quad\quad\quad\quad\quad\quad\quad\quad\quad~~ \qedhere
\end{align*}
\fi
\end{proof}

\fi
\ifx\cols\two
In \apref{ap:multiuser-easy-props}, we provide formal claims that our multi-user scheme is consistent and inherits undetectability and soundness from its underlying scheme. The proofs follow directly from the definitions and can be found in the full version of the paper.
\fi

\subsection{Robustness against collusions}\label{sec-undetectable}
We now describe how our multi-user scheme from \figref{fig-multiuser}
can achieve robust collusion resistance under the right conditions.

Our main theorem for multi-user watermarking, \thmref{thm-undetectable}, requires 
that $\wat$ is built out of the $L$-bit watermarking scheme from \figref{fig-embedding},
which itself is built out of an undetectable zero-bit watermarking scheme. As 
shown in \lemref{lem-msg-robust}, our 
$L$-bit scheme will only ever extract (noisy) feasible messages of the set of 
the adversary's queried messages.

Since the messages that $\wat$ watermarks are codewords from a robust fingerprinting 
code, the $\hat{\msg}$ recovered will necessarily be in the $\delta$-feasible ball around
the set of the adversary's codewords. As long as enough blocks are included in
the adversarially generated text $\hat{T}$, we will be able to trace back to one 
of the colluding users.

\begin{theorem}[$\wat$ is robust]\label{thm-undetectable}
    Let $n, c > 1$ be integers and $0\le \delta < 1$.
    Let $\Msg'$ be the %
    $L$-bit watermarking scheme from \figref{fig-embedding}, built from a block-by-block
    zero-bit watermarking scheme that is undetectable, sound, and $R_1$-robustly detectable.
    Furthermore let $\FP$ be a robust fingerprinting code of length
    $L$ with parameters $(\lambda,n,c,\delta)$.

    Then, the $\wat$ construction
    from \figref{fig-multiuser} is a multi-user watermarking scheme that is
    consistent, sound, undetectable, and $R_{k^*}$-robust against $c$-collusions,
    for $k^*$ given by \lemref{lm-bnb}.
\end{theorem}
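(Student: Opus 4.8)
The theorem packages four statements. Consistency, soundness, and undetectability are precisely Claims~\ref{clm-consistent},~\ref{clm-sound}, and~\ref{clm-undetectable}: the hypothesis of Claim~\ref{clm-undetectable} is met because $\Msg'$, being the scheme of \figref{fig-embedding} instantiated with an undetectable zero-bit scheme, is ``built from an undetectable zero-bit scheme.'' So all the work lies in proving $R_{k^*}$-robustness against $c$-collusions, with $k^* = k^*(L,\delta)$ for $L$ the common length of $\FP$ and of the $L$-bit scheme $\Msg'$. The plan is to compose two black-box guarantees we already have: the ``lossy feasible recovery'' of our $L$-bit scheme (\lemref{lem-msg-robust}) and the erasure-robust security of $\FP$ (\defref{def-fp-adv-security}).

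Fix a coalition $C\subseteq\Users$ with $|C|\le c$ and an efficient adversary $\A$ in the robust-tracing game, and let $\fail$ be the bad event. Since $\Wat_{(X,\FPKey,\sk)}(u,\prompt) = \Encode'_\sk(X_u,\prompt)$, an execution in which $\A$ only ever queries users of $C$ feeds $\Encode'$ exactly the codewords $\{X_u : u\in C\}$, i.e.\ the rows of the submatrix $X_C$. First I would bound the probability of the sub-event $E_1$: $(\forall i\; u_i\in C)$ and $R_{k^*}(\lambda,(\prompt_i)_i,(T_i)_i,\hat{T})=1$ and $\hat{\msg}\notin\Feasible_\delta(X_C)$, where $\hat{\msg} = \Extract'_\sk(\hat{T})$. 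For each fixed value $\chi$ of $X_C$, consider the $L$-bit adversary $\B_\chi$ against $\Msg'$ that has $\chi$ hardwired, runs $\A$, answers each query $(u,\prompt)$ by forwarding $(X_u,\prompt)$ to its own $\Encode'_\sk$ oracle (aborting if $\A$ queries some $u\notin C$, which only happens when $\fail$ is already false), and outputs $\A$'s output $\hat{T}$. Since $X$ is sampled independently of the $L$-bit key $\sk$, \lemref{lem-msg-robust} applied to $\B_\chi$ with $M$ equal to the row set of $\chi$ --- uniformly over all $\chi$ --- shows $\Pr[E_1]\le\negl(\lambda)$.

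Next I would bound the sub-event $E_2$: $(\forall i\; u_i\in C)$ and $R_{k^*}=1$ and $\hat{\msg}\in\Feasible_\delta(X_C)$ and $(S=\emptyset \lor S\not\subseteq C)$, where $S = \Trace_{(X,\FPKey,\sk)}(\hat{T})$. On $E_2$ we have $\hat{\msg}\in\Feasible_\delta(X_C) = B_\delta(\Feasible(X_C))$ with $\Feasible(X_C)\subseteq\bits^L$, so $\hat{\msg}$ has at most $\lfloor\delta L\rfloor < L$ coordinates equal to $\bot$; in particular $\hat{\msg}\ne\bot^L$, so $\Trace$ really does compute $S = \FPTrace'(\hat{\msg},\FPKey)$. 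The conjunction ``$\hat{\msg}$ feasible'' and ``$\FPTrace'(\hat{\msg},\FPKey)=\emptyset$ or $\not\subseteq C$'' is exactly the event ruled out by \defref{def-fp-adv-security}. To make this formal, let $\mathcal D(X_C)$ sample $\sk\getsr\MsgSetup'(1^\lambda)$, simulate $\A$'s $\Wat$-oracle as $\B_\chi$ does, compute $\hat{\msg}$, and output $\hat{\msg}$ if it lies in $\Feasible_\delta(X_C)$ and an arbitrary fixed element of $\Feasible_\delta(X_C)$ otherwise; then $\mathcal D$ always outputs a feasible word, and whenever $E_2$ holds its output makes $\FPTrace'$ accuse nobody or accuse outside $C$. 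Robust security of $\FP$ (with parameters $\lambda,n,c,\delta$) gives $\Pr[E_2]\le\negl(\lambda)$. Since $\fail\subseteq E_1\cup E_2$, a union bound yields $\Pr[\fail]\le\negl(\lambda)$; as $C$ and $\A$ were arbitrary, $\wat$ is $R_{k^*}$-robust against $c$-collusions.

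The step I expect to be fiddliest is the bookkeeping between the three probability experiments --- the multi-user game, the $L$-bit game of \lemref{lem-msg-robust}, and the fingerprinting game --- so that the randomness $(X,\FPKey,\sk,\text{coins of }\A)$ can be reused coherently, and in particular substituting the \emph{random} message set $M = X_C$ into \lemref{lem-msg-robust}, whose statement fixes $M$ in advance; this is legitimate only because $X$ is independent of $\sk$ and the negligible bound of \lemref{lem-msg-robust} is uniform over size-$\le c$ coalitions. The one genuinely load-bearing point, rather than plumbing, is recognizing that the ``$\delta$-feasible recovery'' promised by \lemref{lem-msg-robust} is precisely the input interface of an erasure-robust fingerprinting code, which is why the scheme absorbs the $\lfloor\delta L\rfloor$ erased message bits at no cost.
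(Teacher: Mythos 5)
Your proposal is correct and follows essentially the same route as the paper: consistency, soundness, and undetectability are cited from Claims~\ref{clm-consistent},~\ref{clm-sound}, and~\ref{clm-undetectable}, and robust tracing is obtained by applying \lemref{lem-msg-robust} with $M=\{X_u : u\in C\}$ to conclude $\hat{\msg}\in \Feasible_\delta(X_C)$ with high probability and then invoking the robust security of $\FP$. Your explicit two-event decomposition and the reductions $\B_\chi$ and $\mathcal{D}$ (including the handling of the randomness of $X_C$ and the observation that $\hat{\msg}\neq\bot^L$ so $\Trace$ really calls $\FPTrace$) simply spell out bookkeeping that the paper's terser proof leaves implicit.
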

\begin{proof}
We have already shown that $\wat$
is consistent (\clmref{clm-consistent}), sound (\clmref{clm-sound}), and
undetectable (\clmref{clm-undetectable}).
Robustness is a corollary of \lemref{lem-msg-robust} when using
an appropriate fingerprinting code. Let $\ColSet$ be the set of (at most $c$) 
colluding users and apply \lemref{lem-msg-robust} to $M=\{X_u : u \in C\}$. Then 
we know that $\Extract'$ from $\Msg'$ will return some $\hat{\msg}$ with at most
$\lfloor \delta L \rfloor$ entries that are $\bot$. Therefore, with high probability 
we have $\hat{\msg} \in \Feasible_\delta(X_C)$. By the definition of a robust
fingerprinting code we know that $\FPTrace'$ will correctly accuse a colluding 
user with all but negligible probability. 
\end{proof}

\paragraph{Efficiency}
Notice that the construction in \figref{fig-multiuser} only requires a single
call to the underlying $L$-bit scheme for every generation and detection.
When tracing, it additionally only requires a single call to the fingerprinting
code's tracing algorithm. 
When instantiated with the $L$-bit scheme in \figref{fig-embedding}, detecting and tracing make $2L$ calls to its the underlying zero-bit scheme's detection algorithm. Fingerprinting code lengths scale
as the logarithm of the number of users, so our scheme is much faster
(and requires much less storage)
than one which generates keys for every single user, whose detection would require
linear time.

Unfortunately, existing fingerprinting codes require time linear
in the number of users to trace, so our tracing time still scales linearly. 
Fingerprinting tracing algorithms work by checking whether the extracted codeword is sufficiently close to each user's unique codeword, one by one. This means that $\Trace$ could be used to check any set of $c$ suspects in time linear in $c$.
It is an interesting open problem to improve the runtime of the fingerprint tracing, which would correspondingly improve our scheme.

\subsection{Other features of our multi-user watermarks}\label{sec:undetectable-bonus}
We discuss additional features of our construction that are not captured by the above definitions or theorems, but which offer practical improvements

\paragraph{Preserved zero-bit detection}
\thmref{thm-undetectable} proves that $\wat$ is $R_k$-robustly traceable
so long as the constructions in \figref{fig-embedding}
and \figref{fig-multiuser} use a
zero-bit scheme that is $R_1$-robustly detectable. Happily, the multi-user construction from
\figref{fig-multiuser} is \emph{also} $R_1$-robustly detectable! In particular,
the $\Detect$ function will detect if any single (approximate) block is present in a
generation. This means that our construction preserves the original robustness of 
the zero-bit watermarking scheme. The added benefit of finding users and resisting 
(unbounded!) collusions comes essentially for free: no cost to robust detection and only a
logarithmic slowdown in $\Detect$.

\paragraph{Recovering more users}
In our multi-user construction, we use the function $\Extract'$ from the $L$-bit scheme
as a black box, returning a single bitstring which is then fed into the fingerprinting 
code's tracing algorithm to accuse some set of users. However, a different construction 
may be able to recover even more colluding users. 
Slightly modifying the $\Extract'$ function from \figref{fig-embedding},
we could allow $\Extract'$ to return a special character ``$*$'' in the $i$th index
whenever both  $z_{i,0}$ and $z_{i,1}$ are $\true$
(since it recovered both bits in the same index). By
the soundness of the underlying zero-bit scheme, this should almost never happen 
when embedding a single message, as in the normal $L$-bit robustness game.
In the process of colluding, however, it is likely that users with different
codeword bits at index $i$ happen to include
(approximate) blocks in $\hT$ for both of their bits.

Therefore, the set of \emph{all strings} which could be created from the extracted
message $\hat{\msg} \in \{0, 1, \bot, *\}^L$ is a subset
of the $\delta$-feasible ball $\Feasible_\delta(X_{\ColSet})$ of the colluding users'
codewords. In practice, one may want to call $\FPTrace'$ on each of these
strings and return the union of all users returned, which will still (with high
probability) be a subset of the colluding users.
An interesting question is to design fingerprinting codes that allow faster tracing from pirate codewords $\hat{m} \in \{0,1,\bot,*\}^L$ than brute-force search, which requires time exponential in the number of $*$'s.

\paragraph{Robust fingerprinting codes reduce $k^*$}
Notice that our construction of multi-user watermarking uses a robust 
fingerprinting code with erasure bound $\delta$ in conjunction with an $L$-bit 
watermarking scheme that allows up to $\delta$ erasures. The length $L$ of the 
fingerprinting code grows with $\delta$. Ultimately,
the parameter we are most interested in is the number of (approximate)
blocks $k^* = k^*(L,\delta) = \Omega(L)$ needed to extract the watermark
from $\hat{T}$.\footnote{
    Other robustness parameters, like the $\bumpeq$-relation and block length are
    received in a black-box way from the underlying zero-bit scheme and therefore can be
    improved immediately as future work develops.
}

This raises the question of whether robustness $(\delta > 0)$ is helping at all,
or whether we would be better off using shorter fingerprinting codes without
adversarial robustness ($\delta = 0$).
Perhaps surprisingly, robustness yields an asymptotic\footnote{
    Showing a concrete improvement with $\delta > 0$ amounts to the same comparison
    between $k^*(L_0,0)$ and $k^*(L_\delta, \delta)$ as in the body, but
    with concretely optimal fingerprinting codes. It appears that the construction
    of \cite{SPRINGER:NFHKWOI07} is much more efficient than either of the
    asymptotically optimal codes we discuss. Like other fingerprinting codes,
    $L_\delta = L_0/\poly(1-\delta)$. Hence taking a constant
    $\delta \gg \sqrt{(\lambda + \ln 2)/2L_\delta}$ (say, $\delta = 1/2$)
    should yield $k^*(L_\delta,\delta) = O(L_\delta) = O(L_0)$, compared
    to $k^*(L_0,0) = O(L_0\lambda)$. However, we were unable to work out the
    details  of \cite{SPRINGER:NFHKWOI07} to our satisfaction.
} improvement in $k^*$.

The question boils down to comparing $k^*(L_{\delta},\delta)$ and $k^*(L_0, 0)$, where $L_0$ and $L_\delta$ are the lengths of the asymptotically optimal fingerprinting codes for $\delta = 0$ and $\delta > 0$, respectively.
For $n$ users, $c$ collusions, a $\delta$-fraction of adversarial erasures, and security parameter $\lambda$, the asymptotically optimal robust fingerprinting code of \cite{ACM:BKM10} has length
\[
    L_\delta = \frac{C(c \ln c)^2 \ln (n) \lambda}{1-\delta} 
\]
for some very large constant C.
Letting $W:= C(c\ln c)^2 \ln(n)$, we have $L_\delta = W\lambda/(1-\delta)$.
The asymptotically optimal non-robust fingerprinting code \cite{ACM:Tar08} has length 
$L_0 = 100 c^2 \ln (n) \lambda.$
Choosing $\delta = 1/2$, we get 
$k^*(L_{1 / 2},1/2) = O(c^2 \ln^2(c) \ln (n) \lambda)$,
whereas $k^*(L_0,0) = \Omega(c^2 \ln (n) \lambda^2)$. As  $c = \poly(\lambda)$,
we have $k^*(L_{1 / 2},1/2) = o(k^*(L_0, 0))$.

\ifx\cols\one
\section{Watermarking without undetectability}\label{sec-detectable}

Throughout the paper, we make heavy use of the \emph{undetectability} of watermarking schemes. 
While some watermarking schemes for language models are provably 
undetectable under cryptographic assumptions
\cite{EPRINT:ChrGunZam23,EPRINT:ChrGun24,ARXIV:GM24},
most are not (e.g. \cite{Aar22,EPRINT:FGJMMW23,ICML:KGWKMG23,ARXIV:KTHL23}).\footnote{
    Though \cite{EPRINT:FGJMMW23} is undetectable under a min-entropy assumption on the 
    model's output, it does not appear difficult for an adversary to create prompts undermining
    the assumption -- thereby violating our stringent undetectability definition.
}
Our constructions of $L$-bit and multi-user watermarking
(Fig.~\ref{fig-embedding},~\ref{fig-multiuser}) use an arbitrary zero-bit
watermarking scheme as a building block.

In this section, we explore the robustness of our schemes when instantiated with a
zero-bit scheme that is not undetectable.
We prove analogues of Theorem~\ref{thm-msg-undetectable} and
Theorem~\ref{thm-undetectable} with one very important difference
(Section~\ref{sec:detectable-main}). For undetectable schemes, our constructions
are $R_k$ robust for $k = O(L\lambda)$, where $k$ is the number of (modified)
blocks of model-generated text needed to extract the watermark. Without
undetectability, we require $k = \Omega(B)$, where $B$ is the \emph{total} number of
blocks of model-generated text that the adversary \emph{ever observed}.
To provably extract the watermark, the adversary's output must
contain essentially all the text produced by the model!

Still, the adversary has substantial freedom to modify the watermarked text
without destroying the mark. First, they can reorder blocks
arbitrarily and include extraneous unmarked text. Second, the adversary
can modify the blocks as allowed by the underlying scheme's
robustness guarantee ($\bumpeq$, e.g., bounded edit distance).

In Section~\ref{sec:detectable-discussion}, we discuss some possible
approaches for improving the robustness parameter $k$ without undetectability.
In practice, we believe that full undetectability may not be necessary for meaningful security.
Schemes that are not undetectable still offer some guarantees (e.g., bounded Renyi
divergence \cite{ARXIV:ZALW23} or undetectability for a single query \cite{ARXIV:KTHL23}).
In applications where these not-undetectable zero-bit watermarking schemes are
considered secure enough, we suspect our schemes would be too.

The theorems we prove in this section apply to {adaptively} robust
watermarking schemes that are not undetectable (though we know of no such schemes).

\subsection{Robustness for $k = \Omega(B)$}
\label{sec:detectable-main}
We need to prove an analogue of Lemma~\ref{lem-msg-robust} without undetectability.
As before, our theorems follow as corollaries.

Undetectability is only used in the third hybrid of the proof of
Lemma~\ref{lem-msg-robust}. The other two hybrids use only soundness and robustness.
Hybrid 3 uses undetectability to argue that which bits of the watermark are
embedded in which blocks of text is (computationally) independent from adversary's view.
The result is that any (modified) block in the adversary's output $\hat{T}$
embeds a uniformly random bit of the message $\msg\in \bits^L$.
By a balls-in-bins argument, $k = L(\ln L + \lambda)$ blocks suffice to recover the whole message.

Without undetectability, this argument breaks down.
An adversary who can perfectly distinguish blocks marked using the different keys can
choose which message index $i$ any (modified) block encodes, as each index
corresponds to a distinct pair of keys.
In the worst case, an adversary can create $\hat{T}$ using $k$ blocks that all
correspond to the same index $i$, and $\Extract$ would recover just one bit of the message.

\newcommand{\detR}{R_{\sf{DET}}}

To get around this, we need to require that $k$ depends on the
\emph{total number of blocks} $B$ of model-generated text seen by the adversary $\A$:
$$B := \sum_{i} |\Blocks(T_i; \prompt_i)|,$$ 
where $(\prompt_i)_i$ are $\A$'s queries and $T_i\gets \Wat_\sk(Q_i)$ are the
corresponding generations.
It is easy to see that if $\hat{T}$ included all $B$ blocks, then the
adversary's hands are tied. As long as $B \geq L (\ln L + \lambda)$, all
message bits will be extracted from $\hat{T}$ with high probability.

The index $i$ of the message that each
block encodes is sampled uniformly at random by $\Wat$.
The adversary is required to output $k$ of these blocks. 
We can slightly generalize the above argument to allow $\delta L$ erasures. 
Let $s(B,L,\delta) < \delta B$ be any high-probability upper bound on the total
load of the $\lfloor \delta L \rfloor$ smallest bins after throwing $B$
balls uniformly at random into $L$ bins.\footnote{E.g., for $B\ge L(\ln L + \lambda)$,
$s(B,L,\delta) \ge \lfloor \delta L\rfloor$, as every bin has has load at least
1 with high probability.}
($\Wat$ is throwing the balls, not $\A$.) 
Then if $B \ge L(\ln L + \lambda)$ and $k \ge B - s(B,L,\delta)$, the $\Extract$
algorithm recovers at least $(1-\delta)L$ bits of the watermark.

We now state the analogue of Lemma~\ref{lem-msg-robust}. Let
\ifx\cols\one
\begin{align*}
    \detR\left(\lambda,(\prompt_i)_i, (T_i)_i, \hat{T}\right) =
    \1\big(B \ge L (\ln L + \lambda)\big)
    \land~
    \1\big( \numBlocks(\hat{T};\prompt_i,T_i) \ge B - s(B,L,\delta) \big).
\end{align*}
\fi
\ifx\cols\two
\begin{align*}
    \detR&\left(\lambda,(\prompt_i)_i, (T_i)_i, \hat{T}\right) =
    \1\big(B \ge L (\ln L + \lambda)\big)\\
    \land~&
    \1\big( \numBlocks(\hat{T};\prompt_i,T_i) \ge B - s(B,L,\delta) \big).
\end{align*}
\fi

\begin{lemma}[$\Msg$ recovers lossy descendants -- not undetectable]\label{lem-msg-robust-detectable}
    Let $\lambda,L \in \N$, $0\le \delta < 1$ and $M\subseteq \bits^L$.
    Suppose $\wat'$ is a block-by-block, zero-bit watermarking scheme that is sound and $R_1$ robustly-detectable.
    Let $\Msg = (\MsgSetup, \Wat, \Extract)$ be the construction
    from \figref{fig-embedding} using $\wat'$.

    Then for all efficient $\A$, the following event \fail occurs with negligible probability:
    \ifx\cols\one
    \begin{itemize}
        \item $\forall i, m_i \in M$, AND
        \quad \texttt{// only queried messages in M}
        \item $\detR \left(\lambda,(\prompt_i)_i, (T_i)_i, \hat{T}\right) = 1$, AND
        \quad \texttt{// robustness condition passes}
        \item $\hat{\msg} \not\in \Feasible_\delta(M)$
        \quad \texttt{// extracted message unrelated to M}
    \end{itemize}
    \fi
    \ifx\cols\two
    \begin{itemize}
        \item $\forall i, m_i \in M$, AND\\
        \quad \texttt{// only queried messages in M}
        \item $\detR \left(\lambda,(\prompt_i)_i, (T_i)_i, \hat{T}\right) = 1$, AND\\
        \quad \texttt{// robustness condition passes}
        \item $\hat{\msg} \not\in \Feasible_\delta(M)$\\
        \quad \texttt{// extracted message unrelated to M}
    \end{itemize}
    \fi

    in the probability experiment defined by
    \begin{itemize}
        \item $\sk \getsr \MsgSetup(1^{\lambda})$
        \item $\hat{T} \gets \A^{\Wat_\sk(\cdot,\cdot)}(1^\lambda)$, denoting by
        $(m_i,\prompt_i)_i$ and $(T_i)_i$ the sequence of inputs and outputs of the oracle
        \item $\hat{\msg} \gets \Extract_\sk(\hat{T})$.
    \end{itemize}
\end{lemma}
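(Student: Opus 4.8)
The plan is to follow the proof of \lemref{lem-msg-robust} essentially verbatim through its first two hybrids, and then replace the third hybrid (and the clean balls-in-bins count that followed it) by a worst-case combinatorial argument, since without undetectability we can no longer hide the block-to-index assignment from the adversary. As there, let $p_\fail$ be the probability of \fail, and condition throughout on the two (efficiently checkable, \fail-implied) events $\forall i\,\msg_i\in M$ and $\detR(\lambda,(\prompt_i)_i,(T_i)_i,\hat T)=1$. Hybrid~1 (soundness): delete every call $\Detect'_{k_{i,b}}(\hat T)$ using a key unrelated to any message in $M$, replacing it by $\false$; since those keys are never used by $\Wat$, the adversary's view --- hence $\hat T$ --- is independent of them, so by soundness of $\wat'$ the output of $\Extract$ changes only negligibly, and every non-$\bot$ entry of the resulting $\hat\msg_1$ agrees with some $\msg\in M$ at that index, i.e. $\hat\msg_1\in\Feasible_{\gamma_1}(M)$ for some $\gamma_1\le 1$. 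Hybrid~2 (robustness): replace each remaining $\Detect'_{k_{i,b}}(\hat T)$ by $R_1(\lambda,\mathcal Q_{i,b},\mathcal R_{i,b},\hat T)$ run on the query/response transcript of $\Wat'_{k_{i,b}}$, using that $\wat'$ is \emph{adaptively} $R_1$-robustly detectable (the reduction simulating one key's oracle must make many adaptive queries --- this is the essential use of adaptivity); exactly as in \lemref{lem-msg-robust}, whenever $\hat T$ $\bumpeq$-approximates a block generated under $k_{i,b}$ the corresponding $R_1$ call, hence $\hat\msg_2[i]$, is non-$\bot$ with overwhelming probability, and $\hat\msg_2\in\Feasible_{\gamma_2}(M)$ with $\gamma_2\ge\gamma_1$. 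Nothing so far used undetectability, so $p_2\ge p_\fail-\negl(\lambda)$.

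Now the one genuinely new step. In \lemref{lem-msg-robust}, Hybrid~3 used undetectability to make the index assignment computationally independent of $\A$'s view, so the indices of the blocks $\A$ is forced to keep behave like fresh uniform draws and \lemref{lm-bnb} bounds the empty bins. Here, $\A$ may learn which key --- hence which index $i_\block\getsr[L]$ --- produced each of the $B:=\sum_i|\Blocks(T_i;\prompt_i)|$ blocks, and may choose which blocks to keep in $\hat T$ accordingly. So I would instead condition on the entire (uniform, independent) index assignment $(i_\block)_\block$ sampled by $\Wat$ and argue in the worst case over $\A$. The event $\detR=1$ forces $\sum_i\numBlocks(\hat T;\prompt_i,T_i)\ge B-s$ with $s:=s(B,L,\delta)<\delta B$, i.e. $\hat T$ fails to $\bumpeq$-approximate at most $s$ of the $B$ blocks; by the Hybrid~2 analysis, $\hat\msg[i]=\bot$ only if \emph{every} block with $i_\block=i$ is among those $\le s$ unapproximated blocks. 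The number of indices all of whose blocks are unapproximated is maximized by greedily sacrificing the lightest bins, so it equals the largest $j$ with $\ell_{(1)}+\cdots+\ell_{(j)}\le s$, where $\ell_{(1)}\le\cdots\le\ell_{(L)}$ are the sorted loads of $B$ balls in $L$ bins. Because $B\ge L(\ln L+\lambda)$, a coupon-collector bound gives $\ell_{(1)}\ge 1$ except with probability $e^{-\Omega(\lambda)}$, and --- choosing $s$ (as one may, for this range of $B$) to lie strictly below the total load of the $\lfloor\delta L\rfloor+1$ lightest bins with the same probability --- that largest $j$ is at most $\lfloor\delta L\rfloor$. Hence $\hat\msg$ has at most $\lfloor\delta L\rfloor$ $\bot$-entries; combining with $\hat\msg_1\in\Feasible_{\gamma_1}(M)$ and the negligible statistical distance between the games yields $\hat\msg\in\Feasible_\delta(M)$ with overwhelming probability, so $p_\fail\le\negl(\lambda)$.

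The main obstacle is exactly that last combinatorial step: pinning down $s(B,L,\delta)$ so that it simultaneously (i) is a high-probability \emph{upper} bound on the total load of the $\lfloor\delta L\rfloor$ lightest bins --- which is what makes $\detR$ satisfiable by honest truncations and gives the clean ``$k\ge B-s$ suffices'' statement --- and (ii) is, with the same high probability, \emph{strictly below} the total load of the $\lfloor\delta L\rfloor+1$ lightest bins, preventing any adversary from emptying more than $\lfloor\delta L\rfloor$ message indices. For $B\ge L(\ln L+\lambda)$ the per-bin loads concentrate around $B/L\ge\ln L+\lambda$, so the window between these two quantities has width $\approx\ell_{(\lfloor\delta L\rfloor+1)}\approx B/L\gg 1$ and such an $s$ exists; standard Chernoff bounds union-bounded over the $L$ bins drive both failure events to $e^{-\Omega(\lambda)}$. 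Spelling out this choice of $s$ and checking it composes with the $\negl$ losses from Hybrids~1--2 is the only delicate bookkeeping; the rest is a transcription of \lemref{lem-msg-robust}, and \thmref{thm-msg-undetectable}- and \thmref{thm-undetectable}-style corollaries (specializing $M$ to a singleton, resp.\ to a set of fingerprinting codewords) follow as before.
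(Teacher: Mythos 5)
Your proposal matches the paper's proof: it runs Hybrids 1 and 2 of \lemref{lem-msg-robust} unchanged, omits the undetectability hybrid, and replaces the balls-in-bins count by the worst-case observation that an adversary who keeps all but at most $s(B,L,\delta)$ blocks can empty at most $\lfloor\delta L\rfloor$ of the $L$ index-bins, using that for $B\ge L(\ln L+\lambda)$ every bin is occupied with high probability. Your extra bookkeeping about choosing $s$ to lie strictly below the total load of the $\lfloor\delta L\rfloor+1$ lightest bins is precisely what the paper's terse appeal to ``the definition of $s$'' (together with its footnote that every bin has load at least $1$) implicitly invokes, so this is the same argument, just spelled out a bit more carefully.
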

\begin{proof}[Proof outline]
The proof exactly follows the proof \lemref{lem-msg-robust} for Hybrids 1 and 2.
Hybrid 3 is omitted.
We recover a message index $i$ whenever a (modified) block in $\hat{T}$ was generated
using $\Wat_{k_{i,b}}$ for some $b$.
Because $B \ge L(\ln L + \lambda)$ and the definition of $s$, any set of
$B - s(B,L,\delta)$ blocks were generated using a set of at least
$L - \lfloor\delta L \rfloor$ distinct indices $i$, with high probability.
\end{proof}

\begin{theorem}[$\Msg$ is robustly extractable, when not undetectable]\label{thm-msg-detectable}
    Suppose $\Msg'$ is a block-by-block, sound zero-bit watermarking scheme. 
    Then, the $\Msg$ construction from \figref{fig-embedding} is an $L$-bit
    embedding scheme that is $(\delta,\detR)$-robust.
\end{theorem}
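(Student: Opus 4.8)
The plan is to obtain \thmref{thm-msg-detectable} as an immediate corollary of \lemref{lem-msg-robust-detectable}, exactly paralleling the way \thmref{thm-msg-undetectable} follows from \lemref{lem-msg-robust}. First I would fix an arbitrary target message $\msg \in \bits^L$ and apply \lemref{lem-msg-robust-detectable} with the singleton set $M = \{\msg\}$. In the $(\delta,\detR)$-robust extraction experiment of \defref{def-robustness} (instantiated with robustness condition $\detR$), the adversary's oracle is always invoked on the same message $\msg$, so the clause ``$\forall i,\ m_i \in M$'' of the lemma's \fail event is satisfied unconditionally; it then remains only to match the other two clauses against those of \defref{def-robustness}.

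The one piece of bookkeeping that needs a word is the identity $\Feasible_\delta(\{\msg\}) = B_\delta(\msg)$. By \defref{def-fp-feasible}, $\Feasible(\{\msg\}) = \{\msg\}$ --- with a single codeword there is no choice to make at any index --- and hence $\Feasible_\delta(\{\msg\}) = B_\delta(\Feasible(\{\msg\})) = B_\delta(\msg)$. So the clause ``$\hat\msg \notin \Feasible_\delta(M)$'' of the lemma is literally the clause ``$\hat\msg \notin B_\delta(\msg)$'' (``the mark is corrupted'') of \defref{def-robustness}, and the clause ``$\detR(\lambda,(\prompt_i)_i,(T_i)_i,\hat T) = 1$'' matches ``the robustness condition holds''. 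Therefore the \fail event of \lemref{lem-msg-robust-detectable} with $M = \{\msg\}$ and the \fail event defining $(\delta,\detR)$-robust extraction of $\Msg$ coincide, and since the former has negligible probability for every efficient $\A$ and every $\msg$, $\Msg$ is $(\delta,\detR)$-robustly extractable. One should also note in passing that $\detR$ is a bona fide robustness condition: it is a deterministic, efficiently computable $\bits$-valued function of $(\lambda,(\prompt_i)_i,(T_i)_i,\hat T)$, since $B$, $s(B,L,\delta)$, and $\numBlocks$ are all computable from these inputs.

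I expect no genuine obstacle at the level of the theorem itself --- all of the content lives in \lemref{lem-msg-robust-detectable}. For completeness one checks that the hypotheses transfer: the lemma asks that $\wat'$ be block-by-block, sound, and $R_1$-robustly detectable, which are exactly the hypotheses here (the $R_1$-robustness being implicit, since $\detR$ is built, like $R_k$, from the same $\isBlock$ and $\bumpeq$ that underlie $R_1$). If anything is delicate, it is inside the lemma rather than here: one must make the balls-in-bins bound $s(B,L,\delta)$ on the total load of the $\lfloor \delta L \rfloor$ lightest bins --- after $\Wat$ throws its $B$ index samples uniformly into $L$ bins --- sharp enough that outputting any $B - s(B,L,\delta)$ of those blocks forces at least $(1-\delta)L$ distinct message indices to be recovered. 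But that argument is already packaged into \lemref{lem-msg-robust-detectable}, whose first two hybrids are inherited verbatim from the proof of \lemref{lem-msg-robust}, with the undetectability-based Hybrid 3 dropped entirely.
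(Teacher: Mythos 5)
Your proposal is correct and follows essentially the same route as the paper: the paper proves Theorem~\ref{thm-msg-detectable} as an immediate consequence of Lemma~\ref{lem-msg-robust-detectable}, by the same singleton-$M$ specialization used for Theorem~\ref{thm-msg-undetectable}, which is exactly your argument (your explicit check that $\Feasible_\delta(\{\msg\}) = B_\delta(\msg)$ is just spelling out what the paper leaves implicit). The only caveat is that the lemma also assumes $R_1$-robust detectability of $\wat'$, which the theorem statement omits; this hypothesis does not follow from being block-by-block and sound, so it should be carried along rather than called implicit, but this mirrors the paper's own phrasing and does not affect the argument.
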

\begin{proof}
This follows as an immediate consequence of \lemref{lem-msg-robust-detectable} following
the same reasoning used in \thmref{thm-msg-undetectable}.
\end{proof}

\begin{theorem}[$\wat$ is robustly traceable, when not undetectable]\label{thm-detectable}
    Let $n, c > 1$ be integers and $0\le \delta < 1$.
    Let $\Msg'$ be the $L$-bit embedding scheme from
    \figref{fig-embedding}, built out of a block-by-block, sound zero-bit 
    watermarking scheme. Furthermore let $\FP$ be a robust fingerprinting code 
    of length $L$ with parameters $(\lambda,n,c,\delta)$.
    Then, the $\wat$ construction from \figref{fig-multiuser} using $\Msg'$ and 
    $\FP$ is a consistent, sound, $(c,\detR)$-robust multi-user watermarking scheme.
\end{theorem}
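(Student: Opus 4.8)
The plan is to repeat the proof of \thmref{thm-undetectable} almost verbatim, with \lemref{lem-msg-robust-detectable} playing the role of \lemref{lem-msg-robust}. First I would dispatch the easy properties: consistency follows from \clmref{clm-consistent} and soundness from \clmref{clm-sound}, since neither of those proofs invoked undetectability of the underlying zero-bit scheme (consistency is purely syntactic, and soundness of $\wat$ bubbles up from soundness of $\Msg'$, which by \clmref{clm-msg-sound} follows from soundness of the zero-bit scheme). Note that, consistent with the rest of \secref{sec-detectable}, we make no undetectability claim here.

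The substance is $(c,\detR)$-robustness. Fix a collusion set $\ColSet\subseteq\Users$ with $|\ColSet|\le c$ and an efficient adversary $\A$ in the game of \defref{def-multi-robustness} instantiated with $\detR$. The key observation is that $\Wat_{(X,\FPKey,\sk)}(u,\prompt)$ is exactly $\Encode'_\sk(X_u,\prompt)$, so the adversary's oracle is the $L$-bit encoding oracle restricted to the message set $M := \{X_u : u\in\ColSet\}$. I would therefore view $\A$ (together with the map $u\mapsto X_u$) as an adversary against the game of \lemref{lem-msg-robust-detectable} with message set $M$: the same prompts are issued, the same generations $(T_i)_i$ are produced (hence the same total block count $B$ and the same value of $\detR$), and the same $\hat{T}$ is output. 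Applying \lemref{lem-msg-robust-detectable}, with all but negligible probability, if $\detR(\lambda,(\prompt_i)_i,(T_i)_i,\hat{T})=1$ then $\hat{\msg}\gets\Extract'_\sk(\hat{T})$ lies in $\Feasible_\delta(X_\ColSet)$; in particular $\hat{\msg}$ has at most $\lfloor\delta L\rfloor$ coordinates equal to $\bot$, so $\Trace$ does not short-circuit to $\emptyset$ and instead runs $\FPTrace'(\hat{\msg},\FPKey)$. Since $\FP$ is a robust fingerprinting code with erasure bound $\delta$ (matching the theorem's hypothesis that $\FP$ has parameters $(\lambda,n,c,\delta)$) and $\hat{\msg}\in\Feasible_\delta(X_\ColSet)$, \defref{def-fp-adv-security} gives that $\FPTrace'(\hat{\msg},\FPKey)$ is a nonempty subset of $\ColSet$ except with negligible probability. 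A union bound over these two negligible bad events (``$\detR=1$ but $\hat{\msg}\notin\Feasible_\delta(X_\ColSet)$'' and ``$\hat{\msg}\in\Feasible_\delta(X_\ColSet)$ but $\FPTrace'$ fails'') shows that the \fail event of \defref{def-multi-robustness} occurs with negligible probability, which is the claim.

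I do not expect a genuine obstacle: the real work was already done in \lemref{lem-msg-robust-detectable} (itself a small modification of \lemref{lem-msg-robust} that simply omits the undetectability hybrid) and in the robustness of the fingerprinting code. The only points needing care are bookkeeping: verifying that the reduction to \lemref{lem-msg-robust-detectable} is faithful with respect to $\detR$ — which it is, because $B=\sum_i|\Blocks(T_i;\prompt_i)|$ is computed from generations that are identically distributed in the two games — and invoking the robust fingerprinting guarantee for precisely the erasure parameter $\delta$ with which $\FPGen'$ was run.
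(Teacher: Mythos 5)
Your proposal is correct and follows essentially the same route as the paper: the paper's proof of \thmref{thm-detectable} simply invokes \lemref{lem-msg-robust-detectable} in place of \lemref{lem-msg-robust} and repeats the reasoning of \thmref{thm-undetectable} (consistency and soundness from \clmref{clm-consistent} and \clmref{clm-sound}, then applying the lemma to $M=\{X_u : u\in C\}$ and concluding via the robust fingerprinting guarantee). Your version just spells out the bookkeeping (the faithfulness of the reduction with respect to $\detR$, the non-short-circuiting of $\Trace$, and the final union bound) in slightly more detail than the paper does.
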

\begin{proof}
This follows as an immediate consequence of \lemref{lem-msg-robust-detectable} following
the same reasoning used in \thmref{thm-undetectable}.
\end{proof}

\subsection{Can we do better?}
\label{sec:detectable-discussion}
We briefly describe two approaches to reducing the robustness parameter $k$
in the absence of undetectability.

\paragraph{Bounded or partial undetectability}
Our analysis allowed for a worst-case adversary who could perfectly distinguish
blocks marked under different keys. But even watermarking schemes that are not
undetectable are not so blatantly detectable.
For example, the scheme of \cite{ARXIV:KTHL23} is undetectable for any single
query (``distortion-free''), and the red-/green-list scheme of \cite{ARXIV:ZALW23}
guarantees that the Renyi divergence between the marked and unmarked
distributions for any single token is bounded.

It's not clear how to use these limited distinguishing guarantees to build
robust $L$-bit / multi-user watermarking schemes. The critical step in the proof of
Lemma~\ref{lem-msg-robust} is to bound the fraction $\delta$ of empty bins
after $k$ balls are thrown into $L$ bins. With undetectability, the balls are thrown uniformly.
One idea is to bound the distinguishing advantage of an adversary making $q$ queries
with $c$ marks and producing $B$ blocks of text as only growing
polynomially in $q$, $c$, or $B$. Then, use that result to conclude that the
induced distribution of balls-into-bins is not too far from uniform.
The Renyi divergence bounds in \cite{ARXIV:ZALW23} do not seem strong enough
to make this approach work, even for a single generation.
Even if this idea worked, proving adaptive robustness would still be challenging.

\paragraph{Heuristically duplicating keys}
Without undetectability, 
the adversary may be able to tell
whenever the same key is used to watermark a piece of text. This allows (in our construction)
the adversary to only include blocks watermarked under a few keys.
In our proof, this is prevented by undetectability. Our main results
($k = O(L\lambda)$) should hold so long as the adversary never sees two blocks
generated using the same key.

Towards that end, one could try key duplication. Generate poly-many keys (instead of 
just one) for each index-bit $(i,b)$ pair, sampling a random key from this set at
every iteration of $\Wat$. This will reduce the number of key collisions observed
by the adversary. Though it would not make the probability of collision negligible,
it would be possible to bound the number of collisions as a function of the number
of blocks $B$ observed. Combined with the previous approach, this may suffice.
Even if not, key duplication may improve practical security for
schemes that are particularly detectable.
However, key duplication comes with a proportional 
increase in the runtime of our detection algorithm, which checks every possible key.

\fi

\ifx\cols\one
\section*{Acknowledgements}
We thank Miranda Christ and Sam Gunn for pointing out an error in an earlier version of this paper (Section~\ref{sec:non-adaptive-to-adaptive}).
Aloni Cohen and Gabe Schoenbach were supported in part by the DARPA SIEVE program under Agreement No.\ HR00112020021. Any opinions, findings and conclusions or
recommendations expressed in this material are those of the authors and do not necessarily reflect the views of DARPA.
\fi

\ifx\cols\two
\bibliographystyle{IEEEtran}
\fi
\ifx\cols\one
\bibliographystyle{alpha}
\fi
\bibliography{others,../cryptobib/abbrev0,../cryptobib/crypto}

\newcommand{\etalchar}[1]{$^{#1}$}
\begin{thebibliography}{KGW{\etalchar{+}}23b}

\bibitem[Aar22]{Aar22}
Scott Aaronson.
\newblock My {AI} safety lecture for ut effective altruism, November 2022.

\bibitem[BG24]{Lawfare}
Dan~Geer Bob~Gleichauf.
\newblock Digital watermarks are not ready for large language models.
\newblock {\em Lawfare}, 2024.

\bibitem[BKM10]{ACM:BKM10}
Dan Boneh, Aggelos Kiayias, and Hart~William Montgomery.
\newblock Robust fingerprinting codes: a near optimal construction.
\newblock In {\em Proceedings of the Tenth Annual ACM Workshop on Digital
  Rights Management}, DRM '10, page 3–12, New York, NY, USA, 2010.
  Association for Computing Machinery.

\bibitem[BS98]{IEEE:BS98}
D.~Boneh and J.~Shaw.
\newblock Collusion-secure fingerprinting for digital data.
\newblock {\em IEEE Transactions on Information Theory}, 44(5):1897--1905,
  1998.

\bibitem[Cac00]{EPRINT:Cachin00}
Christian Cachin.
\newblock An information-theoretic model for steganography.
\newblock Cryptology ePrint Archive, Report 2000/028, 2000.
\newblock \url{https://eprint.iacr.org/2000/028}.

\bibitem[CG24]{EPRINT:ChrGun24}
Miranda Christ and Sam Gunn.
\newblock Pseudorandom error-correcting codes.
\newblock Cryptology ePrint Archive, Paper 2024/235, 2024.
\newblock \url{https://eprint.iacr.org/2024/235}.

\bibitem[CGZ23]{EPRINT:ChrGunZam23}
Miranda Christ, Sam Gunn, and Or~Zamir.
\newblock Undetectable watermarks for language models.
\newblock Cryptology ePrint Archive, Paper 2023/763, 2023.
\newblock \url{https://eprint.iacr.org/2023/763}.

\bibitem[CLW{\etalchar{+}}24]{LLMWAT6}
Zhongze Cai, Shang Liu, Hanzhao Wang, Huaiyang Zhong, and Xiaocheng Li.
\newblock Towards better statistical understanding of watermarking llms, 2024.

\bibitem[FGJ{\etalchar{+}}23]{EPRINT:FGJMMW23}
Jaiden Fairoze, Sanjam Garg, Somesh Jha, Saeed Mahloujifar, Mohammad Mahmoody,
  and Mingyuan Wang.
\newblock Publicly detectable watermarking for language models.
\newblock Cryptology ePrint Archive, Paper 2023/1661, 2023.
\newblock \url{https://eprint.iacr.org/2023/1661}.

\bibitem[GM24]{ARXIV:GM24}
Noah Golowich and Ankur Moitra.
\newblock Edit distance robust watermarks for language models, 2024.

\bibitem[HD17]{StegoTraining}
Jamie Hayes and George Danezis.
\newblock Generating steganographic images via adversarial training.
\newblock {\em Advances in neural information processing systems}, 30, 2017.

\bibitem[Hop04]{Hop04}
Nicholas~J. Hopper.
\newblock Toward a theory of steganography.
\newblock Technical report, 2004.

\bibitem[Hou23a]{WhiteHouse}
White House.
\newblock Blueprint for an {AI Bill of Rights}.
\newblock {\em Office of Science and Technology Policy}, 2023.

\bibitem[Hou23b]{WhiteHouse2}
White House.
\newblock Fact sheet: Biden-harris administration secures voluntary commitments
  from leading artificial intelligence companies to manage the risks posed by
  ai.
\newblock {\em Statements and Releases}, 2023.

\bibitem[Hou23c]{WhiteHouse3}
White House.
\newblock Fact sheet: President biden issues executive order on safe, secure,
  and trustworthy artificial intelligence.
\newblock {\em Statements and Releases}, 2023.

\bibitem[JGHG24]{LLMWAT4}
Zhengyuan Jiang, Moyang Guo, Yuepeng Hu, and Neil~Zhenqiang Gong.
\newblock Watermark-based detection and attribution of ai-generated content,
  2024.

\bibitem[KGW{\etalchar{+}}23a]{ICML:KGWKMG23}
John Kirchenbauer, Jonas Geiping, Yuxin Wen, Jonathan Katz, Ian Miers, and Tom
  Goldstein.
\newblock A watermark for large language models.
\newblock In Andreas Krause, Emma Brunskill, Kyunghyun Cho, Barbara Engelhardt,
  Sivan Sabato, and Jonathan Scarlett, editors, {\em Proceedings of the 40th
  International Conference on Machine Learning}, volume 202 of {\em Proceedings
  of Machine Learning Research}, pages 17061--17084. PMLR, 23--29 Jul 2023.

\bibitem[KGW{\etalchar{+}}23b]{kirchenbauer2023reliability}
John Kirchenbauer, Jonas Geiping, Yuxin Wen, Manli Shu, Khalid Saifullah, Kezhi
  Kong, Kasun Fernando, Aniruddha Saha, Micah Goldblum, and Tom Goldstein.
\newblock On the reliability of watermarks for large language models.
\newblock {\em arXiv preprint arXiv:2306.04634}, 2023.

\bibitem[KJGR21]{CCS:KJGR21}
Gabriel Kaptchuk, Tushar~M. Jois, Matthew Green, and Aviel~D. Rubin.
\newblock Meteor: Cryptographically secure steganography for realistic
  distributions.
\newblock In Giovanni Vigna and Elaine Shi, editors, {\em ACM CCS 2021: 28th
  Conference on Computer and Communications Security}, pages 1529--1548,
  Virtual Event, Republic of Korea, November~15--19, 2021. {ACM} Press.

\bibitem[KTHL23]{ARXIV:KTHL23}
Rohith Kuditipudi, John Thickstun, Tatsunori Hashimoto, and Percy Liang.
\newblock Robust distortion-free watermarks for language models, 2023.

\bibitem[LRW{\etalchar{+}}24]{LLMWAT3}
Xiang Li, Feng Ruan, Huiyuan Wang, Qi~Long, and Weijie~J. Su.
\newblock A statistical framework of watermarks for large language models:
  Pivot, detection efficiency and optimal rules, 2024.

\bibitem[MU05]{MU05}
Michael Mitzenmacher and Eli Upfal.
\newblock {\em Probability and Computing: Randomized Algorithms and
  Probabilistic Analysis}.
\newblock Cambridge University Press, 2005.

\bibitem[NFH{\etalchar{+}}07]{SPRINGER:NFHKWOI07}
Koji Nuida, Satoshi Fujitsu, Manabu Hagiwara, Takashi Kitagawa, Hajime
  Watanabe, Kazuto Ogawa, and Hideki Imai.
\newblock An improvement of tardos's collusion-secure fingerprinting codes with
  very short lengths.
\newblock In Serdar Bozta{\c{s}} and Hsiao-Feng~(Francis) Lu, editors, {\em
  Applied Algebra, Algebraic Algorithms and Error-Correcting Codes}, pages
  80--89, Berlin, Heidelberg, 2007. Springer Berlin Heidelberg.

\bibitem[NJA24]{LLMWAT2}
Alexander Nemecek, Yuzhou Jiang, and Erman Ayday.
\newblock Topic-based watermarks for llm-generated text, 2024.

\bibitem[PHZS24]{ARXIV:PHZS24}
Qi~Pang, Shengyuan Hu, Wenting Zheng, and Virginia Smith.
\newblock Attacking llm watermarks by exploiting their strengths, 2024.

\bibitem[QYH{\etalchar{+}}24]{LLMWAT1}
Wenjie Qu, Dong Yin, Zixin He, Wei Zou, Tianyang Tao, Jinyuan Jia, and Jiaheng
  Zhang.
\newblock Provably robust multi-bit watermarking for ai-generated text via
  error correction code, 2024.

\bibitem[SG23]{SynthID}
Pushmeet~Kohli Sven~Gowal.
\newblock Identifying ai-generated images with synthid.
\newblock {\em Google Deepmind}, 2023.

\bibitem[Sri24]{Brookings}
Siddarth Srinivasan.
\newblock Detecting {AI} fingerprints: A guide to watermarking and beyond.
\newblock {\em Brookings}, 2024.

\bibitem[Tar08]{ACM:Tar08}
G\'{a}bor Tardos.
\newblock Optimal probabilistic fingerprint codes.
\newblock {\em J. ACM}, 55(2), may 2008.

\bibitem[WKGG24]{TreeRing}
Yuxin Wen, John Kirchenbauer, Jonas Geiping, and Tom Goldstein.
\newblock Tree-rings watermarks: Invisible fingerprints for diffusion images.
\newblock {\em Advances in Neural Information Processing Systems}, 36, 2024.

\bibitem[XYL24]{LLMWAT5}
Xiaojun Xu, Yuanshun Yao, and Yang Liu.
\newblock Learning to watermark llm-generated text via reinforcement learning,
  2024.

\bibitem[ZALW23]{ARXIV:ZALW23}
Xuandong Zhao, Prabhanjan Ananth, Lei Li, and Yu-Xiang Wang.
\newblock Provable robust watermarking for ai-generated text, 2023.

\bibitem[Zam24]{ARXIV:Zam24}
Or~Zamir.
\newblock Excuse me, sir? your language model is leaking (information), 2024.

\bibitem[ZEF{\etalchar{+}}23]{ARXIV:ZEFVAB23}
Hanlin Zhang, Benjamin~L. Edelman, Danilo Francati, Daniele Venturi, Giuseppe
  Ateniese, and Boaz Barak.
\newblock Watermarks in the sand: Impossibility of strong watermarking for
  generative models, 2023.

\bibitem[ZKJFF18]{Hidden}
Jiren Zhu, Russell Kaplan, Justin Johnson, and Li~Fei-Fei.
\newblock Hidden: Hiding data with deep networks.
\newblock In {\em Proceedings of the European conference on computer vision
  (ECCV)}, pages 657--672, 2018.

\end{thebibliography}

\ifx\cols\one
\appendix
\fi
\ifx\cols\two
\appendices

\fi

\section{Implications for existing LLM watermarking schemes}\label{sec:block-appendix}
\ifx\cols\one
As we show next, existing watermarking schemes can be viewed as block-by-block schemes. 
The only meaningful restriction imposed by Definition~\ref{def:block-by-block-embedding} 
is on the syntax of the robustness condition $R$.
Notice that one could in some trivial sense view all complete
schemes as block-by-block by considering an entire generation a block.
This triviality is unhelpful for our black-box constructions and
unnecessary for most schemes, but illustrates how broad our framework is in general.

\paragraph{Notation}
For a string $T = \tau_1 \tau_2 \dots \tau_{|T|}$, we define
$T_{i:j} := \tau_i \dots \tau_j$, $T_{\le k} := \tau_{1:k}$.
We write $\tau \in T$ to mean that $\tau$ is a \emph{substring} of $T$, i.e.
$\tau = T_{i:j}$ for some $i$, $j$.

\subsection{Intuition for how existing schemes work}\label{sec:intuition}
We give a brief intuition for the two dominant approaches to watermarking language models in prior work. Not all schemes fit into these categories, including \cite{EPRINT:FGJMMW23}, which we discuss 
\ifx\cols\one
below.
\fi
\ifx\cols\two
in the full version of the paper.
\fi

\paragraph{Derandomizing and measuring correlation}
One class of schemes work by derandomizing the language model using the secret key 
and then detecting the effects of this derandomization 
\cite{EPRINT:ChrGunZam23,Aar22,ARXIV:KTHL23,EPRINT:ChrGun24,ARXIV:GM24}. Because a probability distribution 
can be derandomized without being noticeably altered, these schemes enjoy some 
level of undetectability.

At a very high level, the derandomization schemes work as follows.
Language models generate text token-by-token. Let $\prompt$ be a prompt and $T_{< i}$ 
be an already-generated prefix.
In the unmarked model, the next token $\tau_{i}$ is sampled according to some 
distribution $p_\prompt(\cdot | \prompt\|T_{< i})$. 
The marked model is the same, except that a secret sequence 
$\sigma_1 \sigma_2 \dots \sigma_\ell$ of (pseudo-)random bits is used to 
derandomize the sampling of $\tau_{i}$ in a way that induces a correlation between 
$\sigma_{i}$ and $\tau_{i}$. We discuss how particular schemes derandomize $p_\prompt$
in Appendices \ref{CG-embedding} and \ref{ap:kthl}. Each of these schemes 
differ in how the secret sequence is derived, how it is used to derandomize the next token, 
and how the induced correlations are measured and used to detect.

\paragraph{Red/green lists}
Another class of statistical watermarking schemes bias the sampling of tokens using using so-called red and green lists, determined using a hash function \cite{ICML:KGWKMG23,ARXIV:ZALW23}. Tokens in the 
green list are sampled more often compared to the unmarked 
language model, and tokens in the red list are sampled less often. 
Depending on the scheme, these red and green lists can be solely determined by 
a secret key \cite{ARXIV:ZALW23} or also nearby tokens
\cite{ICML:KGWKMG23}.

To detect whether a watermark is present within text $\hT$, one can check the 
proportion of green list tokens in $\hT$. If the green list contains half the tokens sampled uniformly at random, say, then unmarked text should have close to 50\% green tokens. Marked text will have many more green tokens. Detection works by testing whether the proportion of green tokens in any substring is above a statistically significant threshold. Soundness and robustness are proved using concentration bounds on the expected number of green tokens in unmarked and marked text, respectively. However, these schemes are not undetectable. By design, green tokens are noticeably more likely in the watermarked model. With poly-many queries, an adversary could conceivably reconstruct the lists in full, though this seems very costly in practice.

\fi
\ifx\cols\two
The full version of this paper provides a brief intuition for two dominant 
approaches to watermarking language models, as well as proofs that many 
existing schemes are AEB-robust. In \secref{sec:CGZ}, we show how a zero-bit scheme from \cite{EPRINT:ChrGunZam23} can be cast in our block-by-block language, and that it is adaptively AEB-robust.
\fi

\subsection{Undetectable watermarks \cite{EPRINT:ChrGunZam23}}\label{sec:CGZ}
The zero-bit watermarking scheme of Christ, Gunn and Zamir \cite{EPRINT:ChrGunZam23} 
is easily cast as a block-by-block scheme, with all the provable properties needed 
to invoke our constructions: undetectability, soundness, completeness, and robustness.
The robustness guarantee is called \emph{$b(\ell)$-substring completeness}. The 
construction $\wat$ from \cite[Algorithms 5-6]{EPRINT:ChrGunZam23} 
is a $\bigl(\frac{8}{\ln 2}\lambda\sqrt{\ell}\bigr)$-substring 
complete watermarking scheme \cite[Theorem 8]{EPRINT:ChrGunZam23}, assuming one-way functions exist. Robustness is 
guaranteed to hold for generations with enough empirical entropy, with the amount 
required depending on the length of the generation.

\begin{definition}[Empirical entropy \cite{EPRINT:ChrGunZam23}]
    \label{def:empirical-entropy}
    For strings $\tau,\prompt \in \tokSet^*$ and model $\GenModel$, 
    we define the \emph{empirical entropy of $\tau$ with respect to $\GenModel$ and $\prompt$} 
    as $\eH(\tau;\prompt) := -\log \Pr[\GenModel(\prompt)_{\le|\tau|}=\tau]$.
\end{definition}

\begin{definition}[Substring completeness \cite{EPRINT:ChrGunZam23}]\label{def:CGZ-robustness}
A watermarking scheme $\wat$ is \emph{$b(\ell)$-substring complete} if for every 
prompt $\prompt$ and security parameter $\lambda$:

\ifx\cols\one
\begin{align*}
    \Pr_{\substack{\sk \gets \Setup(1^\lambda) \\ T \gets \Wat_\sk(\prompt)}}\biggl[
    \exists \text{ length-$\ell$ substring } \tau \in T ~:
    ~ \underbrace{\eH(\tau;\prompt) \ge b(\ell)}_{\text{enough entropy}}
    \text{ and } \underbrace{\Detect_\sk(\tau) = 0}_{\text{detection fails}}  \biggr]
    < \negl(\lambda).
\end{align*}
\fi
\ifx\cols\two
\begin{align*}
    \Pr_{\substack{\sk \gets \Setup(1^\lambda) \\ T \gets \Wat_\sk(\prompt)}}&\biggl[
    \exists \text{ length-$\ell$ substring } \tau \in T ~:\\
    &~ \underbrace{\eH(\tau;\prompt) \ge b(\ell)}_{\text{enough entropy}}
    \text{ and } \underbrace{\Detect_\sk(\tau) = 0}_{\text{detection fails}}  \biggr]\\
    &< \negl(\lambda).
\end{align*}
\fi

\end{definition}
To detect a watermark on input $\hT$, the construction $\Detect_\sk(\hT)$ 
outputs the OR of $\Detect_\sk(\htau)$ for all substrings $\htau \in \hT$. 
By substring completeness, $\Detect_\sk(\hT) = 1$ if there exists a substring 
$\tau\in T$ that satisfies the following two conditions: 
\begin{enumerate}[\label=(i)]
    \item $\eH(\tau;\prompt) \ge b(|\tau|)$.
    \item There exists a substring $\htau \in \hT$ for which $\htau = \tau$.
\end{enumerate}

The watermarking scheme $\wat$ described above is naturally viewed as a 
block-by-block scheme. Condition (i) defines the blocks:\ifx\cols\two\\\fi
$\isBlock_{\sub{CGZ}}(\tau;\prompt) = \1\bigl(\eH(\tau;\prompt)\ge b(|\tau|)\bigr)$.
Condition (ii) tells us that the binary relation $\bumpeq$ on strings 
$\tau$ and $\htau$ is string equality.  Let $\RCGZ$ be the AEB-robustness condition
induced by the function $\isBlock_{\sub{CGZ}}$ and the string equality relation, 
according to \defref{def:R-k}.

\begin{clm}\label{clm:CGZ-R-na-robust}
Let $\wat$ be the $b(\ell)$-substring complete watermarking scheme from 
\cite{EPRINT:ChrGunZam23}. Then $\wat$ is non-adaptively $\RCGZ$-robust. 
\end{clm}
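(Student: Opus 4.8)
The plan is to unwind the definitions and reduce the claim directly to the $b(\ell)$-substring completeness of $\wat$ (Definition~\ref{def:CGZ-robustness}). First I would fix a prompt $\prompt$ and security parameter $\lambda$, and recall that non-adaptive $\RCGZ$-robustness (Definition~\ref{def-robustness-zero-nonadaptive}, the zero-bit non-adaptive analogue of Definition~\ref{def-robustness}) requires: for $\sk \gets \Setup(1^\lambda)$, $T \gets \Wat_\sk(\prompt)$, and $\hT$ produced by an efficient adversary given $T$, the event $\RCGZ(\lambda,\prompt,T,\hT)=1 \ \wedge\ \Detect_\sk(\hT)=0$ occurs with negligible probability. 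By \defref{def:R-k} applied to $\isBlock_{\sub{CGZ}}$ and the equality relation, $\RCGZ(\lambda,\prompt,T,\hT)=1$ means $\numBlocks(\hT;\prompt,T) \ge 1$, i.e.\ there is at least one minimal block $\block \in \Blocks(T;\prompt)$ and a substring $\hat\block \in \hT$ with $\hat\block = \block$. So it suffices to show that whenever such a block exists, $\Detect_\sk(\hT)=1$ except with negligible probability.

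The key steps are: (1) observe that $\block \in \Blocks(T;\prompt)$ is a minimal block, so $\isBlock_{\sub{CGZ}}(\block;\prompt')=1$ for the appropriate prefix-extended prompt $\prompt' = \prompt\|\block_1\|\cdots\|\block_{j-1}$, which by definition means $\eH(\block;\prompt') \ge b(|\block|)$; (2) note that $\block$ is itself a substring of $T$ (a length-$|\block|$ substring of the generation, sitting at the position carved out by the $\Blocks$ parsing), so it is a candidate witness for the substring-completeness event relative to the generation $\Wat_\sk(\prompt')$ — and since the model is prefix-specifiable, the distribution of $T$ conditioned on the prefix through $\block_{j-1}$ is exactly $\Wat_\sk(\prompt')$ prepended with that prefix; (3) invoke $b(\ell)$-substring completeness to conclude $\Detect_\sk(\block)=1$ except with negligible probability; (4) since $\block = \hat\block$ is a substring of $\hT$, and $\Detect_\sk(\hT)$ is defined as the OR of $\Detect_\sk(\htau)$ over all substrings $\htau \in \hT$, we get $\Detect_\sk(\hT) = 1$. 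A union bound over the (polynomially many, since $|T| \le \poly(\lambda)$) possible blocks handles the fact that the adversary may choose which block to copy.

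The main obstacle is step (2): carefully matching the block parsing with the substring-completeness statement, which is phrased for a generation $T \gets \Wat_\sk(\prompt)$ and a single fixed prompt, whereas blocks deeper in the parse $\Blocks(T;\prompt)$ are minimal blocks with respect to \emph{extended} prompts $\prompt\|\block_1\|\cdots\|\block_{j-1}$. I would resolve this by appealing to prefix-specifiability (\secref{sec:prelims:language-models}): the joint distribution of $(\sk, T)$ is unchanged if we think of $T$ as generated by first sampling the prefix and then running $\Wat_\sk$ on the extended prompt, so the substring-completeness guarantee transfers. One subtlety worth flagging: $\eH$ and $\isBlock_{\sub{CGZ}}$ must be consistent about whether the entropy bound $b(\ell)$ is evaluated against the empirical entropy with respect to $\prompt$ or the extended prompt; I would state the block function as using the extended prompt (matching \defref{def:is-block}'s two-argument signature and \defref{def:blocks}) and verify that $b(|\block|)$-substring completeness of \cite{EPRINT:ChrGunZam23} is exactly the condition needed, with no loss.
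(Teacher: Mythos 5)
Your proposal is correct and follows essentially the same route as the paper's proof: $\RCGZ=1$ yields a substring of $\hT$ equal to a sufficiently entropic block of $T$, $b(\ell)$-substring completeness forces $\Detect_\sk$ to accept that block with high probability, and the OR-over-substrings structure of $\Detect$ finishes the argument. The only differences are cosmetic: your union bound over candidate blocks is unnecessary, since substring completeness already bounds the probability that \emph{any} sufficiently entropic substring of $T$ fails detection, and your prefix-specifiability step handles an extended-prompt bookkeeping point that the paper simply elides by stating the block's entropy condition directly with respect to $\prompt$.
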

\begin{proof}
    Fix a prompt $\prompt \in \bits^*$ of length $|\prompt| \leq \poly(\lambda)$ and 
    efficient adversary $\A$. To show that $\wat$ is non-adaptively $\RCGZ$-robust, 
    it suffices to show the following:
    \[
        \Pr
        \left[ \
        \Detect_\sk(\hT) = 1 ~\mid~ \RCGZ(\lambda, Q, T, \hat{T}) = 1
        \right] \geq 1 - \negl(\lambda),
    \]
    where $\sk \gets \Setup(1^\lambda)$, $T \gets \Wat_\sk(\prompt)$, and
    $\hT \gets \A(1^\lambda, T)$.

    By definition of $\RCGZ$, there exist substrings 
    $\tau \in T, \htau \in \hT$ such that 
    (i) $\isBlock_\sf{CGZ}(\tau; \prompt) = 1$, and (ii) $\htau = \tau$.
    Because $\wat$ is $b(\ell)$-substring complete,
    $\Detect_\sk(\htau) = \Detect_\sk(\tau) = 1$ with high probability. 
    By construction $\Detect(\hT) = 1$ with high probability, so $\wat$ is 
    non-adaptively $\RCGZ$-robust.
\end{proof}

\subsubsection{$\wat$ is adaptively robust}\label{ap:adaptive}
We now prove that $\wat$ is
in fact adaptively robust (strengthening \cite[Theorem 5]{EPRINT:ChrGunZam23}).
We suspect some other constructions from prior work are adaptively robust, but do not
work out the details of extending their existing proofs.

In \cite[Theorem 5]{EPRINT:ChrGunZam23}, the authors replace the pseudorandom function (PRF) with a random oracle and
prove the desired watermarking properties over the randomness of the oracle.
They then transition from their random oracle construction to the
PRF construction to achieve the desired results.
To extend the proof, we recall some of the notation of oracle (zero-bit)
watermarking schemes. Notice we do not need a $\Setup$ function, because the random
oracle will serve as the secret key.

\begin{definition}[Oracle watermarks]\label{def:oracle-wat}
An \emph{oracle watermarking scheme} $\wat^{\calO}$ is pair of algorithms
$(\Wat^{\calO},\Detect^{\calO})$ with access to a random oracle $\calO$.
\begin{itemize}
    \item $\Wat^{\calO}(\prompt) \to T$ is a randomized algorithm that
    takes a prompt $\prompt$ as input and outputs a string
    $T\in \tokSet^{*}$.
    \item $\Detect^{\calO}(T) \to b$ is a deterministic
    algorithm that takes a string
    $T\in \tokSet^{*}$ as input and outputs a bit $b \in \bits$.
\end{itemize}
We define additional properties of watermarking schemes
(block-by-block, soundness, undetectability, AEB-robustness)
as before, with respect to an adversary that does not know the random oracle.
\end{definition}

Our goal is to prove that the oracle watermarking scheme of
\cite{EPRINT:ChrGunZam23} is adaptively robust, which will imply that their
PRF-based scheme is adaptively robust (assuming the security of the PRF). For our proof, we
only need the following property to hold.

\begin{definition}[Unpredictable oracle watermarks]\label{def:unpredictable}
An oracle watermarking scheme $\wat^{\calO}=(\Wat^{\calO},\Detect^{\calO})$
is \emph{unpredictable} if for every $\A^{\calO}$
issuing most $\poly(\lambda)$ queries, $X_\prompt$
and $X_{\A}$ are disjoint with high probability,
in the probability experiment defined by
\begin{itemize}
    \item $\prompt \gets \A^{\calO}$ with $X_{\A} = \{\text{Query inputs made by }\A^{\calO}\}$
    \item $X_\prompt = \{\text{Query inputs made by }\Wat^{\calO}(\prompt)\}$.
\end{itemize}
\end{definition}

Notice that the counterexample scheme $\wat'$
from \secref{sec:non-adaptive-to-adaptive} is ``predictable.'' An
algorithm $\A$ could use $\Wat'^\calO$ to generate a watermarked output $T$, since it knows both
$\calO$ and $\Wat'$. Then while running $\Detect$, $\Wat'^{\calO}(T)$ would
query the same oracle point with high probability.
Fortunately, however, the scheme from \cite{EPRINT:ChrGunZam23} does
satisfy \defref{def:unpredictable}.

\begin{lemma}\label{lem:CGZ-direct}
The oracle watermarking scheme $\wat^{\calO}=(\Wat^{\calO},\Detect^{\calO})$
from \cite{EPRINT:ChrGunZam23} is unpredictable.
\end{lemma}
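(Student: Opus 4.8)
The plan is to unwind the oracle-query pattern of the scheme of \cite{EPRINT:ChrGunZam23} and then apply a union bound over lazily sampled oracle points. Recall that $\Wat^\calO(\prompt)$ generates $T$ token by token while maintaining a running context that it resets each time the text produced since the previous reset has accumulated empirical entropy at least $\lambda$ with respect to $\prompt$; these chunks are exactly the minimal blocks $\block_1,\block_2,\dots$ of $T$ (cf.\ \defref{def:blocks} and \defref{def:empirical-entropy}). The structural fact I would extract from their construction is that every oracle query of $\Wat^\calO(\prompt)$ is on an input determined (via a fixed, public encoding) by $\block_1\|\cdots\|\block_j$ for some $j\ge 1$ --- the context that seeds block $\block_{j+1}$ --- while the initial, unwatermarked chunk $\block_1$ is sampled from $\Wat$'s internal coins with no oracle call. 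Since in the random-oracle model the output of $\Wat^\calO(\prompt)$ is distributed exactly as $\GenModel(\prompt)$, for any token-string $\tau$ that parses as a concatenation of $j\ge 1$ minimal blocks with respect to $\prompt$ (and the left-hand side is $0$ otherwise),
\[
    \Pr_{T\gets\Wat^\calO(\prompt)}\bigl[\,\block_1\|\cdots\|\block_j = \tau\,\bigr]
    \;\le\;\Pr_{T'\gets\GenModel(\prompt)}\bigl[\,T'_{\le|\tau|}=\tau\,\bigr]
    \;=\;2^{-\eH(\tau;\prompt)}\;\le\;2^{-\lambda},
\]
using that each of the $j$ completed blocks contributes at least $\lambda$ to $\eH(\tau;\prompt)$. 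Hence a single oracle query of $\Wat^\calO(\prompt)$ equals any fixed string with probability at most $2^{-\lambda}$.

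Given this, unpredictability follows from a routine union bound. Let $\A^\calO$ issue $q=\poly(\lambda)$ oracle queries and output $\prompt$, and let $B=\poly(\lambda)$ bound the number of blocks --- hence oracle queries --- of $\Wat^\calO(\prompt)$. Revealing $X_\A$ and $\prompt$ first and then running $\Wat^\calO(\prompt)$ with a lazily sampled oracle, I would argue by induction on $i$ that, conditioned on $\Wat$'s first $i-1$ oracle queries all avoiding $X_\A$, those oracle answers are fresh and uniform, so $\Wat$'s $i$-th query is a function of fresh randomness together with $\prompt$ and $\block_1,\dots,\block_{i-1}$ alone; by the chain rule for $\GenModel$'s token distribution, conditioned on this history the next block still carries empirical entropy at least $\lambda$, so $\Wat$'s $i$-th query hits any fixed $c\in X_\A$ with probability at most $2^{-\lambda}$ and lands in $X_\A$ at all with probability at most $q\,2^{-\lambda}$. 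Summing over the at most $B$ queries of $\Wat^\calO(\prompt)$ yields
\[
    \Pr\bigl[\,X_\prompt\cap X_\A\neq\emptyset\,\bigr]\;\le\;q\cdot B\cdot 2^{-\lambda}\;=\;\negl(\lambda),
\]
which is exactly the conclusion of \defref{def:unpredictable}.

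The step I expect to be the real obstacle is making the first paragraph precise against the actual construction of \cite{EPRINT:ChrGunZam23}: I must verify that $\Wat^\calO$ never queries the oracle on a context carrying fewer than $\lambda$ bits of committed empirical entropy --- in particular that the unwatermarked initial chunk triggers no oracle call --- since an adversary who knows $\prompt$ and the public description of $\Wat$ could otherwise query exactly that predictable context and defeat disjointness. This is precisely the property that the counterexample scheme $\wat'$ of \secref{sec:non-adaptive-to-adaptive} fails (it is ``predictable'' because $\Detect$ re-queries an oracle point an adversary can force), which is why \defref{def:unpredictable} is the right hypothesis to isolate. The remaining work is bookkeeping: formalizing the lazy-sampling induction so the $2^{-\lambda}$ bound is applied conditioned both on $\A$'s view and on $\Wat$'s own earlier collision-free queries, and checking that the ``distributed exactly as $\GenModel(\prompt)$'' fact --- and hence the per-block entropy lower bound --- survives this conditioning, which it does by the chain rule for the model's next-token distribution.
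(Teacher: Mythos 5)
Your proposal is correct and follows essentially the same route as the paper's (very terse) proof: each oracle input of $\Wat^\calO(\prompt)$ is determined by a block of generated text carrying at least $\lambda$ bits of empirical entropy, so it coincides with any fixed adversarial query point with probability at most $2^{-\lambda}$, and a union bound over the polynomially many pairs finishes the argument. Your lazy-sampling induction just makes explicit the conditioning step that the paper's one-line sketch glosses over; no new idea is needed beyond that bookkeeping.
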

\begin{proof}
The unpredictability can be seen almost immediately from the prior work's
construction, by inspecting the proofs of \cite[Theorems 5-6]{EPRINT:ChrGunZam23}. 
Each $x \in X_\prompt$
is a block of generated text with at least $\lambda$-bits of
empirical entropy (\defref{def:empirical-entropy}). 
Hence, for all $x' \in X_\A$ we have $\Pr[x = x'] \le 2^{-\lambda}$.
Because $|X_{\A}| \le \poly(\lambda)$ and $|X_{\prompt}|\le \poly(\lambda)$, 
the probability of a collision is negligible.
\end{proof}

\begin{lemma}[Adaptivity from unpredictable watermarks]\label{lem:direct-to-adapt}
Let $\wat^{\calO}=(\Wat^{\calO},\Detect^{\calO})$ be a block-by-block zero-bit oracle
watermarking scheme that is unpredictable. If $\Msg$ is non-adaptively $R_1$-robust, then
it is adaptively $R_1$-robust.
\end{lemma}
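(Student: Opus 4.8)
The plan is to reduce adaptive $R_1$-robustness to non-adaptive $R_1$-robustness, using unpredictability to \emph{decouple} the different invocations of the watermarking oracle. Suppose toward a contradiction that some efficient adversary $\A$, making $q = \poly(\lambda)$ queries $Q_1,\dots,Q_q$ to $\Wat^{\calO}$ and receiving generations $T_1,\dots,T_q$, outputs $\hat{T}$ with $R_1(\lambda,(Q_i)_i,(T_i)_i,\hat{T}) = 1$ but $\Detect^{\calO}(\hat{T}) = 0$, with non-negligible probability $\varepsilon$. Since $R_1 = 1$, there is a \emph{witness}: an index $j\in[q]$ and a block $\tau \in \Blocks(T_j; Q_j\|\cdots)$ that is $\bumpeq$-approximated by some substring of $\hat{T}$ (identifying the smallest such $j$ is efficient, since $\Blocks$ and $\numBlocks$ are computable).

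First I would establish a disjointness event $\mathsf{Disj}$: the sets $X_1,\dots,X_q$ of oracle points queried by the $q$ invocations of $\Wat^{\calO}$ are pairwise disjoint. This follows from a union bound over $q$ applications of unpredictability: for each $j$, consider the oracle-aided adversary that internally simulates $\A$ together with the first $j{-}1$ watermark evaluations — legitimately querying $\calO$ to do so — and then outputs $Q_j$; its query set is exactly $X_1 \cup \cdots \cup X_{j-1}$, and \defref{def:unpredictable} says $X_j$ is disjoint from it except with negligible probability. Lazy sampling then shows that, conditioned on $\mathsf{Disj}$, the transcript $(T_1,\dots,T_q)$ is distributed exactly as if each $T_i$ were produced with its own independent fresh random oracle that agrees with the global oracle on the disjoint block $X_i$.

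Next, guess $j^\star \getsr [q]$ and build a non-adaptive adversary $\B$: it runs $\A$ internally, answering each query $Q_i$ with $i \neq j^\star$ by evaluating $\Wat$ under a fresh random oracle that $\B$ simulates for itself, forwarding the single query $Q_{j^\star}$ to its own challenge generation $\Wat^{\calO}(\cdot)$, and finally outputting the $\hat{T}$ that $\A$ produces. (If the non-adaptive game fixes the prompt rather than letting $\B$ choose it, average over the distribution of $Q_{j^\star}$; each fixed value is covered.) Conditioned on $\mathsf{Disj}$ and on $j^\star$ being the witness index $j$, the decoupling above makes $\B$'s internal simulation distributed identically to the real adaptive experiment, so $\hat{T}$ still $\bumpeq$-approximates a block of $T_{j^\star}$ — i.e. $R_1(\lambda, Q_{j^\star}, T_{j^\star}, \hat{T}) = 1$ — while $\Detect^{\calO}(\hat{T}) = 0$. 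Hence $\B$ violates non-adaptive $R_1$-robustness with probability at least $\varepsilon/q - \negl(\lambda)$, a contradiction.

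The step I expect to be the main obstacle is making the coupling of the two experiments airtight at the \emph{detection} call. In the real run, $\Detect^{\calO}(\hat{T})$ inspects substrings of $\hat{T}$, including ones that happen to be (approximate) blocks of the \emph{other} generations $T_i$ ($i\neq j^\star$); under the honest oracle these could fire, whereas in $\B$'s game those oracle positions were never set by the challenge oracle, so $\B$ reads fresh randomness there. This mismatch is harmless in the direction we need: conditioned on $\A$ winning we have $\Detect^{\calO}(\hat{T}) = 0$, so no such substring fires in the real run, and fresh oracle values make detection fire only with negligible probability (soundness of the underlying scheme). Pinning this down — together with the bookkeeping that detection of a block of $T_{j^\star}$ reads only positions in $X_{j^\star}$, so it is consistent between the two experiments — is the technical heart; the conceptual content is entirely in the unpredictability-based decoupling, which is exactly what rules out the counterexample of \secref{sec:non-adaptive-to-adaptive}.
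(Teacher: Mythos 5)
Your proposal is correct and is essentially the paper's own proof: guess the witness index $j^*\getsr[q]$, answer the remaining queries by simulating $\Wat$ under a lazily-sampled fresh oracle, forward query $j^*$ to the real oracle, use unpredictability (your per-$j$ union bound versus the paper's single random-query reduction to the first collision is an immaterial difference) to argue the oracle-query sets are disjoint so the simulated transcript matches the real one, and pay a factor $q$ before invoking non-adaptive robustness averaged over the induced prompt distribution. The one place you go beyond the lemma's stated hypotheses is your explicit handling of the detection-call coupling via soundness (fresh oracle values at points belonging to the other generations should not cause $\Detect$ to fire); the paper's proof compresses this into "switching sample spaces conditioned on no collision," so your extra care addresses exactly the step the paper treats tersely, and the needed property holds for the instantiation the lemma is applied to.
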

\begin{proof}
For any $R$, non-adaptive $R$-robustness for any fixed prompt implies non-adaptive
$R$-robustness for any distribution over prompts. The latter implies adaptive
$R$-robustness for any adversary $\A_1$ making only a \emph{single query} to its oracle 
$\Wat^\calO(\cdot)$,
as the queried prompt is sampled from a fixed distribution independent of $\calO$.
It remains to show that adaptive $R_1$-robustness for single query implies
adaptive $R_1$-robustness for any $q = \poly(\lambda)$ queries.

Let $\A^{\Wat^\calO(\cdot)}$ be an adaptive-robustness adversary
 making $q$ queries for which $\Pr[\text{\fail}] \ge f$,
for $q = \poly(\lambda)$.  
We define $\A_1^{\Wat^{\calO}(\cdot)}$ making a single
query as follows:

\begin{enumerate}
    \item Sample a random function $\calO'$.\footnote{To make this efficient, the function $\calO'$ can be lazily sampled in response to $\A$'s queries.}
    \item Sample $j^* \getsr [q]$ uniformly at random.
    \item Run $\A$, responding to its oracle query $Q_i$ as follows:
    \begin{enumerate}
        \item For $i \neq j^*$, return $T_i \gets \Wat^{\calO'}(Q_i)$.
        \item For $i = j^*$, issue challenge query $Q_{j^*}$ and return
        $T_{j^*} \gets \Wat^{\calO}(Q_{j^*})$.
    \end{enumerate}
    \item When $\A$ outputs $\hT$, output $\hT$.
\end{enumerate}
\newcommand{\collide}{\mathtt{COLL}}
Notice that the view of $\A$ is identical between
this game and the adaptive robustness game (Definition~\ref{def-robustness-zero-adaptive}) unless there is a collision ---
some input in common queried to
$\calO'$ and to $\calO$. Define $\collide$ as the event that there exists an oracle input that was queried more than once
across the entire game. In particular, if $\collide$ does not occur, then there
is no input in common between those queried to $\calO$ and to $\calO'$.

Consider a simplified game, where all of $\A$'s queries are answered with the
same random oracle $\calO$. In such a game
$\Pr_{\calO}[\collide] \le \negl(\lambda)$. Otherwise, we could construct $\B^{\calO}$
in the unpredictability game (Def.~\ref{def:unpredictable}) that simulates $\Wat^{\calO}$ for $\A$ and
submits one of $\A$'s queries uniformly at random. Then, we can let $i\le j$ be the
first indices of $\A$'s queries which query the oracle on the same input $x$.
With probability at least $1/q$, $\B$ will submit the $j$th query and therefore will
have win the unpredictability game with probability at least $\Pr_{\calO}[\collide] / q$.

In the argument in the previous paragraph, there is only one random oracle $\calO$. However,
when $\A_1$ simulates $\A$, there are two random oracles: $\A_1$ simulates $\calO'$ to answer all but one of
$\A$'s queries, and $\calO$ is used for query $j^*$. Even so, the argument still works.
Before the first repeated random oracle input, all random oracle outputs are uniform and independent
bit-strings in both games. And therefore, the view of $\A$ is the same in both games, as is $\B$'s success probability.

The argument above allows us to focus just on the case when $\collide$ does not occur and
bound with respect to the event
\[
    \mathcal{E} := \left(\Detect^{\calO}(\hT) = 0\right)  \land \neg \collide.
\]
By law of total probability, the definition of $f$, and the argument above:
\begin{align*}
    f &= \Pr_{\calO}\left[R_1\bigl(\lambda, (Q_i)_i, (T_i)_i, \hT\bigr)=1
    ~\land~ \Detect^{\calO}(\hT) = 0\right]\\
    &\le \Pr[\collide] + \Pr_{j^*,\calO,\calO'}\left[R_1\bigl(\lambda, (Q_i)_i, (T_i)_i, \hT\bigr)=1
    ~\land~ \mathcal{E} \right].
\end{align*}
Notice that this inequality uses the fact that we can switch sample spaces from
the game with one random oracle to the game with two random oracles as long as we
condition on the event $\neg \collide$ in both games.

To complete the proof, we recall $R_1(\lambda, (Q_i)_i, (T_i)_i, \hT) \allowbreak= 1$ only if a substring of $\hT$ approximates a block in $\cup_{i} \Blocks(T_i;Q_i)$. 
Hence if
$R_1(\lambda, (Q_i)_i, (T_i)_i, \hT) = 1$, there exists $i^* \in [q]$ such
that $R_1(\lambda,Q_{i^*}, T_{i^*}, \hT) = 1$. By construction, $j^* = i^*$
with probability $1/q$. Therefore,
\begin{align*}
    f 
    &\le 
    q\cdot \Pr_{j^*,\calO,\calO'}\left[R_1\bigl(\lambda, Q_{j^*}, T_{j^*}, \hT\bigr)=1
    ~\land~ \mathcal{E} \right] + \negl(\lambda) \\
    &< \negl(\lambda).
\end{align*}
The last inequality follows from the fact that $\wat^{\calO}$ is non-adaptively
robust and $\A_1$ issues a single query.
\end{proof}

\begin{clm}\label{clm:CGZ-R-a-robust}
    Let $\wat$ be the $b(\ell)$-substring complete watermarking scheme from 
    \cite{EPRINT:ChrGunZam23}. Then $\wat$ is adaptively $\RCGZ$-robust. 
\end{clm}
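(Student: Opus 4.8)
The plan is to derive adaptive $\RCGZ$-robustness of $\wat$ from the three pieces already in place: non-adaptive $\RCGZ$-robustness of the construction (\clmref{clm:CGZ-R-na-robust}), unpredictability of its random-oracle version (\lemref{lem:CGZ-direct}), and the generic upgrade from non-adaptive to adaptive robustness for unpredictable oracle schemes (\lemref{lem:direct-to-adapt}), followed by a routine pseudorandom-function (PRF) hybrid. The observation that makes \lemref{lem:direct-to-adapt} applicable is that, by \defref{def:R-k}, $\RCGZ$ is precisely the AEB-robustness condition $R_1$ induced by the block predicate $\isBlock_\sf{CGZ}$ and the string-equality relation $\bumpeq$, so the lemma applies verbatim to the CGZ scheme with this choice of $(\isBlock,\bumpeq)$.

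First I would pass to the oracle scheme $\wat^\calO = (\Wat^\calO,\Detect^\calO)$ underlying \cite{EPRINT:ChrGunZam23}, in which the PRF is replaced by a random oracle, and check that it is a block-by-block zero-bit oracle watermarking scheme that is non-adaptively $\RCGZ$-robust. This needs essentially nothing new: the proof of \clmref{clm:CGZ-R-na-robust} used only $b(\ell)$-substring completeness and the definitions of $\isBlock_\sf{CGZ}$ and $\bumpeq$, and in \cite{EPRINT:ChrGunZam23} substring completeness is first established for the random-oracle construction, before the PRF is substituted in. Combining this with the unpredictability of $\wat^\calO$ from \lemref{lem:CGZ-direct}, \lemref{lem:direct-to-adapt} gives that $\wat^\calO$ is adaptively $\RCGZ$-robust.

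It remains to transfer adaptive robustness from $\wat^\calO$ to the PRF-instantiated $\wat$. Here I would run the standard hybrid: given an efficient adaptive-robustness adversary $\A$ against $\wat$ with non-negligible failure probability, consider the identical experiment except that every internal evaluation of $F_\sk$ is answered by a lazily-sampled random function; this is exactly the adaptive-robustness experiment for $\wat^\calO$. Any non-negligible gap in $\Pr[\text{\fail}]$ between the two experiments yields a PRF distinguisher, because the event \fail --- that $\RCGZ(\lambda,(\prompt_i)_i,(T_i)_i,\hat T)=1$ while $\Detect_\sk(\hat T)=0$ --- can be decided in polynomial time from the key and the transcript: $\Detect$ is polynomial-time on polynomially long inputs, and deciding $\RCGZ$ reduces to evaluating $\numBlocks$ and $\isBlock_\sf{CGZ}$, which depend only on $\GenModel$ and the prompts. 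Since $\Pr[\text{\fail}]$ is negligible in the oracle experiment, it is negligible for $\wat$ up to the PRF advantage, which is negligible for efficient $\A$.

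I expect the only real care to be in this last hybrid rather than any genuine obstacle: one must confirm that $\RCGZ$-membership is efficiently checkable from the transcript (so that the distinguisher is efficient), and that lazily simulating the random oracle inside $\Wat^\calO$ --- which itself queries the oracle adaptively while generating each block --- remains polynomial-time. Both follow from the polynomial length of prompts and generations together with the efficiency of $\isBlock_\sf{CGZ}$, $\numBlocks$, and $\Detect$, so the claim follows.
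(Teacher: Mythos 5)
Your proposal is correct and follows essentially the same route as the paper, which proves the claim by combining \clmref{clm:CGZ-R-na-robust}, \lemref{lem:CGZ-direct}, and \lemref{lem:direct-to-adapt}; you merely spell out the random-oracle-to-PRF transfer that the paper leaves implicit (noting only that the distinguisher should use oracle access to the PRF rather than the key itself, since $\Detect$ only needs PRF evaluations).
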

\begin{proof}
    The result follows from \clmref{clm:CGZ-R-na-robust},
    \lemref{lem:CGZ-direct}, and \lemref{lem:direct-to-adapt}.
\end{proof}

\ifx\cols\one
\subsection{Watermarking from pseudorandom codes \cite{EPRINT:ChrGun24}}\label{sec:CG}
The watermarking schemes of Christ and Gunn \cite{EPRINT:ChrGun24} can also be 
cast as block-by-block schemes. We will focus on their secret-key schemes here, but also note that \cite{EPRINT:ChrGun24} propose variants with public
marking and public detection.
The work defines pseudorandom
error-correcting codes (PRCs), and use PRCs to derandomize the language model. 
A zero-bit PRC is a triple of randomized, efficient algorithms
$\PRC = (\sf{KeyGen}, \sf{Encode}, \sf{Decode})$. $\sf{KeyGen}$ generates 
a secret key $\sk$. $\sf{Encode}_\sk(1)$ generates codewords $c$ of length $n$
that are pseudorandom to any efficient adversary without $\sk$. $\sf{Decode}_\sk(T)$ 
attempts to decode a string $T$. The PRC is 
\emph{robust against a channel $\channel$} if
\begin{enumerate}[\label=(a)]
    \item For any fixed $T$ independent of $\sk$, $\sf{Decode}_\sk(T) = \bot$ with high probability over $\sk$.
    \item $\sf{Decode}_\sk(\channel(\sf{Encode_\sk(1)}) = 1$ with high probability over $\channel$.
\end{enumerate}
Christ and Gunn construct PRCs that are robust to 
$p$-bounded channels, assuming either
$2^{O(\sqrt{n})}$-hardness of Learning Parity with Noise (LPN)
or polynomial hardness of LPN and the planted XOR problem at low density.

\begin{definition}[$p$-Bounded channels]
    For any $p \geq 0$, a length-preserving channel $\channel: \Sigma^* \to \Sigma^*$ 
    is \emph{$p$-bounded} if there exists a negligible function $\negl$ such that 
    for all $n \in \N$, $\Pr_{x \gets \bits^n}[\Delta(\channel(x), x) > p] < \negl(n)$,
    where $\Delta$ is the normalized Hamming distance. 
\end{definition}

The watermarking robustness guarantee of \cite[Definition 13]{EPRINT:ChrGun24}
is called \emph{substring robustness against a channel $\channel$}. Informally,
substring robustness guarantees that the watermark will be detected even from a 
(sufficiently entropic) cropped string that has been corrupted by $\channel$. 
Substring robustness generalizes substring completeness 
(\defref{def:CGZ-robustness}): a scheme that is substring robust against the 
identity channel $\mathcal{I}(T) = T$ is substring complete. In the next 
section, we will show that the zero-bit watermarking scheme 
$\CGscheme = (\sf{KeyGen}, \Wat, \Detect)$ from 
\cite[Construction 7]{EPRINT:ChrGun24} is a block-by-block scheme, when 
instantiated with a zero-bit pseudorandom code $\PRC$ that is robust to certain 
$p$-bounded channels. We first state two robustness results 
in the language of \cite{EPRINT:ChrGun24}:

\begin{lemma}[Lemma 22, \cite{EPRINT:ChrGun24}]\label{lem:CG-sub-complete}
    Let $\varepsilon > 0$ be any constant. If $\PRC$ is a zero-bit 
    PRC with block length $n$ that is robust to any 
    $(1/2 - \varepsilon)$-bounded channel, then $\CGscheme$ is 
    $(4\sqrt{\varepsilon} \cdot L + 2\sqrt{2}\cdot n)$-substring complete.
\end{lemma}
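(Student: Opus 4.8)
The plan is to reduce substring completeness of $\CGscheme$ to the channel-robustness guarantee (property (b)) of the underlying zero-bit PRC, which is exactly the property assumed in the lemma. Recall that $\Wat_\sk(\prompt)$ generates tokens one at a time, using the bits of a pseudorandom PRC codeword $x \gets \sf{Encode}_\sk(1)$ of block length $n$ as the coins of an inverse-transform-style sampler, and that $\Detect_\sk(\hT)$ reads off the induced binary representation of (windows of) $\hT$ and returns $1$ iff $\sf{Decode}_\sk$ accepts on some window. So it suffices to show: for $T \gets \Wat_\sk(\prompt)$ and any substring $\tau \in T$ of length $L := |\tau|$ with $\eH(\tau;\prompt) \ge 4\sqrt\varepsilon\, L + 2\sqrt 2\, n$, the bit-string that $\Detect$ extracts from $\tau$ contains a full, correctly-positioned codeword block on which fewer than a $(\tfrac12 - \varepsilon)$-fraction of coordinates have been flipped, except with negligible probability; property (b) then gives $\Detect_\sk(\tau) = 1$.

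First I would make precise the ``channel induced by the model.'' The sampler coins consumed while generating $\tau$ form a contiguous window $w$ of the PRC codeword stream, and the extracted representation $\hat b$ agrees with $w$ coordinate-by-coordinate with a probability governed by the conditional token probabilities: a coordinate produced at a token step of conditional probability $p$ is flipped with probability at most $\tfrac12$ minus a quantity proportional to that step's contribution $-\log p$ to $\eH(\tau;\prompt)$, so the \emph{expected} number of flips in a length-$N$ portion of the window is at most roughly $\tfrac{N - h}{2}$, where $h$ is the empirical entropy falling in that portion. Establishing this bit-by-bit bound from the details of CG's derandomization, uniformly over whatever next-token distributions the model might output, is the main obstacle, and the one place where ``Construction~7'' must actually be opened up; it is essentially the statement that the model's derandomization yields a $(\tfrac12 - \varepsilon)$-bounded channel whenever the relevant empirical entropy rate is at least $\approx 4\sqrt\varepsilon$.

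Next I would pass from expectation to high probability and handle positioning. Writing the number of flips in a codeword block as a sum of indicators over sequential (and adaptive, not i.i.d.) token steps, an Azuma/Bernstein-type bound shows the flip fraction in any fixed length-$n$ block exceeds its expectation by more than $\sqrt\varepsilon$ only with probability $2^{-\Omega(\lambda)}$ --- this slack is what turns the coefficient $2\varepsilon$ one naively expects into $4\sqrt\varepsilon$. The additive $2\sqrt 2\, n$ is boundary overhead: $\tau$ may begin and end in the middle of codeword blocks, so up to $O(n)$ worth of entropy near each end need not land inside a fully-contained block, and the hypothesis $\eH(\tau;\prompt) \ge 4\sqrt\varepsilon L + 2\sqrt 2\, n$ is exactly what guarantees at least one fully-contained block whose entropy rate clears the $4\sqrt\varepsilon$ threshold. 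Conditioned on that block's flip fraction being below $\tfrac12 - \varepsilon$ (all but negligible probability), the block is $\sf{Encode}_\sk(1)$ passed through a $(\tfrac12-\varepsilon)$-bounded channel, so by property (b) $\sf{Decode}_\sk$ accepts it with all but negligible probability, whence $\Detect_\sk(\tau) = 1$; a union bound over the two negligible bad events finishes the proof. I would note that only robustness property (b), and neither PRC pseudorandomness nor property (a), is used here --- those yield undetectability and soundness of $\CGscheme$ respectively, and are not needed for the present statement.
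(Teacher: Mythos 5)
First, note that this paper does not prove the statement at all: it is imported verbatim as Lemma~22 of \cite{EPRINT:ChrGun24}, and the only place the present paper engages with its content is the proof of \clmref{clm:CG-R-robust}, which reuses the reasoning of that lemma as a black box. So your proposal is really a blind reconstruction of Christ--Gunn's argument. Its overall shape is right (locate a fully contained, entropy-rich length-$n$ block inside $\tau$; argue the noise the embedding puts on that block is a $(1/2-\varepsilon)$-bounded channel; invoke PRC robustness), but two steps are genuinely wrong or missing.

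The quantitative core is incorrect as stated. In Construction~7 the watermarked sampler agrees with the embedded bit $x_j$ with probability $1/2+\min(\hp_j,1-\hp_j)$, where $\hp_j$ is the model's conditional probability; the flip probability is therefore $1/2-\min(\hp_j,1-\hp_j)$, \emph{not} ``$1/2$ minus a quantity proportional to that step's contribution $-\log p$ to $\eH(\tau;\prompt)$.'' A single realized token of conditional probability $2^{-100}$ contributes $100$ bits of empirical entropy while the corresponding coordinate still flips with probability essentially $1/2$, so your claimed bound of roughly $(N-h)/2$ expected flips is false (it can even go negative). The conversion from realized empirical entropy to total bias is exactly where the $\sqrt{\varepsilon}$ in the lemma comes from: per step the \emph{expected} entropy is at most about $2\sqrt{\beta_j(1-\beta_j)}$ for bias $\beta_j=\min(\hp_j,1-\hp_j)$, and summing with Cauchy--Schwarz gives $\eH \lesssim 2\sqrt{n\sum_j\beta_j}$, so entropy $\approx 4\sqrt{\varepsilon}\,n$ only buys $\sum_j\beta_j\gtrsim \varepsilon n$. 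Attributing the $\sqrt{\varepsilon}$ to Azuma slack on top of a linear entropy-to-bias relation, as you do, both misplaces the source of the loss and rests on the false per-coordinate bound. Separately, you apply property (b) to ``the block passed through a $(1/2-\varepsilon)$-bounded channel'' without arguing that the induced noise is a legitimate channel in the sense of the robustness definition, i.e.\ independent of the PRC key $\sk$: the embedding noise depends on the embedded string $x_i$ through the model's conditional distributions, and it is only because $x_i = \sf{Encode}_\sk(1)\oplus a_i$ with a one-time pad $a_i$ (which your description of the construction omits entirely) that $x_i$ is uniform and the error pattern on the codeword is independent of $\sk$ --- this is precisely the step this paper spells out in the proof of \clmref{clm:CG-R-robust}, and it is needed for completeness, not only for robustness against $\sf{BSC}_{1/2-\delta}$.
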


\begin{lemma}[Lemma 23, \cite{EPRINT:ChrGun24}]\label{lem:CG-sub-robust}
    Let $\varepsilon, \delta > 0$ be any constants. If $\PRC$ is a zero-bit 
    PRC with block length $n$ that is robust to any 
    $(1/2 - \varepsilon\cdot \delta)$-bounded channel, then $\CGscheme$ is 
    $(4\sqrt{\varepsilon} \cdot L + 2\sqrt{2}\cdot n)$-substring robust against
    $\sf{BSC}_{1/2 - \delta}$, the binary symmetric channel with error rate 
    $1/2 - \delta$.
\end{lemma}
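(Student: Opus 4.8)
The plan is to derive \lemref{lem:CG-sub-robust} from the substring-completeness analysis behind \lemref{lem:CG-sub-complete} by treating the binary symmetric channel as one that \emph{composes} with the ``model channel'' already implicit in that proof. Recall the mechanism of $\CGscheme$: a codeword $c \gets \sf{Encode}_{\sk}(1)$ derandomizes the language model's (binarized) token sampling, and $\Detect_{\sk}(\hT)$ reads off from each substring of $\hT$ an \emph{implied codeword} and runs $\sf{Decode}_{\sk}$ on it, outputting $1$ if any call succeeds. The content of \lemref{lem:CG-sub-complete} is that whenever a substring $\tau$ of a watermarked generation has $\eH(\tau;\prompt) \ge 4\sqrt{\varepsilon}\,L + 2\sqrt{2}\,n$, the implied codeword read off from $\tau$ equals the true $c$ as seen through a $(1/2-\varepsilon)$-bounded channel, so $\sf{Decode}_{\sk}$ succeeds by $(1/2-\varepsilon)$-robustness of $\PRC$ and hence $\Detect_{\sk}(\tau) = 1$.

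Concretely, first I would isolate the ``model channel'' as a standalone claim: the empirical-entropy lower bound forces the implied codeword derived from $\tau$ to agree with $c$ on at least a $(1/2+\varepsilon)$-fraction of positions except with negligible probability (a concentration statement over the model's sampling randomness, exactly as in \cite{EPRINT:ChrGun24}), and by pseudorandomness of $\PRC$ one may treat $c$ as uniform when invoking $\PRC$'s channel-robustness guarantee, which is stated over uniform inputs. Next, observe that applying $\sf{BSC}_{1/2-\delta}$ to the binarized text flips each text bit independently with probability $1/2-\delta$, and that in $\CGscheme$ each implied-codeword bit is the corresponding text bit XOR a context-determined value, so a text-bit flip flips the corresponding implied-codeword bit. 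Hence the implied codeword read off from $\htau := \sf{BSC}_{1/2-\delta}(\tau)$ is $c$ as seen through the composition of the model channel with $\sf{BSC}_{1/2-\delta}$. A short computation shows this composition flips each codeword bit with probability at most $1/2-2\varepsilon\delta \le 1/2-\varepsilon\delta$, and a Chernoff bound (conditioning on the model-channel noise pattern, which itself has relative weight at most $1/2-\varepsilon$ with high probability) shows the composed channel is $(1/2-\varepsilon\delta)$-bounded; by the hypothesis that $\PRC$ is robust to \emph{every} $(1/2-\varepsilon\delta)$-bounded channel, $\sf{Decode}_{\sk}$ returns $1$ with high probability, so $\Detect_{\sk}(\hT) = 1$. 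The entropy threshold is thus unchanged from \lemref{lem:CG-sub-complete}; only the demanded robustness of $\PRC$ is strengthened (from $(1/2-\varepsilon)$- to $(1/2-\varepsilon\delta)$-bounded channels) to absorb the extra noise. Finally, the block predicate $\isBlock$ and the relation $\bumpeq$ that exhibit $\CGscheme$ as AEB-robust against $\sf{BSC}_{1/2-\delta}$ are the same entropy-based $\isBlock$ used for \cite{EPRINT:ChrGunZam23} together with the relation ``$\htau$ is an output of $\sf{BSC}_{1/2-\delta}$ applied to the block $\tau$.''

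The main obstacle is making the composition rigorous. Two points need care. First, one must check that corrupting binarized text bits induces corruption on the implied-codeword bits that does \emph{not} cascade through the context/hash windows used to derandomize later tokens, so that it is faithfully modeled as independent per-bit noise; this is precisely where the particular binarization and windowing of $\CGscheme$ must be used, as in \cite{EPRINT:ChrGun24}. Second, the model channel is \emph{not} independent across bits, so the concentration for the composed channel has to be argued conditionally on the model-channel noise pattern rather than by a naive i.i.d.\ Chernoff bound. Everything else --- the arithmetic of composing $(1/2-\varepsilon)$- and $(1/2-\delta)$-bounded channels and the negligible-error bookkeeping --- is routine.
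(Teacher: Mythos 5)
There is nothing in this paper to compare your argument against: the statement is quoted verbatim from \cite{EPRINT:ChrGun24} (the bracketed attribution ``Lemma 23'' signals a restatement of prior work), and the present paper never proves it --- it only \emph{uses} it, together with the reasoning behind Lemma~\ref{lem:CG-sub-complete}, when establishing Claim~\ref{clm:CG-R-robust}. So the right benchmark is the original proof in \cite{EPRINT:ChrGun24}, and your sketch reconstructs its structure faithfully: reduce to the entropy-based correlation bound underlying substring completeness (a high-entropy length-$n$ window of the generation lies within normalized Hamming distance $1/2-\varepsilon$ of the embedded string), compose that ``model channel'' with $\mathsf{BSC}_{1/2-\delta}$, compute the composed flip rate $p+q-2pq \le 1/2-2\varepsilon\delta$, concentrate conditionally on the model-noise pattern to get a $(1/2-\varepsilon\delta)$-bounded channel, and invoke the strengthened PRC robustness hypothesis. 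This mirrors how the appendix here composes noise sources for Claim~\ref{clm:CG-R-robust} (there the second source is a worst-case $\varepsilon(1-\delta)$ adversarial perturbation and the composition is a triangle inequality; in your setting it is stochastic, and your conditional Chernoff step is the right replacement). Your flagged obstacles --- non-i.i.d.\ model noise and non-cascading of bit flips at decoding --- are exactly the points the original proof must and does address.

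Two corrections to tighten the sketch. First, the step ``by pseudorandomness of $\PRC$ one may treat $c$ as uniform when invoking the channel-robustness guarantee'' is not the right justification: the PRC guarantee holds for channels \emph{independent of the PRC key}, and pseudorandomness of codewords does not by itself make the induced error pattern key-independent (the model channel's noise depends on the string being embedded). The clean argument --- the one used in the proof of Claim~\ref{clm:CG-R-robust} --- goes through the one-time pads $a_i$ in the watermarking key: the embedded string $x_i = c \oplus a_i$ is uniform and independent of $\sk$, so the error pattern seen by $\mathsf{Decode}_\sk$ is independent of $\sk$, and robustness to every fixed $(1/2-\varepsilon\delta)$-bounded channel applies. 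Second, the ``context-determined value'' XORed onto each text bit at detection is precisely this one-time pad (a key component), not a function of surrounding text; this is also why corruption does not cascade at decoding time, although you do still need the substring/pad alignment bookkeeping of $\Detect$, which tries all substrings and all pad offsets. With those repairs your outline matches the intended proof.
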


\subsubsection{$\RCG$-robustness of $\CGscheme$}\label{CG-embedding}
We begin by describing how $\CGscheme$ embeds and detects watermarks. 
Let $n$ be the length of PRC codewords and $L^*$ be an upper bound on the length token sequences generated by $\GenModel$.\footnote{
    \cite{EPRINT:ChrGun24} (and \cite{EPRINT:ChrGunZam23}) base their constructions on a language model 
    that uses a binary token set, but show that this assumption can be made without a loss of generality.
}
The secret key watermarking skey is $(\sk, a)$
where $\sk$ is the PRC key, and $a = (a_1, \dots, a_{\lceil L^* / n\rceil})$ where each $a_i \in \bits^{n}$ is a one-time pad. To 
generate watermarked text $T$, $\Wat_{(\sk, a)}$ samples an $n$-bit string 
$x_1 \gets \PRC.\sf{Encode}_\sk(1) \oplus a_1$. The scheme then uses $x_1$ to sample the first 
length-$n$ substring $\tau_1 \in T$. Crucially, $\tau_1$ will be a noised 
version of $x_1$, 
where the amount of noise is inversely proportional to the empirical entropy 
of $\tau_1$. This procedure is iterated, producing a final output is 
$T = \tau_1 \| \dots \|\tau_r \| \sigma$ where each $\tau_i$ is correlated with $x_i$. To detect, 
$\sf{Detect}_{(\sk, a)}(T)$ outputs the OR 
of $\PRC.\sf{Decode}_\sk(\tau \oplus a_i)$ for all substrings $\tau \in T$, $a_i \in a$. By the 
robustness of $\PRC$, for all $i \in [r]$, $\PRC.\sf{Decode}_\sk(\tau_i \oplus a_i) = 1$ with high probability.

\lemref{lem:CG-sub-complete} establishes that $\CGscheme$ is $b(\ell)$-substring
complete so long as $\PRC$ is sufficiently robust. As in \apref{sec:CGZ},
this means that $\Detect_{(\sk, a)}(\hat{T}) = 1$ with high probability
if there exists a substring $\tau$ that satisfies the same conditions (i) and 
(ii) from \secref{sec:CGZ}. So one can show that $\CGscheme$ is non-adaptively 
$R^\sf{CGZ}_1$-robust via the same reasoning as in the proof of \clmref{clm:CGZ-R-na-robust}.

\lemref{lem:CG-sub-robust} establishes that $\CGscheme$ is $b(\ell)$-substring 
robust provided the underlying $\PRC$ is robust to even noisier channels. 
To translate this guarantee into our language, we can keep condition (i) exactly 
the same as before. However, unlike requiring some substring $\htau \in \hT$ to 
be \emph{identical} to $\tau \in T$ as in condition (ii), we want to relax 
our binary string relation, putting strings $\htau$ and $\tau$ in relation only if 
every length-$n$ substring 
$\tau^* \in \tau$ is no more than $(\varepsilon \cdot (1 - \delta))$-far from some 
length-$n$ substring $\hat{\tau}^* \in \hat{\tau}$, in the normalized Hamming 
distance $\Delta$. This is expressed by our (asymmetric) relation $\bumpeq_\sf{CG}$: 
\ifx\cols\one
\begin{align*}
    \htau \bumpeq_\sf{CG} \tau \iff \forall \text{ length-$n$ substrings } \tau^* \in \tau,
    \exists \text{ length-$n$ substring }\htau^* \in \htau~:~
    \Delta(\htau^*, \tau^*) \leq \varepsilon \cdot (1 - \delta).
\end{align*}
\fi
\ifx\cols\two
\begin{align*}
    \htau &\bumpeq_\sf{CG} \tau \iff \forall \text{ length-$n$ substrings } \tau^* \in \tau,\\
    &
    \exists \text{ length-$n$ substring }\htau^* \in \htau~:~
    \Delta(\htau^*, \tau^*) \leq \varepsilon \cdot (1 - \delta).
\end{align*}
\fi

Now let $\RCG$ be the AEB-robustness condition induced by the function 
$\isBlock_\sf{CG}$ (defined like $\isBlock_\sf{CGZ}$ but for a different entropy
requirement $b(\ell) = 4\sqrt{\varepsilon}\cdot \ell + 2\sqrt{2}\cdot n$) and the 
relation $\bumpeq_\sf{CG}$, according to \defref{def:R-k}. The following
claim implies that $\CGscheme$ is a block-by-block scheme when 
instantiated from a pseudorandom code $\PRC$ with block length $n$ 
that is robust to every $(1/2 - \varepsilon \cdot \delta)$-bounded channel.

\begin{clm}\label{clm:CG-R-robust}
    Let $\varepsilon, \delta > 0$ be constants. If $\PRC$ is a zero-bit PRC of 
    block length $n$ that is robust to every 
    $(1 / 2 - \varepsilon \cdot \delta)$-bounded channel, then $\CGscheme$ is 
    non-adaptively $\RCG$-robust.
\end{clm}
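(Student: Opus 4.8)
The plan is to mimic the proof of \clmref{clm:CGZ-R-na-robust}, replacing the appeal to substring \emph{completeness} with an appeal to substring \emph{robustness} (\lemref{lem:CG-sub-robust}) and replacing exact string equality with the relation $\bumpeq_\sf{CG}$. Fix a prompt $Q$ with $|Q| \le \poly(\lambda)$ and an efficient adversary $\A$; write $(\sk, a) \gets \sf{KeyGen}(1^\lambda)$, $T \gets \Wat_{(\sk,a)}(Q)$, and $\hT \gets \A(1^\lambda, T)$. To establish non-adaptive $\RCG$-robustness it suffices to show
\[
    \Pr[\Detect_{(\sk,a)}(\hT) = 1 \mid \RCG(\lambda, Q, T, \hT) = 1] \ge 1 - \negl(\lambda).
\]
By \defref{def:R-k}, conditioning on $\RCG(\lambda, Q, T, \hT) = 1$ yields substrings $\tau \in T$ and $\htau \in \hT$ with $\isBlock_\sf{CG}(\tau;Q) = 1$ --- i.e.\ $\eH(\tau;Q) \ge 4\sqrt{\varepsilon}\,|\tau| + 2\sqrt{2}\,n$ --- and $\htau \bumpeq_\sf{CG} \tau$.

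Next I would use the structure of how $\CGscheme$ watermarks $T$. Because $\tau$ carries enough empirical entropy, the analysis behind \lemref{lem:CG-sub-complete} and \lemref{lem:CG-sub-robust} in \cite{EPRINT:ChrGun24} shows that $\tau$ overlaps a length-$n$ codeword slot whose content, XOR-ed with the matching one-time pad $a_i$, is a bounded-weight corruption of a fresh pseudorandom codeword $\PRC.\sf{Encode}_\sk(1)$. Since $\htau \bumpeq_\sf{CG} \tau$, there is a length-$n$ window $\htau^* \in \htau$ within normalized Hamming distance $\varepsilon(1-\delta)$ of that slot; composing the generation noise with this approximation keeps $\htau^* \oplus a_i$ within distance $1/2 - \varepsilon\delta$ of a genuine codeword, so the total perturbation is absorbed by a $(1/2 - \varepsilon\delta)$-bounded channel. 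Robustness of $\PRC$ to every such channel then gives $\PRC.\sf{Decode}_\sk(\htau^* \oplus a_i) = 1$ except with negligible probability; and since $\Detect_{(\sk,a)}(\hT)$ outputs the OR of $\PRC.\sf{Decode}_\sk(\cdot \oplus a_j)$ over all length-$n$ substrings of $\hT$ and all pads $a_j \in a$, we get $\Detect_{(\sk,a)}(\hT)=1$ with high probability, as desired.

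The main obstacle is exactly this bridging step: our framework exposes only $\bumpeq_\sf{CG}$ and $\isBlock_\sf{CG}$, whereas \lemref{lem:CG-sub-robust} is phrased against the binary symmetric channel $\sf{BSC}_{1/2-\delta}$ rather than an adversarially chosen, locally bounded perturbation. I would therefore open up the proofs of \lemref{lem:CG-sub-complete} and \lemref{lem:CG-sub-robust} to extract the more convenient statement --- that detection succeeds whenever some length-$n$ window of $\hT$ is, after XOR with a pad, within $1/2 - \varepsilon\delta$ of a codeword --- and then verify that the $\varepsilon(1-\delta)$ budget in $\bumpeq_\sf{CG}$ together with the intrinsic generation noise really composes to stay under that threshold (and that the resulting map qualifies as a $(1/2-\varepsilon\delta)$-bounded channel, not merely one small on a single input). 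Once that structural fact is in hand, everything else transcribes the proof of \clmref{clm:CGZ-R-na-robust} almost verbatim; adaptivity plays no role, since only non-adaptive $\RCG$-robustness is claimed.
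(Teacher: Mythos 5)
Your overall route is the same as the paper's: condition on $\RCG$, use $\isBlock_\sf{CG}$ to get a length-$n$ window $\tau^*$ of the block that is within $\tfrac12-\varepsilon$ of the padded codeword $x^*=c^*\oplus a^*$ (via the internal lemmas of \cite{EPRINT:ChrGun24}), use $\bumpeq_\sf{CG}$ to get a window $\htau^*\in\htau$ with $\Delta(\htau^*,\tau^*)\le\varepsilon(1-\delta)$, apply the triangle inequality to land at $\Delta(\htau^*\oplus a^*, c^*)\le \tfrac12-\varepsilon\delta$, and conclude by PRC robustness plus the OR structure of $\Detect$. However, the step you explicitly defer --- ``verify that the resulting map qualifies as a $(1/2-\varepsilon\delta)$-bounded channel'' --- is not a routine transcription detail; it is where the paper's proof does its actual work, and your proposal does not contain the idea that closes it. The issue is that $\PRC$ is only robust against channels that are \emph{independent of the PRC key} $\sk$, whereas the perturbation you construct (generation noise composed with the adversary's edits) acts on text that is correlated with the codeword $c^*$, and hence a priori with $\sk$. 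Bounding the Hamming distance on the one realized input is not enough to invoke robustness.

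The paper resolves this using the one-time pad: since $x^* = c^*\oplus a^*$ is uniformly random and independent of $\sk$ over the randomness of $a^*$, the map $\channel$ sending $c^*$ to $\htau^*\oplus a^*$ (noisy embedding of $x^*$ into $\tau^*$, followed by the adversary's modification, followed by XOR with $a^*$) induces errors on $c^*$ that are independent of $\sk$; together with the distance bound $\Delta(\htau^*\oplus a^*,c^*)=\Delta(\htau^*,x^*)\le \tfrac12-\varepsilon\delta$, this makes $\channel$ a legitimate fixed $(\tfrac12-\varepsilon\delta)$-bounded channel, so robustness applies and $\PRC.\sf{Decode}_\sk(\htau^*\oplus a^*)=1$ with high probability. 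To complete your argument you would need to add this pad-randomization/independence step (or an equivalent one); without it, the appeal to ``robustness to every such channel'' does not go through, because the object you have is not yet a channel of the kind the robustness guarantee covers. Everything else in your proposal matches the paper's proof of \clmref{clm:CG-R-robust}, and your instinct that \lemref{lem:CG-sub-robust} (stated for $\sf{BSC}_{1/2-\delta}$) cannot be used as a black box, so that one must open up the proofs of \cite{EPRINT:ChrGun24}, is exactly right.
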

\begin{proof}
    Fix a prompt $\prompt \in \bits^*$ of length $|\prompt| \leq \poly(\lambda)$ 
    and efficient adversary $\A$. To show that $\CGscheme$ is non-adaptively 
    $\RCG$-robust, it suffices to show the following:
    \[
        \Pr
        \biggl[ \
        \Detect_{(\sk, a)}(\hT) = 1 ~\mid~ \RCG(\lambda, Q, T, \hat{T}) = 1
        \biggr] \geq 1 - \negl(\lambda).
    \]
    where $(\sk, a) \gets \Setup(1^\lambda)$, $T \gets \Wat_{(\sk, a)}(\prompt)$, and
    $\hT \gets \A(1^\lambda, T)$.
    As discussed above, the watermarking scheme noisily embeds strings $x_i$ 
    into $T$, where each $x_i \gets \PRC.\sf{Encode}_\sk(1) \oplus a_i$. We are 
    conditioning on $\RCG(\lambda, Q, T, \hat{T}) = 1$. By definition of $\RCG$, 
    there exist substrings $\htau \in \hT, \tau \in T$ such that (i) 
    $\isBlock_\sf{CG}(\tau; \prompt) = 1$ and (ii) $\htau \bumpeq_\sf{CG} \tau$.
    By (i) there exists some $x^* \in (x_i)_i$ and some length-$n$ substring 
    $\tau^*$ of $\tau$ such that $\Delta(\tau^*, x) \leq 1/2 - \varepsilon$ with 
    high probability.\footnote{
        Condition (i) implies a lower bound on the empirical entropy of $\tau$. 
        We can then use the reasoning in the proof of \cite[Lemma 22]{EPRINT:ChrGun24} 
        to show the existence of some substring $\tau^* \in \tau$ of length $n$ with
        enough empirical entropy to apply \cite[Lemma 21]{EPRINT:ChrGun24}, 
        which provides the desired high-probability guarantee on the 
        $\Delta$-distance between $\tau^*$ and $x$.
    }
    By (ii) and the definition of $\bumpeq_\sf{CG}$ there exists some length-$n$ 
    substring $\htau^*$ of  $\htau$ such that 
    $\Delta(\htau^*, \tau^*) \leq \varepsilon \cdot (1 - \delta)$.
    By the triangle inequality, (i) and (ii) together give us 
    $\Delta(\htau^*, x^*) \leq \frac 1 2 - \varepsilon \cdot \delta$ with high 
    probability. 
    
    We know by hypothesis that $\PRC$ is robust to every \emph{fixed}
    $(\frac 1 2 - \varepsilon \cdot \delta)$-bounded channel, i.e., bounded channels that
    are independent of the PRC key $\sk$. By construction, $x^* = c^* \oplus a^*$
    for some PRC codeword $c^*$ and one-time pad $a^*$. Let $\channel$ be the channel 
    that takes $c^*$ to $\htau^* \oplus a^*$ (that is, the output of $\PRC.\mathsf{Encode}$ to the input of $\PRC.\mathsf{Decode}$). We know that $\channel$ is 
    $(\frac 1 2 - \varepsilon \cdot \delta)$-bounded because
    $\Delta(\htau^* \oplus a^*, c^*) = \Delta(\htau^*, x^*) \leq \frac 1 2 - \varepsilon \cdot \delta$.
    It remains to show that $\channel$ is independent of $\sk$. 
    
    To produce an output, $\channel$ noisily embeds $x^* = c^* \oplus a^*$ into the generated string $\tau^*$ and then adversarially modifies $\tau^*$ to output $\htau^*$.
    Observe that neither the noisy embedding procedure nor the adversary depend on the PRC secret key $\sk$, except insofar as the input $x^*$ depends on $\sk$.
    Over the randomness of the one-time pad $a^*$, $x^*$ is a uniformly random string, independent of $\sk$.
    Thus errors induced by $\channel$ on $c^*$ are independent of $\sk$. 
    As a result, $\PRC$ is 
    robust to $\channel$ and so $\PRC.\sf{Decode}_{\sf{sk}}(\htau^*\oplus a^*) = 1$ with 
    high probability.
    By construction $\Detect_{(\sk, a)}(\hT) = 1$ with high probability, so $\CGscheme$
    is non-adaptively $\RCG$-robust.
\end{proof}

Christ and Gunn also build $L$-bit watermarking schemes, using $L$-bit PRCs 
that whose encoding functions take messages $m \in \bits^L$ rather than the 
single message $m \in \{1\}$. \lemref{lem:CG-sub-robust} applies identically 
for an $L$-bit watermarking scheme, so all of the robustness results 
described in this section extend to $L$-bit watermarking schemes.

\subsection{Publicly detectable watermarks \cite{EPRINT:FGJMMW23}}\label{ap:f}
The zero-bit watermarking scheme of Fairoze, Garg, Jha, Mahloujifar, Mahmoody, 
and Wang \cite{EPRINT:FGJMMW23} can also cast as a block-by-block scheme. It was
the first proposed publicly detectable scheme. In other words,
the $\Detect$ algorithm only
requires a public key, and knowledge of the public key does not undermine 
undetectability, soundness, completeness, nor robustness. When used in our 
constructions, the resulting $L$-bit and multi-user watermarking schemes are 
also publicly detectable.

Unlike \cite{EPRINT:ChrGunZam23}, the proofs of undetectability and computational 
efficiency in \cite{EPRINT:FGJMMW23} require assuming that each block of 
$\ell = \ell(\lambda)$ tokens produced by $\GenModel$ has at least $\lambda$ bits 
of min-entropy. The parameter $\ell$ is assumed to be known and is used in the 
construction.

\begin{assumption}[Assumption 2.1 of \cite{EPRINT:FGJMMW23}]\label{FGJ-assumption}
For any prompt $\prompt$ and string $T$, $\Pr[\GenModel(\prompt)_{1:\ell} = T] \le 2^{-\lambda}$.
\end{assumption}

\begin{definition}[$d$-Robustness, adapted from Definition 4.5 of \cite{EPRINT:FGJMMW23}]\label{def:robustness:FGJ}
A publicly-detectable watermarking scheme is $d$-robust if for every prompt $\prompt$, 
security parameter $\lambda$, and PPT $\A$,
\[
\Pr\left[
    \Detect_\pk(\hat{T}) = 0
    ~:~ 
\begin{array}{c} 
    (\sk,\pk)\getsr \Setup(1^\lambda) 
    \\ 
    T \gets \Wat_\sk(\prompt)
    \\
    \hat{T} \gets \A(\pk,T)
\end{array}\right]
\leq \negl(\lambda)
\]
where $\A$ is required to output $\hT$ that contains a substring $\htau$ 
of length at least $d$ that is also a substring of $T$.    
\end{definition}
Using a signature scheme with pseudorandom $\ell_{\sub{sig}}$-bit signatures, 
\cite{EPRINT:FGJMMW23} gives a $(2\ell(1+\ell_{\sub{sig}}))$-robust watermarking 
scheme, where $\ell$ is from the min-entropy assumption above.
To turn this into a block-by-block scheme, we take 
$\isBlock_{\sub{FGJ+}}(\tau;\prompt) = \1(|\tau| \ge d)$, and the binary relation 
$\bumpeq$ on strings $\htau$ and $\tau$ to be string equality. Let $\RF$ 
be the AEB-robustness condition induced by the function $\isBlock_{\sub{FGJ+}}$ and 
the string equality relation, according to Definition~\ref{def:block-by-block-embedding}. 
(Note the similarity to Sec~\ref{sec:CGZ}.)
The following claim implies that the scheme of \cite{EPRINT:FGJMMW23} is a
block-by-block scheme, with $d = 2\ell(1+\ell_{\sub{sig}})$.

\begin{clm}\label{clm-F-R-robust}
    Under Assumption~\ref{FGJ-assumption}, if $\wat$ is an undetectable, 
    $d$-robust scheme then it is a non-adaptively $\RF$-robust scheme.
\end{clm}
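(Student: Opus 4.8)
The plan is to follow the template of the proof of \clmref{clm:CGZ-R-na-robust}: unwind the definition of $\RF$ to extract a substring shared between the adversary's output $\hT$ and the honest generation $T$, and then feed that substring into the $d$-robustness guarantee of \defref{def:robustness:FGJ}. First I would fix a prompt $\prompt$ with $|\prompt| \le \poly(\lambda)$ and an efficient adversary $\A$, and instantiate the non-adaptive robustness experiment: $(\sk,\pk) \gets \Setup(1^\lambda)$, $T \gets \Wat_\sk(\prompt)$, $\hT \gets \A(\pk, T)$. Since $\wat$ is publicly detectable, $\Detect$ depends only on $\pk$, so it suffices to show that $\Pr[\Detect_\pk(\hT) = 0 ~\wedge~ \RF(\lambda,\prompt,T,\hT) = 1] \le \negl(\lambda)$.

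The next step is to unpack $\RF$, which is $R_1$ instantiated with the block predicate $\isBlock_{\sub{FGJ+}}(\tau;\prompt) = \1(|\tau|\ge d)$ and the string-equality relation. The key observation is that this predicate ignores the prompt and only checks length, so a string is a \emph{minimal} block exactly when it has length $d$; hence $\Blocks(T;\prompt)$ is simply the greedy partition of $T$ into consecutive length-$d$ windows (with a length-$<d$ suffix), and in particular each block $\tau \in \Blocks(T;\prompt)$ is a length-$d$ substring of $T$. Conditioning on $\RF(\lambda,\prompt,T,\hT)=1$ therefore gives $\numBlocks(\hT;\prompt,T)\ge 1$, i.e.\ some substring $\htau$ of $\hT$ equals some $\tau\in\Blocks(T;\prompt)$; this $\htau$ is a substring of $\hT$ of length $d$ (hence $\ge d$) that is also a substring of $T$. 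I would then invoke $d$-robustness directly: $\A$ together with this conditioning is an adversary of exactly the form demanded by \defref{def:robustness:FGJ} — it outputs $\hT$ containing a length-$\ge d$ substring of $T$ — so $\Detect_\pk(\hT)=1$ with all but negligible probability, which is the desired bound. \assmref{FGJ-assumption} and undetectability enter only as the standing hypotheses under which the FGJ+ scheme is $d$-robust in the first place; the reduction from $\RF$-robustness to $d$-robustness does not invoke them.

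I do not expect a substantive obstacle here — the content is a careful matching of definitions, just as in \clmref{clm:CGZ-R-na-robust}. The one point requiring attention is verifying that, under $\isBlock_{\sub{FGJ+}}$, the canonical parsing $\Blocks(T;\prompt)$ really consists of honest-generation substrings of length exactly $d$, so that the ``length $\ge d$ shared substring'' hypothesis of \defref{def:robustness:FGJ} is met on the nose, and aligning the single-honest-generation quantification of $d$-robustness with the non-adaptive $R_1$ experiment. Going beyond $k=1$, or to adaptive robustness, would require more than \defref{def:robustness:FGJ} provides, which is why the claim is stated only for non-adaptive $\RF$.
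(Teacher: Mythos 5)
Your proposal is correct and follows essentially the same route as the paper's proof: condition on $\RF = 1$, extract from the definition of $\RF$ a substring $\htau$ of $\hT$ equal to a block $\tau$ of $T$ with $|\tau| \ge d$, and invoke $d$-robustness (\defref{def:robustness:FGJ}) to conclude $\Detect(\hT)=1$ with overwhelming probability. Your extra observation that minimal blocks under $\isBlock_{\sub{FGJ+}}$ have length exactly $d$, and your remark that \assmref{FGJ-assumption} and undetectability serve only as standing hypotheses for $d$-robustness, are fine elaborations consistent with the paper's (terser) argument.
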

\begin{proof}
    Fix a prompt $\prompt \in \tokSet^*$ of length $|\prompt| \leq \poly(\lambda)$ 
    and a PPT adversary $\A$. To show that $\wat$ is non-adaptively $\RF$-robust, 
    it suffices to show the following:
    \[
        \Pr
        \left[ \
        \Detect_\sk(\hT) = 1 ~\mid~ \RF(\lambda, Q, T, \hT) = 1
        \right] \geq 1 - \negl(\lambda).
    \]
    where $\sk \gets \Setup(1^\lambda)$, $T \gets \Wat_\sk(\prompt)$, and
    $\hT \gets \A(1^\lambda, T)$.

    By definition of $\RF$, there exist substrings $\tau \in T, \htau \in \hT$ such that 
    (i) $\isBlock_\sf{FGJ+}(\tau; \prompt) = 1$, and (ii) $\htau = \tau$. 
    Using Assumption~\ref{FGJ-assumption} and the fact that $\wat$ is $d$-robust,
    $\Detect_\sk(\hT) = 1$ with high probability, so $\wat$ is non-adaptively $\RF$-robust. 
\end{proof}

\subsection{A $p$-value for watermarks \cite{ARXIV:KTHL23}}\label{ap:kthl}
Slight variations of the zero-bit watermarking schemes of Kuditipudi, Thickstun, 
Hashimoto, and Liang \cite{ARXIV:KTHL23} can be cast as block-by-block schemes. 
In particular, the Inverse Transform Sampling (ITS) scheme from \cite[Section 2.3]{ARXIV:KTHL23} can be modified to be complete, 
sound, and robust to token substitutions.\footnote{
    It is unclear how to modify the ITS scheme to be robust to token insertions or 
    deletions, because the $k^k$ term in the statement of \cite[Lemma 2.6]{ARXIV:KTHL23}
    dominates the inverse exponential term.
}
But it is not undetectable, so our main constructions cannot be applied to it. More 
specifically, the ``distortion-free'' property achieved by the ITS scheme is 
essentially a single-query form of undetectability. Across many queries, however, 
the outputs will be highly correlated, unlike truly undetectable schemes.

At a high level, the ITS scheme $\wat = (\Setup, \Wat, \Detect)$ works as follows. 
As sketched in \apref{sec:intuition}, $\Wat_\sk(\prompt)$ outputs a generation $T$ 
by using a decoder function $\Gamma:  \keyspace \times \Delta(\tokSet) \to \tokSet$ 
to select successive tokens. The decoder $\Gamma$ uses an element of $\sk$ and 
the distribution $p_\prompt$ to deterministically select a token.
Detection relies on a test statistic $\phi: \tokSet^* \times \keyspace^* \to \mathbb{R}$
that is designed so $\phi(T, \sk)$ is small whenever $T \gets \Wat_\sk(\prompt)$.
Instead of outputting a bit, $\Detect_\sk(\hT)$ samples $s$ secret keys $\sk'$ 
independently and computes $\hp = (K + 1)/(s + 1)$, where $K$ is the number of 
times $\phi(\hT, \sk') \leq \phi(\hT, \sk)$. The fraction $\hp$ is a $p$-value relative 
to the null hypothesis that $\hT$ is not watermarked.

Our definitions require that detection only errs with negligible probability. 
Running $\Detect_\sk$ as described above would require exponential runtime to produce
negligibly small $p$-values. Instead, we can simplify $\wat$ by 
detecting watermarks only when the test statistic $\phi(\hT, \sk)$ is below some 
threshold $\thresh$:
\begin{equation}\label{simplified-detect}
    \widetilde{\Detect}_\sk(\hT) = \1\Biggl(\phi(\hT, \sk) < \thresh = - \sqrt{\frac{|\hT|\lambda}{8}}\Biggr).
\end{equation}
To achieve robustness, $\phi(\hT, \sk)$ is constructed to be small so long as 
$\hT$ is ``close enough'' to some watermarked text $T$. This is done by using an
``alignment cost'' function $d: \tokSet^* \times \keyspace^*$, where 
$\phi$ returns the minimum cost over all alignments of candidate text $\hT$ with 
substrings $\sigma \in \sk$. Below we state a robustness guarantee for the ITS scheme 
in the language of \cite{ARXIV:KTHL23}. The guarantee will depend on the observed 
token probabilities of verbatim outputs of $\Wat$, which is called the 
\emph{watermark potential}.

\begin{definition}[Watermark potential]\label{def:watermark-potential}
    Given some prompt $\prompt$, the \emph{watermark potential} 
    $\alpha: \tokSet^m \to \mathbb{R}$ of some text $T = \tau_1 \dots \tau_{m}$ 
    relative to the distribution $p = p_\prompt$ is 
    \[
        \alpha(T) := \frac 1 m \sum_{i=1}^{m}\Bigl( 1 - p(\tau_i \mid \prompt \| T_{<i})\Bigr).
    \]
    Furthermore, we define $\hat{\alpha}: \tokSet^m \times \tokSet^m \to \mathbb{R}$
    as
    \ifx\cols\one
    \begin{align*}
        \hat{\alpha}(T, \hT) = \frac 1 m \sum_{\{i: \tau_i = \htau_i\}} \
        \Bigl(1 - p(\tau_i \mid \prompt \| T_{< i})\Bigr) \
        - \frac 1 m \sum_{\{i: \tau_i \neq \htau_i\}}\Biggl(\frac{1}{|\tokSet| - 1}\Biggr).
    \end{align*}
    \fi
    \ifx\cols\two
    \begin{align*}
        \hat{\alpha}(T, \hT) &= \frac 1 m \sum_{\{i: \tau_i = \htau_i\}} \
        \Bigl(1 - p(\tau_i \mid \prompt \| T_{< i})\Bigr) \\ 
        &- \frac 1 m \sum_{\{i: \tau_i \neq \htau_i\}}\Biggl(\frac{1}{|\tokSet| - 1}\Biggr).
    \end{align*}
    \fi
\end{definition}

Note that for any $T$, $0 \leq \alpha(T) \leq \frac{|\tokSet| - 1}{|\tokSet|}$. 
The robustness guarantee from \cite{ARXIV:KTHL23} implies
that, even if an adversary substitutes many tokens in some watermarked 
text $T$ to create $\hT$, the expected $p$-value computed in $\Detect_\sk(\hT)$ will be small,
so long as the untouched tokens have sufficient watermark potential.
\begin{lemma}[Lemma 2.5, \cite{ARXIV:KTHL23}]\label{lem:KTHL-robust}
    Let $n, m \in \N$ with $n \geq m$, where $m$ is the length of the generation and 
    $n$ is the length of the secret key. Use the decoder $\Gamma$ from 
    \cite[Line (1)]{ARXIV:KTHL23}, alignment cost $d$ from 
    \cite[Line (2)]{ARXIV:KTHL23}, and $\phi$ from \cite[Algorithm 3]{ARXIV:KTHL23}
    with $k = m$. Let $\sk, \sk' \sim \keyspace^n$, with 
    $T = \Wat_\sk(m, p, \Gamma)$. Let $\hT \in \tokSet^m$ be conditionally 
    independent of $\sk$ and $\sk'$ given $T$.
    Then almost surely
    \[
        \Pr
        \left[ \
        \phi(\hT, \sk') \leq \phi(\hT, \sk) ~\mid~ T, \hT
        \right] \leq 2n\exp(-kC_0^2\hat{\alpha}(T, \hT)^2/2),
    \]
    where $C_0 = 1/12 + o_{|\tokSet|}(1)$ is a constant.
\end{lemma}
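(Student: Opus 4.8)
The plan is to exhibit a gap between the typical value of $\phi(\hT,\sk)$ under the true key and its typical value under an independent key $\sk'$, and then to close the deal with Hoeffding plus a union bound over alignments. Throughout, condition on $(T,\hT)$; since $\hT$ is conditionally independent of $\sk,\sk'$ given $T$, under this conditioning the elements of $\sk'$ remain i.i.d.\ uniform over $\keyspace$, while the portion $\xi_\tau,\dots,\xi_{\tau+m-1}$ of $\sk$ used by $\Wat$ to produce $T$ becomes a vector of mutually independent elements, with each $\xi_{\tau+i-1}$ distributed as a uniform draw from $\keyspace$ conditioned on having output $T_i$ under inverse transform sampling with $p_i:=p(\cdot\mid\prompt\|T_{<i})$. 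A preliminary observation: for the ITS alignment cost, $\xi=(u,\pi)$ enters $d(\hat\tau_i,\xi)$ only through $u$ (uniform) and the rank $\pi(\hat\tau_i)$ (uniform over the vocabulary), independently, so $\mu_0:=\Ex_{\xi\getsr\keyspace}[d(\hat\tau_i,\xi)]$ depends on neither $i$ nor the token. Hence for any shift $j\in[n]$, $\Ex\big[\tfrac1m\sum_i d(\hat\tau_i,\xi'_{j+i-1})\mid T,\hT\big]=\mu_0$.

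\emph{Fresh-key side.} Fix a shift $j$; because $m\le n$, the elements $\xi'_j,\dots,\xi'_{j+m-1}$ are i.i.d., so $\tfrac1m\sum_i d(\hat\tau_i,\xi'_{j+i-1})$ is an average of independent $[0,1]$-valued variables with mean $\mu_0$, and Hoeffding gives $\Pr[\tfrac1m\sum_i d(\hat\tau_i,\xi'_{j+i-1})\le \mu_0-t]\le e^{-2mt^2}$. Since $\phi(\hT,\sk')$ minimizes this over the $n$ shifts, a union bound yields $\Pr[\phi(\hT,\sk')\le\mu_0-t\mid T,\hT]\le n\,e^{-2mt^2}$.

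\emph{True-key side and the mean gap.} Using the shift $\tau$ as a (not necessarily optimal) alignment, $\phi(\hT,\sk)\le \tfrac1m\sum_i d(\hat\tau_i,\xi_{\tau+i-1})$, again an average of conditionally independent $[0,1]$-variables. The crux is a pointwise estimate of $\Ex[d(\hat\tau_i,\xi_{\tau+i-1})\mid T,\hT]$: for a matched position ($\hat\tau_i=\tau_i$), conditioning $\xi$ on having produced $\tau_i$ biases the cost downward, and an explicit evaluation of the ITS cost shows $\Ex[d(\tau_i,\xi_{\tau+i-1})\mid T,\hT]\le \mu_0-C_0\big(1-p_i(\tau_i)\big)$ with $C_0=1/12+o_{|\tokSet|}(1)$, the $o_{|\tokSet|}(1)$ absorbing the discreteness of the rank; for a mismatched position the same computation shows the cost is inflated by at most $C_0/(|\tokSet|-1)$. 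Summing over $i$ and recognizing the definition of $\hat\alpha$ from the statement, $\Ex\big[\tfrac1m\sum_i d(\hat\tau_i,\xi_{\tau+i-1})\mid T,\hT\big]\le \mu_0-C_0\,\hat\alpha(T,\hT)$, so Hoeffding gives $\Pr[\phi(\hT,\sk)\ge\mu_0-C_0\hat\alpha(T,\hT)+t\mid T,\hT]\le e^{-2mt^2}$.

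\emph{Combining.} Outside an event of probability at most $(n+1)e^{-2mt^2}$ (given $T,\hT$) we have $\phi(\hT,\sk)<\mu_0-C_0\hat\alpha(T,\hT)+t$ and $\phi(\hT,\sk')>\mu_0-t$, so $\phi(\hT,\sk')\le\phi(\hT,\sk)$ forces $t\ge C_0\hat\alpha(T,\hT)/2$. Taking $t=C_0\hat\alpha(T,\hT)/2$, using $k=m$ and $n+1\le 2n$, gives $\Pr[\phi(\hT,\sk')\le\phi(\hT,\sk)\mid T,\hT]\le 2n\exp(-kC_0^2\hat\alpha(T,\hT)^2/2)$; since the argument holds for every realization of $(T,\hT)$ in the support, it holds almost surely. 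The one genuinely delicate step is the explicit cost computation producing the constant $C_0=1/12+o_{|\tokSet|}(1)$ and the mismatch penalty $C_0/(|\tokSet|-1)$: it amounts to integrating the ITS alignment cost against the law of $\xi$ conditioned on the sampled token and carefully controlling the error from approximating the discrete permutation rank by a continuous quantity. Everything else is Hoeffding and a union bound.
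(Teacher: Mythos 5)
This lemma is not proved in the paper at all: it is imported verbatim as Lemma~2.5 of \cite{ARXIV:KTHL23}, and the paper only invokes it (together with the companion facts it cites as Lemma~2.3 and Observation~B.1 of that work) inside the proof of \clmref{clm-KTHL-R-robust}. So there is no in-paper argument to compare yours against; the relevant comparison is with the source paper's proof.

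Measured against that source, your outline has the right architecture: condition on $(T,\hT)$, note that $\sk'$ stays i.i.d.\ uniform while the used portion of $\sk$ becomes conditionally independent draws conditioned on the emitted tokens, show that the alignment cost at the true shift has conditional mean shifted down by $C_0\,\hat{\alpha}(T,\hT)$ relative to the token-independent mean under a uniform key, and finish with Hoeffding on each side plus a union bound over the $n$ candidate alignments; with $t=C_0\hat{\alpha}/2$ this indeed yields $2n\exp(-kC_0^2\hat{\alpha}(T,\hT)^2/2)$. But the step you label as the ``one genuinely delicate step'' is precisely the mathematical content behind the constant, and you do not carry it out: the claim that the conditional per-token cost drops by $C_0(1-p_i(\tau_i))$ at matched positions and rises by at most $C_0/(|\tokSet|-1)$ at mismatched ones, with $C_0=1/12+o_{|\tokSet|}(1)$, is exactly KTHL's Lemma~2.3/Observation~B.1 (which this paper also quotes in \clmref{clm-KTHL-R-robust}), so as written your argument is a reduction to that lemma rather than a self-contained proof. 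Two smaller points: the ITS per-token costs are not $[0,1]$-valued but lie in an interval of length $1/2$ (the paper's own soundness computation uses range $1/2$ in Hoeffding), which only strengthens your constants; and the choice $t=C_0\hat{\alpha}/2$ silently assumes $\hat{\alpha}(T,\hT)\ge 0$, which is harmless for how the lemma is applied here (cf.\ \lemref{lem:lower-bound-alpha-hat}) but should be stated if you intend to cover the statement as written.
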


To describe our modified version of $\wat$ as a block-by-block scheme, it suffices 
to build a robustness condition $\RK$ that only holds when $T$ has sufficient watermark 
potential and $\hT$ is no more than $\delta$-far from $T$ in the normalized Hamming
distance $\Delta$. Let $N := |\tokSet|$ be the size of the token set of the language model. 
For any $0 \leq \delta < 1$, we define 
\[
    \isBlock_\sf{KTHL}(T; \prompt) = \1\Bigl(\alpha(T) \geq \thresh_\delta \Bigr)
\]
where
\[
    \thresh_\delta := \Biggl(\frac{1}{C_0\sqrt{2}}\cdot \sqrt{\frac{\lambda}{|T|}} \ 
    + \frac{\delta}{N - 1}\Biggr)\Biggl(1 - \delta\Biggr) + \delta
\]
and $C_0$ is the constant from \lemref{lem:KTHL-robust}. Note that the lower bound 
enforced by $\isBlock_\sf{KTHL}$ is only satisfiable when $|T| \in \Omega(\lambda)$.
We then define $\hT \bumpeq_\sf{KTHL} T \iff \Delta(\hT, T) \leq \delta$. 
Let $\RK$ be the robustness condition induced by $\isBlock_\sf{KTHL}$ and 
$\bumpeq_\sf{KTHL}$.
Whenever $\RK(\lambda, \prompt, T, \hT) = 1$, we can derive 
the following lower bound on $\hat{\alpha}(T, \hT)$.

\begin{lemma}\label{lem:lower-bound-alpha-hat}
    For any $\lambda \in \N$, $\prompt \in \tokSet^*$, and $T, \hT \in \tokSet^m$ 
    satisfying $\RK(\lambda, \prompt, T, \hT) = 1$, we have
    \begin{align*}
        \hat{\alpha}(T, \hT) \geq \frac{1}{C_0 \sqrt{2}}\cdot\sqrt{\frac{\lambda}{m}}.
    \end{align*}
\end{lemma}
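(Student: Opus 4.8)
Note first that everything in the statement is deterministic: $\hat{\alpha}(T,\hT)$, $\alpha(T)$, and $\thresh_\delta$ are fixed functions of $T$, $\hT$, and the model distribution $p = p_\prompt$, so there is no probability to track and the lemma is pure bookkeeping. The plan is to unwind the definitions, reduce to a one-line estimate on $\hat{\alpha}$, and then invoke the definition of $\thresh_\delta$. First I would extract the two facts packed into the hypothesis $\RK(\lambda,\prompt,T,\hT)=1$. By construction of the AEB-robustness condition induced by $\isBlock_\sf{KTHL}$ and $\bumpeq_\sf{KTHL}$, this forces $\isBlock_\sf{KTHL}(T;\prompt)=1$, i.e.\ $\alpha(T)\ge \thresh_\delta$, and $\hT\bumpeq_\sf{KTHL}T$, i.e.\ $\Delta(\hT,T)\le \delta$. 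Set $D:=\{\,i\in[m] : \tau_i\neq\htau_i\,\}$ and $A:=[m]\setminus D$, so $|D| = m\cdot\Delta(\hT,T)\le \delta m$, and abbreviate $p_i := p(\tau_i\mid \prompt\|T_{<i})$, so that $\alpha(T)=\frac1m\sum_{i=1}^m(1-p_i)$ and $\hat{\alpha}(T,\hT)=\frac1m\sum_{i\in A}(1-p_i)-\frac{|D|}{m(N-1)}$.

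The key step is to lower bound the agreement sum by the full sum minus the worst-case contribution of the modified positions. Since $0\le 1-p_i\le 1$ for every $i$,
\[
    \sum_{i\in A}(1-p_i)=\sum_{i=1}^m(1-p_i)-\sum_{i\in D}(1-p_i)\ \ge\ m\,\alpha(T)-|D|,
\]
where the inequality uses only $\sum_{i\in D}(1-p_i)\le |D|$. This is the single place where the adversary's freedom enters: the tokens it chose to alter could have carried essentially all of the watermark potential, so this crude per-token bound is unavoidable. Substituting and using $|D|\le \delta m$,
\[
    \hat{\alpha}(T,\hT)\ \ge\ \alpha(T)-\frac{|D|}{m}-\frac{|D|}{m(N-1)}\ =\ \alpha(T)-\frac{|D|}{m}\cdot\frac{N}{N-1}\ \ge\ \alpha(T)-\delta\cdot\frac{N}{N-1}.
\]

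Finally, I would substitute $\alpha(T)\ge \thresh_\delta$: the threshold $\thresh_\delta$ is defined precisely so that $\thresh_\delta-\delta\cdot\tfrac{N}{N-1}\ge \tfrac{1}{C_0\sqrt2}\sqrt{\lambda/m}$, which is the claimed bound (recall $|T|=m$). I do not expect a genuine obstacle here — the last step is routine algebra, and the only conceptual point to get right is that the $\tfrac{1}{N-1}$ substitution penalty per modified token \emph{plus} the at-most-one unit of lost potential per modified token together cost at most $\tfrac{\delta N}{N-1}$, which is exactly the slack engineered into $\thresh_\delta$. The one subtlety worth being careful about is that $\hat{\alpha}$ normalizes by $m$ rather than by $|A|$, so the bound above holds uniformly over \emph{all} admissible $\hT$ (every $\hT$ with $\Delta(\hT,T)\le\delta$), not just a typical one; this uniformity is what makes the lemma usable when proving $\RK$-robustness of the modified ITS scheme via \lemref{lem:KTHL-robust}.
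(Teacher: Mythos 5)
Your setup (split $[m]$ into the agreement set $A$ and the modified set $D$, use $\sum_{i\in D}(1-p_i)\le|D|$, then invoke $\alpha(T)\ge\thresh_\delta$) is the same bookkeeping the paper does, and your chain is valid up to the last line. But the step you waved off as ``routine algebra'' is exactly where the argument breaks. Plugging the definition of $\thresh_\delta$ (with $N=|\tokSet|$ and $a:=\frac{1}{C_0\sqrt2}\sqrt{\lambda/m}$) into your final bound gives
\[
\thresh_\delta-\delta\cdot\frac{N}{N-1}
=\Bigl(a+\frac{\delta}{N-1}\Bigr)(1-\delta)+\delta-\frac{\delta N}{N-1}
=(1-\delta)\,a-\frac{\delta^2}{N-1},
\]
which is \emph{strictly smaller} than the claimed bound $a$ whenever $\delta>0$. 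The slack engineered into $\thresh_\delta$ is not an additive $\delta N/(N-1)$; it is an additive $\delta$ together with a multiplicative $(1-\delta)$ rescaling, and your purely additive accounting cannot recover the latter. Moreover your per-token estimate is essentially tight (the adversary may modify precisely the $\lfloor\delta m\rfloor$ tokens with $p_i\approx 0$, each carrying potential $\approx 1$), so this is not a lossy constant you can sharpen within the same route: with the $\frac1m$-normalized accounting you chose, $(1-\delta)a-\delta^2/(N-1)$ is the best obtainable bound.

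The paper closes the gap by a different normalization at the key step: it lower-bounds the \emph{mean} of the surviving potentials over a subset of size $\lfloor m(1-\delta)\rfloor$ via $\mu_{1-\delta}\ge\frac{\mu-\delta}{1-\delta}$ with $\mu=\alpha(T)\ge\thresh_\delta$, and then uses $\frac{\thresh_\delta-\delta}{1-\delta}-\frac{\delta}{N-1}$, which by construction of $\thresh_\delta$ equals $a$ exactly. The $\frac{1}{1-\delta}$ amplification coming from dividing by the subset size rather than by $m$ is precisely what your argument lacks; ironically, the ``subtlety'' you flagged as a virtue (keeping the $\frac1m$ normalization throughout) is the point at which your route and the paper's diverge. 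To repair your write-up you must either adopt the paper's subset-mean step (and reconcile it with the $\frac1m$ normalization of $\hat{\alpha}$ in Definition~\ref{def:watermark-potential}, which deserves a careful check), or work with a larger threshold satisfying $\thresh_\delta\ge a+\delta+\frac{\delta}{N-1}$, under which your additive bound would suffice.
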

\begin{proof}
    Consider any set of random variables $\mathcal{X} = \{X_1, X_2, \dots, X_m\}$,
    where each $X_i \in [0, 1]$. Let $\mu = \frac 1 m \sum_{i \in [m]} X_i$ and let 
    $\mu_{1-\delta}$ be the mean of any subset of $\mathcal{X}$ of size 
    $\lfloor m(1-\delta)\rfloor $. Then we have 
    \begin{align*}
        \mu_{1-\delta} \geq \frac{(\sum_{i \in [m]}X_i) - m\delta}{\lfloor m(1-\delta)\rfloor } \
        \geq \frac{\mu - \delta}{1 - \delta}.
    \end{align*}
    Taking $X_i = (1 - p(\tau_i \mid \prompt \| T_{<i}))$ so $\mu \geq \thresh_\delta$,
    we can lower bound $\hat{\alpha}(T, \hT)$ by
    \begin{align*}
        \hat{\alpha}(T, \hT) &\geq \frac{\thresh_\delta - \delta}{1 - \delta} - \frac{\delta}{N-1} \\
        &= \frac{1}{C_0 \sqrt{2}}\cdot\sqrt{\frac{\lambda}{m}},
    \end{align*}
    where the final equality holds by plugging in our value of $\thresh_\delta$.
    \qedhere
\end{proof}
The following claim implies that our modified version of 
the ITS watermarking scheme from \cite{ARXIV:KTHL23} is a block-by-block scheme.
\begin{clm}\label{clm-KTHL-R-robust}
    Let $\wat$ be the ITS watermarking scheme from \cite[Section 2.3]{ARXIV:KTHL23},
    using the decoder $\Gamma$ from \cite[Line (1)]{ARXIV:KTHL23}, alignment cost $d$
    from \cite[Line (2)]{ARXIV:KTHL23}, and $\phi$ from \cite[Algorithm 3]{ARXIV:KTHL23}
    with $k = m$. Define $\widetilde{\wat}$ to be $\wat$ except using the 
    $\widetilde{\Detect}$ function from \eqref{simplified-detect}. Then $\widetilde{\wat}$ is 
    sound and non-adaptively $\RK$-robust.
\end{clm}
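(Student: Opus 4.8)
The plan is to reduce soundness and non-adaptive $\RK$-robustness of $\widetilde{\wat}$ to two one-sided concentration statements about the alignment-cost statistic $\phi$, exploiting that the detection threshold $\thresh = -\sqrt{|\hT|\lambda/8}$ of \eqref{simplified-detect} sits essentially midway between the typical value of $\phi$ on text independent of $\sk$ (which is $\ge 0$) and the typical value of $\phi$ on text that is $\delta$-close to a high-potential watermarked block (which is $\le -\sqrt{\ell\lambda/2}$, where $\ell$ is the block length). Throughout I use that $\phi(\cdot,\sk)$ is a minimum of the alignment cost $d$ over polynomially many length-$\ell$ substrings of the candidate text and length-$\ell$ windows of $\sk$ (polynomially many because only substitutions are allowed, cf.\ the footnote), and that for any fixed alignment the cost $d$ is a sum of $\ell$ terms, each a bounded deterministic function of a single coordinate of $\sk$, hence a sum of independent bounded random variables over $\sk$.

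For soundness, fix $T$ with $|T|\le\poly(\lambda)$; since $T$ is independent of $\sk$, for each fixed alignment the cost $d$ is a sum of independent bounded terms with nonnegative expectation (the cost $d$ of \cite[Line (2)]{ARXIV:KTHL23} is centered against an independent key). Hoeffding gives $\Pr_\sk[d<\thresh]\le e^{-\lambda/4}$, and a union bound over the $\poly(\lambda)$ alignments yields $\Pr_\sk[\phi(T,\sk)<\thresh]\le\poly(\lambda)\,e^{-\lambda/4}=\negl(\lambda)$, i.e.\ $\widetilde{\Detect}_\sk(T)=0$ except with negligible probability, which is \defref{def-soundness}.

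For robustness, fix a prompt $\prompt$ and efficient $\A$, set $T\gets\Wat_\sk(\prompt)$ and $\hT\gets\A(1^\lambda,T)$, and condition on $\RK(\lambda,\prompt,T,\hT)=1$. Unwinding $\RK$ ($R_1$ with $\isBlock_{\sf{KTHL}}$, $\bumpeq_{\sf{KTHL}}$), there is a block $\block\in\Blocks(T;\prompt)$ --- a length-$\ell$ substring of $T$ with $\alpha(\block)\ge\thresh_\delta$ --- and a length-$\ell$ substring $\hat{\block}$ of $\hT$ with $\Delta(\hat{\block},\block)\le\delta$. Applying \lemref{lem:lower-bound-alpha-hat} to $(\block,\hat{\block})$ gives $\hat\alpha(\block,\hat{\block})\ge\frac{1}{C_0\sqrt2}\sqrt{\lambda/\ell}$. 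Now align $\hat{\block}$ against the window of $\sk$ that $\Wat$ used to generate $\block$: the proof of \lemref{lem:KTHL-robust} shows that, over $\sk$, this alignment cost is a sum of $\ell$ independent bounded terms with expectation at most $-C_0\,\ell\,\hat\alpha(\block,\hat{\block})\le-\sqrt{\ell\lambda/2}$. Since $\thresh=-\sqrt{\ell\lambda/8}$ leaves a margin of $\sqrt{\ell\lambda/2}-\sqrt{\ell\lambda/8}=\sqrt{\ell\lambda/8}$, Hoeffding (with the $2n$ union-bound factor over key windows already in \lemref{lem:KTHL-robust}) shows this alignment cost is below $\thresh$ except with probability $2n\,e^{-\lambda/4}=\negl(\lambda)$. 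As $\phi(\hT,\sk)$ is a minimum over alignments, $\phi(\hT,\sk)<\thresh$ and $\widetilde{\Detect}_\sk(\hT)=1$ except with negligible probability, which is non-adaptive $\RK$-robustness (as in \clmref{clm:CGZ-R-na-robust}).

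The main obstacle is that the robustness fact imported from \cite{ARXIV:KTHL23} in the excerpt, \lemref{lem:KTHL-robust}, is a \emph{comparison} bound ($\phi(\hT,\sk)$ beating a fresh-key value $\phi(\hT,\sk')$) rather than an absolute bound on $\phi(\hT,\sk)$; to drive the threshold detector I instead need to pull out of its proof the two ingredients above --- the nonnegative conditional mean of $d$ against an independent key, and the $-C_0\ell\hat\alpha$ mean against the aligned watermarked window --- and supply the Hoeffding concentration myself. A secondary subtlety is length normalization: the threshold of \eqref{simplified-detect} should be read at the granularity of the length-$\ell$ windows that $\phi$ minimizes over (equivalently, $\widetilde{\Detect}$ is run over substrings as in \cite{EPRINT:ChrGunZam23}), so that a watermarked block much shorter than $\hT$ still triggers detection; and the union bound in the soundness step forces the restriction to token substitutions flagged in the claim's footnote.
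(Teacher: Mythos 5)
Your proposal is correct and follows essentially the same route as the paper's proof: soundness via Hoeffding on $d(T,\sigma)$ (zero mean over an independent key) with a union bound over the $n$ key windows, and robustness by extracting from \cite{ARXIV:KTHL23} the expectation $\Ex[d(\hT,\sigma)] = -mC_0\hat{\alpha}(T,\hT)$ for the alignment actually used by $\Wat$, combining it with \lemref{lem:lower-bound-alpha-hat}, and applying Hoeffding against the threshold $\thresh$, using that $\phi$ is a minimum so only that single alignment must fall below $\thresh$ (no union bound is needed there, so your extra $2n$ factor is harmless but unnecessary). The "obstacle" you flag is handled exactly as you anticipated: the paper does not use the comparison form of \lemref{lem:KTHL-robust} directly but pulls the mean computations out of the proofs of Lemmas 2.3/2.4 and Observation B.1 of \cite{ARXIV:KTHL23} and supplies the concentration itself.
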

\begin{proof}
    We will start by showing that $\widetilde{\wat}$ is sound. Fix a string 
    $T \in \tokSet^*$ of length $|T| = m \leq \poly(\lambda)$. We want to show 
    that $\widetilde{\Detect}_\sk(T)$ returns $1$ with negligible probability
    over secret keys $\sk$ of length $n$. Recall that $\widetilde{\Detect}_\sk(T)$ returns 
    $1$ if $\phi(T, \sk) < \thresh$, where 
    \[ 
        \phi(T, \sk) = \min_{\substack{\sigma \in \sk \\ |\sigma| = m}}\{d(T, \sigma)\}.
    \]
    Pick an arbitrary $\sigma \in \sk$. Then $\widetilde{\wat}$ is sound by the 
    following. Note that the second and third inequalities are not immediate, but 
    follow from the proofs of Lemmas 2.3 and 2.4 in \cite{ARXIV:KTHL23}.
    \ifx\cols\one
    \begin{align*}
        \Pr \left[\widetilde{\Detect}_\sk(T) = 1 \right] &= \Pr \left[ \phi(T, \sk) < \thresh\right] \\
        &\leq n \Pr \left[d(T, \sigma) < \thresh \right] \ 
        \quad\quad\quad\quad\quad\quad\quad\quad \text{(Union bound)} \\
        &\leq n \Pr \left[\Ex[d(T, \sigma)] - d(T, \sigma) > -\thresh \right] \quad\quad \text{($\Ex_\sk[d(T, \sigma)] = 0$)}\\
        &\leq 2n \exp\Bigl(-\frac{2(-\thresh)^2}{m(1/2)^2}\Bigr) \
        \quad\quad\quad\quad\quad\quad\quad \text{(Hoeffding's bound)} \\
        &= 2n\exp(-\lambda).
    \end{align*}
    \fi
    \ifx\cols\two
    \begin{align*}
        \Pr \left[\widetilde{\Detect}_\sk(T) = 1 \right] &= \Pr \left[ \phi(T, \sk) < \thresh\right] \\
        &\leq n \Pr \left[d(T, \sigma) < \thresh \right] \\
        & \quad\quad\quad\quad\quad\quad \text{(Union bound)} \\
        &\leq n \Pr \left[\Ex[d(T, \sigma)] - d(T, \sigma) > -\thresh \right] \\
        & \quad\quad\quad\quad\quad\quad \text{($\Ex_\sk[d(T, \sigma)] = 0$)} \\
        &\leq 2n \exp\Bigl(-\frac{2(-\thresh)^2}{m(1/2)^2}\Bigr) \\
        & \quad\quad\quad\quad\quad\quad \text{(Hoeffding's bound)} \\
        &= 2n\exp(-\lambda).
    \end{align*}
    \fi
    Next, fix a prompt $\prompt \in \tokSet^*$ of length $|\prompt| \leq \poly(\lambda)$
    and choose an efficient adversary $\A$. 
    To show that $\widetilde{\wat}$ is non-adaptively 
    $\RK$-robust, it suffices to show that
    $\widetilde{\Detect}_\sk(\hT) = 1$ with overwhelming probability, conditioned on 
    $\RK(\lambda, \prompt, T, \hT) = 1$, where $\sk \gets \Setup(1^\lambda)$, 
    $T \gets \Wat_\sk(m, p_\prompt, \Gamma)$, and $\hT \gets \A(1^\lambda, T)$. 
    Let $\sigma$ be the length-$m$ substring of $\sk$ used by $\Gamma$ to 
    generate $T$. Then we have 
    \begin{align*}
        \Pr \left[ \widetilde{\Detect}_\sk(\hT) = 1 \right] &= \
        \Pr \left[ \phi(\hT, \sk) < \thresh \right] \\
        &\geq \Pr \left[d(\hT, \sigma) < \thresh \right] \\
        &= 1 - \Pr \left[d(\hT, \sigma) \geq \thresh \right]
    \end{align*}
    All that is left is to show that $\Pr \left[d(\hT, \sigma) \geq \thresh \right]$
    is negligible. By \cite[Lemma 2.3, Observation B.1]{ARXIV:KTHL23}, we 
    have that $\Ex[d(\hT, \sigma)] = -mC_0\hat{\alpha}(T, \hT)$. Since we are 
    conditioning on the fact that $T$ and  $\hT$ pass the $\RK$-robustness condition, 
    \lemref{lem:lower-bound-alpha-hat} implies $\Ex[d(\hT, \sigma)] \leq 2\thresh$.
    Then applying Hoeffding's bound completes the proof.
    \ifx\cols\one
    \begin{align*}
        \Pr \left[d(\hT, \sigma) \geq \thresh \right] &= \ 
        \Pr \left[d(\hT, \sigma) - \Ex[d(\hT, \sigma)] \geq \thresh - \Ex[d(\hT, \sigma)] \right] \\ 
        &\leq \Pr \left[d(\hT, \sigma) - \Ex[d(\hT, \sigma)] \ 
        \geq -\thresh \right] \quad \text{ ($\thresh - \Ex[d(\hT, \sigma)] \geq -\thresh$) }\\ 
        &\leq \exp\Bigl(-\frac{2(-\thresh)^2}{m(1/2)^2}\Bigr) \\
        &< \exp(-\lambda).
    \end{align*}
    \fi
    \ifx\cols\two
    \begin{align*}
        \Pr \left[d(\hT, \sigma) \geq \thresh \right] &= \ 
        \Pr \left[d(\hT, \sigma) - \Ex[d(\hT, \sigma)] \geq \thresh - \Ex[d(\hT, \sigma)] \right] \\ 
        &\leq \Pr \left[d(\hT, \sigma) - \Ex[d(\hT, \sigma)] \ 
        \geq -\thresh \right] \\
        & \quad\quad\quad\quad\quad\quad \text{ ($\thresh - \Ex[d(\hT, \sigma)] \geq -\thresh$) }\\
        &\leq \exp\Bigl(-\frac{2(-\thresh)^2}{m(1/2)^2}\Bigr) \\
        &< \exp(-\lambda).\qedhere
    \end{align*}
    \fi
\end{proof}

\subsection{Green list / red list schemes \cite{ICML:KGWKMG23,ARXIV:ZALW23}}

The zero-bit watermarking scheme of Kirchenbauer, et al.\ \cite{ICML:KGWKMG23} is 
not undetectable, nor does it appear to enjoy the sort of provable soundness and
robustness guarantees required to apply our constructions. A heuristic version of 
our main construction applied to this scheme may work well in practice, though 
empirical analysis is well beyond our present scope.

We give details below, focusing on the simplified variant in \cite{ARXIV:ZALW23}. 
{You may safely skip the rest of this subsection.} We include it mainly to aid 
readers interested in understanding \cite{ARXIV:ZALW23}.

The core idea in the construction is to randomly partition the token set $\tokSet$ 
into a \emph{green list} $\green$ and \emph{red list} 
$\red = \tokSet \setminus \green$, and preferentially sample tokens from $\green$.\footnote{%
    The difference between \cite{ARXIV:ZALW23} and the original construction
    of \cite{ICML:KGWKMG23} is that in the original, the green and red lists
    change as a function of the preceding tokens.}
Note that by changing the distribution over tokens, the scheme is not undetectable.
The detection algorithm performs a hypothesis test on the fraction of green tokens in a text, declaring the text marked if the fraction is significantly greater than some expected threshold (i.e., rejecting the null hypothesis that the text is not marked).

Empirically and heuristically, the scheme appears well suited to a block-by-block 
interpretation. Detecting the watermark requires a text $\hT$ to contain a long 
enough substring that was (close to a substring) produced by the watermarked model.
For example, see \cite{kirchenbauer2023reliability} which refines the original 
$\Detect$ algorithm by testing every substring of the input text (among other 
improvements), and empirically analyzes robustness to paraphrasing and copy-paste attacks. 

Unfortunately, it does not appear that the scheme provides the sort of provable guarantees we need to view it as a block-by-block scheme. In brief, the scheme does not appear to simultaneously enjoy both non-trivial soundness (low false positives) for all strings and non-trivial completeness (low false negatives) for unmodified outputs of the watermarked model. Either type of error can be bounded (even negligibly small), at the cost of destroying the provable guarantee on the other. Generically turning this scheme into a block-by-block scheme seems to require simultaneously bounding both types of errors.

The construction is parameterized by constants $\gamma\in (0,1)$ and $\delta>0$. The parameter $\gamma$ governs the size of the green set: $|\green| = \gamma |\tokSet|$. The parameter $\delta$ governs the amount that the watermarked model is biased towards green tokens, as described below. The secret key is the green set: $\sk = \green$. The $\Detect$ algorithm computes a $z$-score and compares it to some threshold $\thresh$ (which may depend on the input $T$). That is, $\Detect_\sk(T) := \1\left(z_\sk(T) > \thresh(T)\right)$, where $\thresh(T)$ is a threshold and $z_\sk(T):=(\sum_{i=1}^{|T|}\1(\tau_i \in \green) - \gamma|T|)/\sqrt{|T|\gamma(1-\gamma)}$.

In particular, one would need to set the threshold $\thresh(T)$ below to satisfy both Theorems~\ref{thm:zalw-soundness} and~\ref{thm:zalw-completeness}. Consider the typical case of $\gamma=1/2$. Theorem~\ref{thm:zalw-soundness} requires $\thresh(T) \ge 64\lambda{\Cmax(T)}/{\sqrt{|T|}}$. If  $\Cmax(T) > |T|/64\lambda$, then $\thresh(T) > \sqrt{T}$ is needed.\footnote{%
    Observe that $\Cmax(T)/|T|$ is the frequency of the most common token in $T$. For natural language, $\Cmax(T) = \Omega(T)$ is typical. For example, about 7\% of the words in the Brown Corpus are ``the''. The condition $\Cmax(T) > |T|/64\lambda$ only requires that there exists a token with frequency $1/64\lambda$ in $T$.}
But Theorem~\ref{thm:zalw-completeness} requires $\thresh(T) < \sqrt{T}$, as $\kappa < 1$ and $\gamma = 1/2$. One cannot have both, regardless of $\lambda$ and the error rates $\alpha, \beta$.

Given $\GenModel$, prompt $\prompt$, and string $T$, let $p_i$ be the probability 
distribution over token $i$ in the output of $\GenModel(\prompt)$ conditioned on $T_{1:i-1}$. 
\[
p_i(\tau ~|~ \prompt\|T_{<i}):= \Pr[\GenModel(\prompt\|T_{<i})_1 = \tau].
\]
The marked model $\Wat$ samples the next token $\tau$ with probability 
$p_i^\green(t | Q\|T_{1:i})$, defined as:
\[
p^\green_i(\tau~|~ \prompt\|T_{<i}) \propto \exp(\delta\cdot \1(\tau \in \green))\cdot p(\tau~|~ \prompt\|T_{<i}).
\]
In other words, $p^\green$ is defined by  upweighting the probabilities of $\tau \in \green$ by a factor of $e^\delta$ and the distribution is renormalized (equivalently, adding $\delta$ to the logits and computing the soft-max).

The soundess guarantee of  \cite{ARXIV:ZALW23} is given in terms of two functions of a string $T \in \tokSet^*$. These functions $\Cmax$ and $\Vmax$ both take values in $[0,|T|]$. The function $\Cmax$ is the more important function for our purposes: it counts the number of occurences of the most frequent token in a string $T$.
\ifx\cols\one
\[
\Cmax(T):=\max_{\tau \in \tokSet}\sum_{i \in |T|}\1(\tau_i = \tau) \quad
\Vmax(T) := \frac{1}{|T|} \sum_{\tau \in \tokSet} \biggl( \sum_{i \in |T|} \1(\tau_i = \tau) \biggr)^2
\]
\fi
\ifx\cols\two
\begin{align*}
    \Cmax(T)&:=\max_{\tau \in \tokSet}\sum_{i \in |T|}\1(\tau_i = \tau) \\
    \Vmax(T) &:= \frac{1}{|T|} \sum_{\tau \in \tokSet} \biggl( \sum_{i \in |T|} \1(\tau_i = \tau) \biggr)^2
\end{align*}
\fi

\begin{theorem}[Soundness, Theorem C.4 of \cite{ARXIV:ZALW23}]\label{thm:zalw-soundness}
For any $T\in \tokSet^*$
\begin{equation}
\Pr_\sk\left[z_\sk(T) > \sqrt{\frac{64\log(9/\alpha)\Vmax(T)}{1-\gamma}} + \frac{16\log(9/\alpha)\Cmax(T)}{\sqrt{|T|\gamma(1-\gamma)}} \right] < \alpha.
\end{equation}
In particular, taking $\thresh(T) \ge \sqrt{\frac{64\log(9)\lambda\Vmax(T)}{1-\gamma}} + \frac{16\log(9)\lambda\Cmax(T)}{\sqrt{|T|\gamma(1-\gamma)}}$, we get $\Pr[\Detect_\sk(T) = 1] < 2^{-\lambda}$.
\end{theorem}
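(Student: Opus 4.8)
The plan is to view $z_\sk(T)$ as a scaled, centered sum of bounded independent (or negatively associated) random variables, one per distinct token, and finish with a Bernstein-type tail bound. Fix $T = \tau_1\cdots\tau_N$ with $N = |T|$, and for each $\tau\in\tokSet$ let $n_\tau := \sum_{i\le N}\1(\tau_i=\tau)$ be its multiplicity in $T$, so that $\sum_\tau n_\tau = N$, $\max_\tau n_\tau = \Cmax(T)$, and $\sum_\tau n_\tau^2 = N\cdot\Vmax(T)$ by the definition of $\Vmax$. With $Y_\tau := \1(\tau\in\green)$, the green-token count is $S := \sum_{i\le N}\1(\tau_i\in\green) = \sum_\tau n_\tau Y_\tau$; since $\Pr[\tau\in\green] = \gamma$ for every $\tau$, we get $\Ex_\sk[S] = \gamma N$ exactly, so
\[
z_\sk(T) \;=\; \frac{S-\gamma N}{\sqrt{N\gamma(1-\gamma)}} \;=\; \frac{1}{\sqrt{N\gamma(1-\gamma)}}\sum_\tau n_\tau\bigl(Y_\tau-\gamma\bigr).
\]
Each summand $n_\tau(Y_\tau-\gamma)$ is mean zero, bounded by $n_\tau\le\Cmax(T)$ in absolute value, and has variance $n_\tau^2\gamma(1-\gamma)$, so the total variance proxy is $V := \gamma(1-\gamma)\sum_\tau n_\tau^2 = \gamma(1-\gamma)\,N\,\Vmax(T)$.

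I would then apply Bernstein's inequality to $Z := \sum_\tau n_\tau(Y_\tau-\gamma)$: for mean-zero terms of magnitude at most $\Cmax(T)$ with total variance $V$,
\[
\Pr_\sk\!\left[Z > s\right] \;\le\; \exp\!\left(-\,\frac{s^2/2}{\,V + \tfrac13\,\Cmax(T)\,s\,}\right),\qquad s>0,
\]
and observe that $\{z_\sk(T)>\thresh\}=\{Z>s\}$ for $s := \thresh\sqrt{N\gamma(1-\gamma)}$. If $\green$ is built by placing each token in the list independently with probability $\gamma$, this is the textbook inequality; if $\green$ is instead a uniformly random subset of $\tokSet$ of size $\gamma|\tokSet|$ (as stated), the $Y_\tau$ are negatively associated rather than independent, but the same upper-tail bound still holds for sums of bounded negatively associated variables (following Dubhashi and Ranjan), and the true variance is then at most the binomial proxy $V$, so using $V$ is safe either way.

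It then remains to substitute the claimed threshold. Writing $u := \log(9/\alpha)$ and $\thresh = \sqrt{64u\,\Vmax(T)/(1-\gamma)} + 16u\,\Cmax(T)/\sqrt{N\gamma(1-\gamma)}$, we get $s = s_1 + s_2$ with $s_1 := 8\sqrt{u\,\gamma N\,\Vmax(T)}\ge 8\sqrt{uV}$ (since $\gamma N\,\Vmax(T) = V/(1-\gamma)\ge V$) and $s_2 := 16u\,\Cmax(T)$. Using $s_1^2\ge 64uV\ge 2uV$, $\tfrac23 u\,\Cmax(T)\,s_2 = \tfrac{32}{3}u^2\Cmax(T)^2\le s_2^2$, and $\tfrac23 u\,\Cmax(T)\,s_1\le 32u\,\Cmax(T)\,s_1 = 2s_1 s_2$, we obtain $s^2\ge s_1^2+s_2^2+2s_1 s_2\ge 2uV+\tfrac23 u\,\Cmax(T)\,s = 2u\bigl(V+\tfrac13\Cmax(T)\,s\bigr)$, so the Bernstein exponent is at least $u$ and $\Pr[z_\sk(T)>\thresh]\le e^{-\log(9/\alpha)} < \alpha$; the factor $9$ simply absorbs any slack in the precise constant of the (negative-association) Bernstein bound. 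For the ``in particular'' part, set $\alpha = 2^{-\lambda}$ and note $\log(9/\alpha) = \log 9 + \lambda\log 2\le \lambda\log 9$, so any $\thresh(T)$ at least $\sqrt{64\lambda\log(9)\,\Vmax(T)/(1-\gamma)} + 16\lambda\log(9)\,\Cmax(T)/\sqrt{N\gamma(1-\gamma)}$ exceeds the threshold handled above, giving $\Pr[\Detect_\sk(T)=1] = \Pr[z_\sk(T)>\thresh(T)] < 2^{-\lambda}$.

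I expect the main obstacle to be pinning down the sampling model for $\green$ and justifying the concentration step at the appropriate level of rigor: all the randomness is carried by the indicators $Y_\tau$, and whether they are independent or drawn without replacement determines whether one invokes plain Bernstein or its negatively-associated analogue (and fixes the exact constants -- presumably why $64$, $16$, $9$ appear rather than the tight $2$, $4/3$, $1$). Everything else is routine: identifying $\sum_\tau n_\tau^2 = N\Vmax(T)$ and $\max_\tau n_\tau = \Cmax(T)$, checking that the $z$-score is exactly centered in both models, and doing the two-term split of $s$.
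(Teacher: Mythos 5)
The paper does not actually prove this statement: it is imported verbatim (adapted) from \cite{ARXIV:ZALW23}, Theorem C.4, so there is no in-paper proof to compare yours against. Judged on its own, your argument is correct and follows the natural route that the cited source also takes: grouping occurrences by token type so that all randomness sits in the indicators $Y_\tau=\1(\tau\in\green)$, which is exactly what makes $\sum_\tau n_\tau^2=|T|\cdot\Vmax(T)$ and $\max_\tau n_\tau=\Cmax(T)$ appear as the variance and range parameters, and then applying a Bernstein-type upper-tail bound, which is legitimate for the fixed-size green list because the $Y_\tau$ are negatively associated and $x\mapsto e^{tx}$ is monotone. Your two-term split $s=s_1+s_2$ and the verification that the exponent is at least $\log(9/\alpha)$ check out, with slack to spare given the loose constants $64$, $16$, $9$; the only cosmetic point is the ``in particular'' step, where instead of bounding $\log 9+\lambda\log 2\le\lambda\log 9$ (which needs $\lambda\ge 2$) you can note directly that an exponent of $\lambda\log 9$ gives probability at most $9^{-\lambda}<2^{-\lambda}$ for every $\lambda\ge 1$.
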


Completeness requires two conditions on $\GenModel$ for the prompt $\prompt$ whose definitions we omit: on-average high entropy and on-average homophily (\cite[Assumptions C.9, C.12]{ARXIV:ZALW23}).

\begin{theorem}[Completeness, adapted from Theorem C.13 of \cite{ARXIV:ZALW23}]\label{thm:zalw-completeness}
    Fix $\GenModel$ and $\prompt$. Suppose that $\beta,\kappa \in (0,1)$ and $|T|$ satisfy the following, where $c_1$ and $c_2$ are some constants that depend on the parameters $\delta$ and $\gamma$.
\begin{itemize}
    \item $|T| \ge c_1\cdot\frac{\log(1/\beta)}{(1-\kappa)^2}$.
    \item $\GenModel$ has {$\beta$-on-average-homophily} for $\prompt$
    \item $\GenModel$ has $(\xi,\beta/3)$-on-average-high-entropy for $\prompt$, for $\xi = c_2\cdot\frac{1-\kappa}{\log^2(|T|/\beta)}$
\end{itemize}
Then
\begin{equation}
\Pr\left[z_\sk(T) < \frac{\kappa(e^\delta - 1)\sqrt{|T|\gamma(1-\gamma)}}{1+(e^\delta-1)\gamma}\right] \le \beta
\end{equation}
In particular,  $\Pr[\Detect_\sk(T) = 0] \le 2^{-\lambda}$ for threshold \ifx\cols\two\\\fi$\thresh(T) < \frac{\kappa(e^\delta - 1)\sqrt{|T|\gamma(1-\gamma)}}{1+(e^\delta-1)\gamma}$.
\end{theorem}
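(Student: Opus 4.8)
The displayed bound on $z_\sk(T)$ is, up to notation, the completeness analysis of \cite[Theorem C.13]{ARXIV:ZALW23}, so the plan is to invoke that result for a general confidence parameter $\beta$ and then read off the ``in particular'' clause by specializing $\beta = 2^{-\lambda}$. Since the real work is in the ZALW argument, I will recall its skeleton. Write $T = \tau_1\cdots\tau_m \gets \Wat_\sk(\prompt)$ with $m = |T|$, and for each position $i$ set $g_i := \sum_{\tau\in\green} p(\tau \mid \prompt\|T_{<i})$, the green mass under the unmarked model. From the definition of $p^\green$, conditioned on $T_{<i}$ the next token lands in $\green$ with probability exactly $\psi(g_i)$, where $\psi(g) := e^\delta g/(1+(e^\delta-1)g)$ is increasing and strictly concave and satisfies $\psi(\gamma)-\gamma = \gamma(1-\gamma)(e^\delta-1)/(1+(e^\delta-1)\gamma)$. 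Because the numerator of $z_\sk(T)$ equals $\sum_i \1(\tau_i\in\green) - \gamma m$, it suffices to show $\sum_i \1(\tau_i\in\green) \ge \gamma m + \kappa\, m(\psi(\gamma)-\gamma)$ except with probability $\beta$; dividing by $\sqrt{m\gamma(1-\gamma)}$ then gives the claimed inequality.

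I would establish this in two steps. (i) Lower bound the expectation: the on-average-high-entropy hypothesis (with the stated $\xi$) controls how far $g_i$ can be from $\gamma$, averaged over positions, for a uniformly random partition $(\green,\red)$; by concavity of $\psi$ this forces $\Ex_\green\!\big[\tfrac1m\sum_i\psi(g_i)\big] \ge \gamma + \kappa'(\psi(\gamma)-\gamma)$ for a slack $\kappa'$ close to $1$. (ii) Concentrate around the expectation: the indicators $\1(\tau_i\in\green)$ are dependent — a token's green bit is fixed once $\green$ is sampled, and $T$ itself is drawn depending on $\green$ — so I would run a bounded-difference/martingale argument over the $m$ positions; this is exactly where the on-average-homophily hypothesis is used, to bound the systematic effect of repeated tokens on the green count, and where the length requirement $m \ge c_1\log(1/\beta)/(1-\kappa)^2$ is needed so that the deviation eats only a $(1-\kappa)$ fraction of the gap. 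Combining (i), (ii), and the error budget (the $\beta/3$ slack in the entropy hypothesis plus the deviation term) by a union bound yields $\sum_i \1(\tau_i\in\green) \ge \gamma m + \kappa\, m(\psi(\gamma)-\gamma)$ except with probability $\le\beta$, hence $\Pr[z_\sk(T) < \kappa(e^\delta-1)\sqrt{m\gamma(1-\gamma)}/(1+(e^\delta-1)\gamma)] \le \beta$.

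For the ``in particular'' clause, apply the above with $\beta = 2^{-\lambda}$ (the three hypotheses are already phrased in terms of $\beta$, so this is merely a choice of parameters). Since $\Detect_\sk(T) = \1(z_\sk(T) > \thresh(T))$, the event $\Detect_\sk(T) = 0$ is precisely $\{z_\sk(T) \le \thresh(T)\}$; if $\thresh(T) < \kappa(e^\delta-1)\sqrt{|T|\gamma(1-\gamma)}/(1+(e^\delta-1)\gamma)$ then this event is contained in the event bounded above, so $\Pr[\Detect_\sk(T) = 0] \le 2^{-\lambda}$. I expect the main obstacle to be step (ii): the shared randomness in the green-list bits (one bit per token, reused across repeated occurrences, with $T$ itself depending on $\green$) breaks independence, so obtaining a $\beta$-level deviation bound genuinely needs the homophily condition and a careful martingale setup — the cleanest route is to cite \cite[Theorem C.13]{ARXIV:ZALW23} for this rather than re-derive it.
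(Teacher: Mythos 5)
Your proposal matches the paper's treatment: this theorem is not re-proved here but simply adapted from Theorem C.13 of \cite{ARXIV:ZALW23}, and your plan --- invoke that result for general $\beta$, then specialize $\beta = 2^{-\lambda}$ and use $\Detect_\sk(T) = \1(z_\sk(T) > \thresh(T))$ to get the ``in particular'' clause --- is exactly the intended reading. Your sketch of the internal ZALW argument is extra (and plausibly reconstructed), but the citation-plus-parameter-specialization route you fall back on is precisely what the paper does.
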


\fi

\ifx\cols\one
\section{Reference: definition variants}\label{ap:definitions}

\subsection{Zero-bit watermarking}

\begin{definition}[Undetectability -- zero-bit \cite{EPRINT:ChrGunZam23}]\label{def-zero-undetectability}
    A zero-bit watermarking scheme
    $\wat=(\Setup,\Wat,\Detect)$ for $\GenModel$ is
    \emph{undetectable} if for all efficient adversaries $\A$,
    \[
        \left|
        \Pr[\A^{\GenModel(\cdot)}(1^{\lambda}) = 1] -
        \Pr_{\sk \getsr \Setup(1^{\lambda})}[
            \A^{\Wat_{\sk}(\cdot)}(1^{\lambda}) = 1]
        \right|
    \]
    is at most $\negl(\lambda)$.
\end{definition}

\begin{definition}[Soundness -- zero-bit]\label{def-soundness}
    A zero-bit watermarking scheme $\wat=(\Setup,\Wat,\Detect)$
    is \emph{sound} if for all polynomials $\poly$ and all 
    strings $T\in \tokSet^*$ of length $|T|\le \poly(\lambda)$,
    \[
        \Pr_{\sk \getsr \Setup(1^{\lambda})}[\Detect_{\sk}(T) \neq 0]
        < \negl(\lambda).
    \]
\end{definition}

\noindent For all of the robustness definitions below, completeness can be 
defined by including the extra clause $\hT = T$ (or $\hT \in (T_i)_i$ in the 
adaptive setting).

\begin{definition}[$R$-Robust\textcolor{blue}{/Complete} detection -- zero-bit, non-adaptive]\label{def-robustness-zero-nonadaptive}
    A zero-bit watermarking scheme $\wat =(\Setup,\Wat,\Detect)$
    is \emph{non-adaptively $R$-robustly\textcolor{blue}{/completely} detectable} with respect 
    to the robustness condition $R$ if for all efficient adversaries $\A$, all 
    polynomials $\poly$, and all prompts $\prompt \in \tokSet^*$ of length 
    $|\prompt| \le \poly(\lambda)$, the following event \fail occurs with negligible probability:

    \ifx\cols\one
    \begin{itemize}
        \item \textcolor{blue}{$\hT = T$, AND 
        \quad \texttt{// the adversary outputs $T$}}
        \item $R(\lambda,\prompt,T,\hT)  = 1$, AND
        \quad \texttt{// the robustness condition passes}
        \item $\Detect_\sk(\hT) = 0$
        \quad \texttt{// the mark is removed}
    \end{itemize}
    \fi
    \ifx\cols\two
    \begin{itemize}
        \item \textcolor{blue}{$\hT = T$, AND \\
        \quad \texttt{// the adversary outputs $T$}}
        \item $R(\lambda,\prompt,T,\hT)  = 1$, AND\\
        \quad \texttt{// the robustness condition passes}
        \item $\Detect_\sk(\hT) = 0$ \\
        \quad \texttt{// the mark is removed}
    \end{itemize}
    \fi
    
    in the probability experiment defined by
    \begin{itemize}
        \item $\sk \gets \Setup(1^\lambda)$
        \item $T \getsr \Wat_{\sk}(\prompt)$
        \item $\hT \gets \A(1^\lambda, T)$.
    \end{itemize}
\end{definition}

\begin{definition}[$R$-Robust\textcolor{blue}{/Complete} detection -- zero-bit, adaptive]\label{def-robustness-zero-adaptive}
    A zero-bit watermarking scheme $\wat=(\Setup,\allowbreak\Wat,\Detect)$
    is \emph{adaptively $R$-robustly\textcolor{blue}{/completely} detectable} with respect to the robustness condition
    $R$ if for all efficient adversaries $\A$, the following event \fail occurs 
    with negligible probability:

    \ifx\cols\one
    \begin{itemize}
        \item \textcolor{blue}{$\hT = T$, AND 
        \quad \texttt{// the adversary outputs $T$}}
        \item $R(\lambda,(\prompt_i)_i,(T_i)_i,\hT)  = 1$, AND
        \quad \texttt{// the robustness condition passes}
        \item $\Detect_\sk(\hT) = 0$
        \quad \texttt{// the mark is removed}
    \end{itemize}
    \fi
    \ifx\cols\two
    \begin{itemize}
        \item \textcolor{blue}{$\hT = T$, AND \\
        \quad \texttt{// the adversary outputs $T$}}
        \item $R(\lambda,(\prompt_i)_i,(T_i)_i,\hT)  = 1$, AND\\
        \quad \texttt{// the robustness condition passes}
        \item $\Detect_\sk(\hT) = 0$ \\
        \quad \texttt{// the mark is removed}
    \end{itemize}
    \fi
    
    in the probability experiment defined by
    \begin{itemize}
        \item $\sk \gets \Setup(1^\lambda)$
        \item $\hT \gets \A^{\Wat_{\sk}(\cdot)}(1^\lambda)$,
        denoting by $(Q_i)_i$ and $(T_i)_i$ the sequence of inputs and outputs of the oracle.
    \end{itemize}

    We also say a scheme satisfying this definition is \emph{$(\delta,R)$-robust}.
\end{definition}

\subsection{$L$-bit watermarking}

\begin{definition}[$(\delta, R)$-Robust\textcolor{blue}{/Complete} extraction -- $L$-bit, non-adaptive]\label{def-robustness-nonadaptive}
    An $L$-bit watermarking scheme $\Msg\allowbreak =(\MsgSetup,\Encode,\Extract)$
    is \emph{non-adaptively $(\delta, R)$-robustly\textcolor{blue}{/completely} extractable} with respect to the robustness condition
    $R$ if for all efficient adversaries $\A$, all messages $\msg \in \bits^L$, all
    polynomials $\poly$, and all prompts $\prompt \in \tokSet^*$ of length
    $|\prompt| \le \poly(\lambda)$, the following event \fail occurs with negligible probability:

    \ifx\cols\one
    \begin{itemize}
        \item \textcolor{blue}{$\hT = T$, AND 
        \quad \texttt{// the adversary outputs $T$}}
        \item $R(\lambda,\prompt,T,\hT)  = 1$, AND
        \quad \texttt{// the robustness condition passes}
        \item $\hat{\msg}\not\in B_{\delta}(m)$ 
        \quad \texttt{// the mark is corrupted}
    \end{itemize}
    \fi
    \ifx\cols\two
    \begin{itemize}
        \item \textcolor{blue}{$\hT = T$, AND \\
        \quad \texttt{// the adversary outputs $T$}}
        \item $R(\lambda,\prompt,T,\hT)  = 1$, AND\\
        \quad \texttt{// the robustness condition passes}
        \item $\hat{\msg}\not\in B_{\delta}(m)$ \\
        \quad \texttt{// the mark is corrupted}
    \end{itemize}
    \fi
    
    in the probability experiment define by
    \begin{itemize}
        \item $\sk \getsr \MsgSetup(1^{\lambda})$
        \item $T \getsr \Encode_{\sk}(\msg,\prompt)$
        \item $\hT \gets \A(1^{\lambda},T)$
        \item $\hat{\msg} \gets \Extract_{\sk}(\hT)$.
    \end{itemize}
\end{definition}

\subsection{Multi-user watermarking}\label{ap:multi-user-defs}
\begin{definition}[Undetectability -- multi-user]\label{def-multi-undetectability}
    Define the oracle $\GenModel'(u, \prompt) := \GenModel(\prompt)$. A multi-user
    watermarking scheme $\wat=(\Setup,\Wat,\Detect,\Trace)$ for $\GenModel$ is
    \emph{undetectable} if for all efficient adversaries $\A$,
    \[
        \left|
        \Pr[\A^{\GenModel'(\cdot, \cdot)}(1^{\lambda}) = 1] -
        \Pr_{\sk \getsr \Setup(1^{\lambda})}[\A^{\Wat_{\sk}(\cdot, \cdot)}(1^{\lambda}) = 1]
        \right|
    \]
    is at most $\negl(\lambda)$.
\end{definition}

\begin{definition}[Soundness -- multi-user]\label{def-multi-soundness}
    A multi-user watermarking scheme $\wat=(\Setup,\Wat,\Detect,\allowbreak\Trace)$ is
    \emph{sound} if for all polynomials $\poly$ and all strings
    $T\in \tokSet^*$ of length $|T|\le \poly(\lambda)$,
    \[
    \Pr_{\sk \getsr \Setup(1^{\lambda})}[\Detect_{\sk}(T) = 1]
    \le \negl(\lambda).
    \]
\end{definition}

\fi
\ifx\cols\two
\section{Reference: robustness definition variants}\label{ap:definitions}

For any robustness definition, completeness can be 
defined by including the extra clause $\hT = T$ (or $\hT \in (T_i)_i$ in the 
adaptive setting).

\begin{definition}[$R$-Robust\textcolor{blue}{/Complete} detection -- zero-bit, non-adaptive]\label{def-robustness-zero-nonadaptive}
    A zero-bit watermarking scheme $\wat =(\Setup,\Wat,\Detect)$
    is \emph{non-adaptively $R$-robustly\textcolor{blue}{/completely} detectable} with respect 
    to the robustness condition $R$ if for all efficient adversaries $\A$, all 
    polynomials $\poly$, and all prompts $\prompt \in \tokSet^*$ of length 
    $|\prompt| \le \poly(\lambda)$, the following event \fail occurs with negligible probability:

    \ifx\cols\one
    \begin{itemize}
        \item \textcolor{blue}{$\hT = T$, AND 
        \quad \texttt{// the adversary outputs $T$}}
        \item $R(\lambda,\prompt,T,\hT)  = 1$, AND
        \quad \texttt{// the robustness condition passes}
        \item $\Detect_\sk(\hT) = 0$
        \quad \texttt{// the mark is removed}
    \end{itemize}
    \fi
    \ifx\cols\two
    \begin{itemize}
        \item \textcolor{blue}{$\hT = T$, AND \\
        \quad \texttt{// the adversary outputs $T$}}
        \item $R(\lambda,\prompt,T,\hT)  = 1$, AND\\
        \quad \texttt{// the robustness condition passes}
        \item $\Detect_\sk(\hT) = 0$ \\
        \quad \texttt{// the mark is removed}
    \end{itemize}
    \fi
    
    in the probability experiment defined by
    \begin{itemize}
        \item $\sk \gets \Setup(1^\lambda)$
        \item $T \getsr \Wat_{\sk}(\prompt)$
        \item $\hT \gets \A(1^\lambda, T)$.
    \end{itemize}
\end{definition}

\begin{definition}[$R$-Robust\textcolor{blue}{/Complete} detection -- zero-bit, adaptive]\label{def-robustness-zero-adaptive}
    A zero-bit watermarking scheme $\wat=(\Setup,\allowbreak\Wat,\Detect)$
    is \emph{adaptively $R$-robustly\textcolor{blue}{/completely} detectable} with respect to the robustness condition
    $R$ if for all efficient adversaries $\A$, the following event \fail occurs 
    with negligible probability:

    \ifx\cols\one
    \begin{itemize}
        \item \textcolor{blue}{$\hT = T$, AND 
        \quad \texttt{// the adversary outputs $T$}}
        \item $R(\lambda,(\prompt_i)_i,(T_i)_i,\hT)  = 1$, AND
        \quad \texttt{// the robustness condition passes}
        \item $\Detect_\sk(\hT) = 0$
        \quad \texttt{// the mark is removed}
    \end{itemize}
    \fi
    \ifx\cols\two
    \begin{itemize}
        \item \textcolor{blue}{$\hT = T$, AND \\
        \quad \texttt{// the adversary outputs $T$}}
        \item $R(\lambda,(\prompt_i)_i,(T_i)_i,\hT)  = 1$, AND\\
        \quad \texttt{// the robustness condition passes}
        \item $\Detect_\sk(\hT) = 0$ \\
        \quad \texttt{// the mark is removed}
    \end{itemize}
    \fi
    
    in the probability experiment defined by
    \begin{itemize}
        \item $\sk \gets \Setup(1^\lambda)$
        \item $\hT \gets \A^{\Wat_{\sk}(\cdot)}(1^\lambda)$,
        denoting by $(Q_i)_i$ and $(T_i)_i$ the sequence of inputs and outputs of the oracle.
    \end{itemize}

    We also say a scheme satisfying this definition is \emph{$(\delta,R)$-robust}.
\end{definition}

\section{More properties of our multi-user scheme}\label{ap:multiuser-easy-props}

\begin{clm}[$\wat$ is consistent]\label{clm-consistent}
    Let $L, n, c > 1$ be integers and $0 \le \delta < 1$.
    Let $\Msg'$ be an $L$-bit watermarking scheme and
    $\FP$ be a fingerprinting code.
    Then the $\wat$ construction from
    \figref{fig-multiuser} is a consistent multi-user watermarking scheme.
\end{clm}

\begin{clm}[$\wat$ is undetectable]\label{clm-undetectable}
If $\GenModel$ is
prefix-specifiable and $\Msg'$ is an $L$-bit
watermarking scheme built from an undetectable zero-bit scheme, then the $\wat$ 
construction from \figref{fig-multiuser} is undetectable.
\end{clm}

\begin{clm}[$\wat$ is sound]\label{clm-sound}
    Let $L, n, c > 1$ be integers and $0\le \delta < 1$.
    Let $\Msg'$ be a sound $L$-bit watermarking scheme and
    $\FP$ be a fingerprinting code of length
    $L$ with parameters $(\lambda,n,c,\delta)$.
    Then the $\wat$ construction from \figref{fig-multiuser} is a sound
    multi-user watermarking scheme.
\end{clm}

\fi
\ifx\cols\one
\section{Proof of \lemref{lm-bnb}}\label{sec:lm-bnb-proof}

\begin{lemma}\label{lm-bnb-apdx}
    For $\lambda,L  \ge 1$ and $0\le \delta <1$, define 
    \[
        k^*(L,\delta) = \min\left\{L\cdot (\ln L + \lambda); \quad L\cdot \ln\left(\frac{1}{\delta - \sqrt{\frac{\lambda + \ln 2}{2L}}} \right) \right\}
    \]
    Then, after throwing $k\ge k^*(L,\delta)$ balls into $L$ bins, fewer than $\delta L$ bins are empty except with probability at most $e^{-\lambda}$.
\end{lemma}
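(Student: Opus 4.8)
The plan is to control $N_0$, the number of empty bins after throwing $k$ balls uniformly at random into $L$ bins. Write $N_0 = \sum_{j=1}^{L} Y_j$ with $Y_j = \1[\text{bin } j \text{ is empty}]$, so that $\Ex[Y_j] = (1 - 1/L)^k \le e^{-k/L}$ and hence $\Ex[N_0] \le L\,e^{-k/L}$. The two terms in the minimum defining $k^*$ correspond to two independent arguments: since $k \ge k^*(L,\delta)$ means $k$ exceeds (at least) the smaller of the two terms, it suffices to show that each term individually implies the conclusion.

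For the first term, suppose $k \ge L(\ln L + \lambda)$. Then $\Ex[N_0] \le L\,e^{-(\ln L + \lambda)} = e^{-\lambda}$, so Markov's inequality gives $\Pr[N_0 \ge 1] \le e^{-\lambda}$; as $N_0$ is integer-valued, with probability at least $1 - e^{-\lambda}$ no bin is empty, which in particular is fewer than $\delta L$ whenever $\delta > 0$ (and if $d := \delta - \sqrt{(\lambda+\ln 2)/(2L)} \le 0$, the second term is $+\infty$ and this case is the only relevant one). For the second term, suppose $0 < d < 1$ and $k \ge L\ln(1/d)$; then $\Ex[N_0] \le L\,e^{-k/L} \le L d = \delta L - \sqrt{L(\lambda + \ln 2)/2}$, so setting $t := \sqrt{L(\lambda+\ln 2)/2}$ we have $\Ex[N_0] + t \le \delta L$, and it remains only to show the tail bound $\Pr[N_0 \ge \Ex[N_0] + t] \le e^{-\lambda}$.

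The one nontrivial step — and the main obstacle — is this exponential concentration of $N_0$ around its mean, since the $Y_j$ are not independent. The plan is to prove a submultiplicative moment generating function bound, $\Ex[e^{s N_0}] \le \prod_{j=1}^{L}\Ex[e^{s Y_j}]$ for all $s > 0$, by expanding $\Ex\bigl[\prod_j (1 + (e^s - 1)Y_j)\bigr] = \sum_{S \subseteq [L]} (e^s - 1)^{|S|}\Pr[\text{all bins of } S \text{ empty}]$ and using the elementary inequality $\Pr[\text{all bins of } S \text{ empty}] = (1 - |S|/L)^k \le (1 - 1/L)^{k|S|} = \prod_{j \in S}\Pr[\text{bin } j \text{ empty}]$, which is Bernoulli's inequality $(1 - |S|/L) \le (1 - 1/L)^{|S|}$ raised to the $k$-th power. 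Given this, Hoeffding's lemma applied to each $Y_j \in [0,1]$ yields $\Ex[e^{s(N_0 - \Ex N_0)}] \le e^{L s^2/8}$, and optimizing over $s$ gives $\Pr[N_0 \ge \Ex[N_0] + t] \le e^{-2 t^2/L}$. With $t = \sqrt{L(\lambda + \ln 2)/2}$ this equals $e^{-(\lambda + \ln 2)} = \tfrac12 e^{-\lambda} \le e^{-\lambda}$, and combined with $\Ex[N_0] + t \le \delta L$ we conclude $\Pr[N_0 \ge \delta L] \le e^{-\lambda}$, as required. (An alternative to the Bernoulli-inequality computation is to invoke negative association of the empty-bin indicators and quote a standard Chernoff bound, but the route above is self-contained.)
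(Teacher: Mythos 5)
Your proposal is correct, and for the interesting case it takes a genuinely different route from the paper. The paper splits into the same two cases: for $k\ge L(\ln L+\lambda)$ it uses a union bound over bins (your Markov/first-moment step is the same argument), but for the second branch it \emph{Poissonizes}: it replaces the bin loads by i.i.d.\ $\mathrm{Pois}(k/L)$ variables, applies Hoeffding to the now-independent empty-bin indicators to get $\Pr[W>\delta L]\le e^{-2L(\delta-e^{-k/L})^2}\le e^{-\lambda-\ln 2}$, and then pays a factor of $2$ to transfer back to the exact balls-in-bins distribution via the standard Poisson-approximation lemma for events monotone in the number of balls (citing Mitzenmacher--Upfal); the $\ln 2$ built into $k^*$ exists precisely to absorb that factor. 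You instead work with the true occupancy distribution throughout, proving the MGF submultiplicativity $\Ex[e^{sN_0}]\le\prod_j\Ex[e^{sY_j}]$ directly from $(1-|S|/L)^k\le(1-1/L)^{k|S|}$ (Bernoulli's inequality), i.e.\ you establish the negative-correlation property of the empty-bin indicators by hand and then run the usual Hoeffding/Chernoff optimization, landing on $e^{-2t^2/L}=\tfrac12 e^{-\lambda}$ with $t=\sqrt{L(\lambda+\ln 2)/2}$ and $\Ex[N_0]+t\le\delta L$. What your route buys: it is self-contained (no Poissonization lemma, no monotonicity argument), and it does not actually need the $\ln 2$ slack in $k^*$ --- it just converts it into the harmless extra factor $\tfrac12$; what the paper's route buys is brevity by quoting a textbook result. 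One cosmetic remark: for the literal reading ``fewer than $\delta L$ bins are empty'' the statement is vacuous at $\delta=0$; the paper's proof (and effectively yours) bounds the event that \emph{more than} $\delta L$ bins are empty, which is the intended reading and is what your bound $\Pr[N_0\ge 1]\le e^{-\lambda}$ delivers in that corner case, so this is an issue with the statement rather than with your argument.
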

\begin{proof}
    \newcommand{\Poiss}{\mathrm{Pois}}
    If $k \ge L(\ln L +\lambda)$ balls are thrown into $L$ bins, all bins are occupied except with probability at most $L(1 - \frac 1 L)^{L(\ln L + \lambda)} < Le^{-(\ln L + \lambda)} = e^{-\lambda}$. In this case, 0 bins are empty, and the claim holds. 
    
    Now suppose $L(\ln L +\lambda) > k\ge -L\ln\left({\delta - \sqrt{\frac{\lambda + \ln 2}{2L}}} \right)$. Then $\delta > \sqrt{\frac{\lambda + \ln 2}{2L}} > 0$.
    The analysis uses the Poisson approximation to balls and bins.
    Let $X = (X_1,\dots,X_L)$ be a multinomial random variable over $\mathbb{Z}_{\geq 0}^L$ where each $X_i$ denotes the number of balls in bin $i$.
    Let $Y = (Y_1, \dots, Y_L)$ where each $Y_i \sim \Poiss(k/L)$ i.i.d.\ is Poisson with mean $k / L$.
    Let $E = \{x \in \mathbb{N}_0^L : \sum_i \1(x_i = 0) > \delta L\}$ be the event that more than $\delta L$ bins are empty.
    
    Let $W_i = \1\{Y_i = 0\}$ and $W = \sum_{i=1}^L W_i$. $\Ex[W] = Le^{-k/L}$.
    Applying a Hoeffding bound,
    \begin{align*}
        \Pr_Y[E] 
        &= \Pr[W - \Ex[W] > L(\delta - e^{-k/L})]\\
        &\leq 
        \exp\Bigl(-2L(\delta-e^{-k/L})^2\Bigr)\\
        &\le \exp(-\lambda - \ln 2)
    \end{align*}
    Where the last inequality comes from our choice of $k$. Observe that the above requires $\delta - e^{-k/L} > 0$, which holds because $k> L \ln(1/\delta)$.
    
    The Poisson approximation gives $\Pr_X[E] \le 2\Pr_Y[E]$ for any event $E$ that is monotically decreasing with the number of balls $k$ \cite[Corollary 5.11]{MU05}. Throwing more balls only decreases the probability that more than $\delta L$ are empty. Hence $\Pr_X[E] \le e^{-\lambda}$.
\end{proof}

\section{Discussion of sub-uniform recovery}\label{ap:sub-uniform}
Our main $L$-bit watermarking scheme allows $\delta\cdot L$ bits of the embedded message to be erased. It is natural to hope that the erased bits are distributed uniformly at random.
But an adversary might want to erase some bits more than others. For example, the higher-order bits if the message is a timestamp.

Can we 
ensure that the adversary cannot influence which indices of the message are 
erased? In the proof of \lemref{lem-msg-robust}, we transition to a game 
(Hybrid 3) where the indices are erased uniformly at random. Perhaps 
surprisingly, however, this does not imply that our scheme extracts a message 
with uniformly random erasures.\footnote{
    The implication does hold if the adversary only ever queries a single message.
}
We present an informal argument to provide intuition as to why this implication 
cannot hold.

Suppose the adversary is using our $L$-bit scheme and can choose which messages 
are embedded into the text. Then they could generate watermarked outputs 
$T_0 \gets \Wat_\sk(m_0, \prompt_0)$ and $T_1 \gets \Wat_\sk(m_1, \prompt_1)$, 
where $m_0 = 0^L$ and $m_1 = 0\|1^{L-1}$. Suppose they edit $T_0$ and $T_1$, editing 
blocks uniformly at random, ultimately outputting some $\hT$ that contains enough 
of $T_0$ and $T_1$ such that every index in $m_0$ and $m_1$ is 
embedded once in expectation. In this case, there would likely be two blocks 
in $\hT$ generated using $k_{1, 0}$, none generated using $k_{1,1}$, and one 
generated using $k_{i,b}$ for every other $i \in [L], b \in \bits$. Suppose our 
$\Extract$ algorithm is better at extracting when two blocks generated with the 
same key are present than it is when only one block is present. It would then be 
more likely that $\Extract$ recovers the first bit of the embedded message than 
any other bit.

Intuitively, this may not seem like an issue. In fact, by making it easier to 
recover certain bits, the adversary appears to be helping the watermarker. But this
is not strictly better than having uniformly random erasures. To see why, we can 
model the adversary's behavior by first erasing bits uniformly at random and then 
allowing the adversary to \emph{choose} which bits become ``unerased.'' We call
the resulting distribution of the indices of erased bits ``sub-uniform.'' A scheme 
that is secure under uniformly random erasures is not necessarily secure against 
a sub-uniform adversary. For instance, the Tardos fingerprinting code \cite{ACM:Tar08} is 
secure against uniform erasures. If the watermarking scheme is embedding codewords 
from the Tardos code, the adversary may be able to reveal bits that make it 
harder for them to be traced.

To be clear, this adversary is exceptionally weak. If a fingerprinting code is 
robust to uniform erasures and has the (informal) property that revealing more bits 
always improves the chance of tracing to a guilty party, then it will also be robust 
to sub-uniform erasures. This is an intuitive property for a fingerprinting code to 
have, although not all satisfy it (and most do not prove it).

Before we go on to prove any results, we formally define sub-uniform distributions
in \defref{def-sub-uniform}. Informally, we can compare the definition to the above
game as follows. First, we choose a uniformly random subset $Y$, then an adversary
can pick any $X\subseteq Y$. This corresponds exactly to  erasing uniformly random
indices of a message (those in $Y$) and then allowing an adversary to
unerase indices of its choice (those in $Y\setminus X$). 
The indices that remain are those in $X$.

\begin{definition}\label{def-sub-uniform}
    For a universe $\calU$, integer $0 \le s \le |\calU|$, and random
    variable of a uniformly random subset of size $s$, $U_{s}\subseteq \calU$,
    we say a random variable $V$ supported on $2^{\calU}$ is \emph{$s$-sub-uniform}
    if there exists a %
    distribution (coupling) $W = (X,Y)$ over $2^{\calU} \times 2^{\calU}$,
    with marginal distributions $X=V$ and $Y=U_s$, such that $X\subseteq Y$ always.
\end{definition}

\lemref{lem-sub-uniform} formalizes the type of distribution that our erasures 
follow. So long as fingerprinting codes are robust with respect to sub-uniform 
erasures, they can be used in a black-box way with our $L$-bit watermarking 
construction in \figref{fig-embedding} to create multi-user watermarks.

\begin{lemma}\label{lem-sub-uniform}
    Suppose $\wat'$ is a block-by-block zero-bit watermarking scheme that is undetectable,
    sound, and $R_1$-robustly detectable. Let $\Msg = (\MsgSetup, \Wat, \Extract)$
    be the $L$-bit watermarking scheme from \figref{fig-embedding} using $\wat'$.

    Let $I_\bot(\hat{m}) = \{i : i \in [L], \hat{m}[i]=\bot\}$ and $0\le \delta < 1$.
    Then, for all efficient $\A$, the set $I_\bot(\hat{m})$ is computationally indistinguishable
    from a $\lfloor \delta L \rfloor$-sub-uniform random variable
    in the probability experiment defined by
    \begin{itemize}
        \item $\sk \getsr \MsgSetup(1^{\lambda})$
        \item $\hat{T} \gets \A^{\Wat_\sk(\cdot,\cdot)}(1^\lambda)$
        \item $\hat{\msg} \gets \Extract_{\sk}(\hat{T})$
    \end{itemize}
    and conditioned on $R_k$ being satisfied.
\end{lemma}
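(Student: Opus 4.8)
The plan is to reuse the three‑hybrid argument from the proof of \lemref{lem-msg-robust} (taking $k \ge k^*(L,\delta)$ implicitly, as in \thmref{thm-undetectable}) and then to add a symmetry observation plus a ``distributional'' rather than merely ``counting'' use of \lemref{lm-bnb}. First I would move from the real experiment to Hybrid~3 of \figref{fig-msg-hybrids} exactly as there: soundness of $\wat'$ handles zeroing out keys never used by $\Wat$ (if any); adaptive $R_1$‑robustness of $\wat'$ handles replacing each $\Detect'_{k_{i,b}}$ by $R_1(\lambda,\calQ_{i,b},\calR_{i,b},\cdot)$; and undetectability of $\wat'$, via the usual key‑by‑key hybrid over the $2L$ keys, handles replacing $\Wat'$ by $\GenModel$. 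As in \lemref{lem-msg-robust}, the $\Detect'\to R_1$ step is one‑sided: it can only enlarge the erased set $I_\bot(\hat m)$. So it suffices to establish sub‑uniformity of $I_\bot(\hat m)$ in Hybrid~3 and then to push it back through the hybrids, using that a subset of an $s$‑sub‑uniform variable is again $s$‑sub‑uniform together with the computational indistinguishability of the Hybrid‑3 and real distributions of $I_\bot(\hat m)$.

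Next, in Hybrid~3 the key structural fact is that the indices $i$ drawn inside $\widetilde{\Encode}$ are information‑theoretically independent of the adversary's entire view: each generated block is an output of $\GenModel$ produced without reference to $i$, and $i$ only decides under which pair $(\calQ_{i,b},\calR_{i,b})$ that block's generation is recorded. Condition on $R_k$. By definition there are $k'\ge k$ distinct blocks $\beta_1,\dots,\beta_{k'}\in\bigcup_j\Blocks(T_j;Q_j)$ each $\bumpeq$‑approximated by a substring of $\hT$; whether a block has this property depends only on its tokens and on $\hT$, hence is independent of the index draws. Re‑parsing the returned generations shows $\Blocks(T_j;Q_j)$ contains exactly one block per loop iteration, so the associated indices $i_{\beta_1},\dots,i_{\beta_{k'}}$ are i.i.d.\ uniform on $[L]$ and independent of $\{\beta_1,\dots,\beta_{k'}\}$. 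Since $R_1(\lambda,\calQ_{i_{\beta_\ell},b},\calR_{i_{\beta_\ell},b},\hT)=1$ for the relevant bit $b$, we get $\hat m[i_{\beta_\ell}]\ne\bot$, hence $I_\bot(\hat m)\subseteq[L]\setminus\{i_{\beta_1},\dots,i_{\beta_{k'}}\}$, which is precisely the set of empty bins after $k'\ge k\ge k^*(L,\delta)$ uniform balls are thrown into $L$ bins. By \lemref{lm-bnb}, $|I_\bot(\hat m)|\le\lfloor\delta L\rfloor$ except with probability $e^{-\lambda}$.

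Then I would upgrade this bound on the \emph{size} of $I_\bot(\hat m)$ to the sub‑uniformity claim via exchangeability. Relabelling $[L]$ by any permutation leaves the Hybrid‑3 experiment invariant --- the adversary's view is independent of the index draws --- so $I_\bot(\hat m)$ is an exchangeable random subset and the event $\{|I_\bot(\hat m)|\le\lfloor\delta L\rfloor\}$ is permutation‑invariant. Let $s=\lfloor\delta L\rfloor$ and define $V$ to equal $I_\bot(\hat m)$ on that event and $\emptyset$ otherwise. Conditioned on $|V|=j$, exchangeability makes $V$ uniform over size‑$j$ subsets; sampling $s-j$ further elements uniformly from $[L]\setminus V$ yields a set $Y\supseteq V$ that, by the standard ``pick $j$ then pick $s-j$ from the rest equals pick $s$'' identity, is uniform over size‑$s$ subsets regardless of $j$. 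Hence $V$ is $s$‑sub‑uniform in the sense of \defref{def-sub-uniform}, and $V=I_\bot(\hat m)$ in Hybrid~3 except with negligible probability; composing with the hybrid transitions, $V$ is computationally indistinguishable from $I_\bot(\hat m)$ in the real experiment conditioned on $R_k$.

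The step I expect to be the main obstacle is making the last composition fully rigorous: the $\Detect'\to R_1$ hybrid is genuinely one‑sided, so across it one only has a coupling with $I_\bot^{(1)}\subseteq I_\bot^{(2)}$ rather than statistical closeness, and exchangeability provably fails in the real experiment (the message bit at coordinate $i$ selects which of $k_{i,0},k_{i,1}$ is used there, so permuting $[L]$ changes which message the adversary is effectively querying). The symmetry argument must therefore be run entirely inside Hybrid~3, where the index is decoupled from the generation, and the conclusion transferred back using monotonicity of sub‑uniformity under subsets --- noting that for the intended application, composition with an erasure‑robust fingerprinting code, a subset of the erased set being sub‑uniform is in any case no worse than the erased set itself being sub‑uniform. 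A secondary point to check is that the multi‑key hybrid for undetectability keeps $I_\bot(\hat m)$ efficiently computable by the reduction, which it is, since $R_1$ needs only the recorded transcripts, not the keys.
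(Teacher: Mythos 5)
Your proposal is correct and follows essentially the same route as the paper's own (outline) proof: run the hybrids of \lemref{lem-msg-robust}, observe that in Hybrid~3 the erased indices are decoupled from the view so that (conditioned on $R_k$ and via \lemref{lm-bnb}) the erased set is an exchangeable subset of size at most $\lfloor\delta L\rfloor$ which, padded with uniformly random extra erasures, is exactly a uniform size-$\lfloor\delta L\rfloor$ set, and then use the one-sided containment from the real game (soundness plus $R_1\Rightarrow\Detect'$) to conclude sub-uniformity. The ``obstacle'' you flag --- that the $\Detect'\to R_1$ step is only one-sided, so the symmetry argument must live in Hybrid~3 and be transferred back through a coupling/subset relation rather than via indistinguishability of the erased sets themselves --- is precisely how the paper handles it, via the joint distribution $W=(I_\bot,I^3_\bot)$.
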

\begin{proof}[Proof outline]
The proof follows via the techniques used in the proof of \lemref{lem-msg-robust}.
Notice that the code of Hybrid 3 will return a message with uniformly randomly 
distributed erasures. If we consider running the same adversary with $\wat$ in 
Hybrid 3 and in the original game, then we can construct a distribution $W$ based
on this adversary. A slight modification of
the distribution of erasures in Hybrid 3 is
indistinguishable from a uniformly random subset of size $\lfloor \delta L \rfloor$.
Specifically, we can add uniformly random erasures to $\hat{\msg}$ until it has exactly
$\lfloor \delta L \rfloor$ entries set to $\bot$. Call the erasures in this
distribution $I^3_\bot(\hat{\msg})$.
Based on the arguments in Hybrids 1 and 2, we know that the original game will only 
have strictly fewer $\bot$ entries than in Hybrid 3. So the joint distribution
$W=(I_\bot(\hat{\msg}), I^3_\bot(\hat{m}))$, is indistinguishable from a
$\lfloor \delta L \rfloor$-sub-uniform joint distribution.
\end{proof}

\fi

\end{document}